\newcommand{\opnorm}{\@ifstar\@opnorms\@opnorm}
\newcommand{\@opnorms}[1]{%
  \left|\mkern-1.5mu\left|\mkern-1.5mu\left|
   #1
  \right|\mkern-1.5mu\right|\mkern-1.5mu\right|
}
\newcommand{\@opnorm}[2][]{%
  \mathopen{#1|\mkern-1.5mu#1|\mkern-1.5mu#1|}
  #2
  \mathclose{#1|\mkern-1.5mu#1|\mkern-1.5mu#1|}
}
\definecolor{codegreen}{rgb}{0,0.6,0}
\definecolor{codegray}{rgb}{0.5,0.5,0.5}
\definecolor{codepurple}{rgb}{0.58,0,0.82}
\definecolor{backcolour}{rgb}{0.95,0.95,0.92}
\lstdefinestyle{mystyle}{
    backgroundcolor=\color{backcolour},   
    commentstyle=\color{codegreen},
    keywordstyle=\color{magenta},
    numberstyle=\tiny\color{codegray},
    stringstyle=\color{codepurple},
    basicstyle=\ttfamily\footnotesize,
    breakatwhitespace=false,         
    breaklines=true,                 
    captionpos=b,                    
    keepspaces=true,                 
    numbers=left,                    
    numbersep=5pt,                  
    showspaces=false,                
    showstringspaces=false,
    showtabs=false,                  
    tabsize=2
}
\newtheorem*{theo*}{Theorem}
\newtheorem{theo}{Theorem}[section]
\newtheorem{corr}{Corollary}[theo]
\definecolor{darkblue} {rgb}{0,0,0.5}
\definecolor{darkgreen}{rgb}{0,0.5,0}
\newcommand{\R}{\mathbb{R}}
\newcommand{\Z}{\mathbb{Z}}
\newcommand{\N}{\mathbb{N}}
\newcommand{\C}{\mathbb{C}}
\DeclareMathOperator{\E}{\mathbb{E}}
\newcommand{\Lag}{\mathcal{L}}
\newcommand{\Hil}{\mathcal{H}}
\DeclareMathOperator{\Prob}{\mathbb{P}}
\DeclareMathOperator{\Supp}{\mathrm{supp}}
\DeclareMathOperator{\ind}{\mathds{1}}
\DeclareMathOperator{\pr}{\mathrm{pr}}
\DeclarePairedDelimiter{\set}{\lbrace}{\rbrace}
\DeclarePairedDelimiter{\of}{\lparen}{\rparen}
\newcommand{\glob}{\mathrm{global}}
\newcommand{\local}{\mathrm{local}}
\newcommand{\fix}{\mathrm{fix}}
\newcommand{\Langle}{\Big\langle}
\newcommand{\Rangle}{\Big\rangle}
\newcommand{\SYK}{\mathrm{SYK}}
\newcommand{\SSYK}{\mathrm{SSYK}}
\newcommand{\subalign}[1]{%
  \vcenter{%
  
    \Let@ \restore@math@cr \default@tag
    \baselineskip\fontdimen10 \scriptfont\tw@
    \advance\baselineskip\fontdimen12 \scriptfont\tw@
    \lineskip\thr@@\fontdimen8 \scriptfont\thr@@
    \lineskiplimit\lineskip
    \ialign{\hfil$\m@th\scriptstyle##$&$\m@th\scriptstyle{}##$\hfil\crcr
      #1\crcr
    }%
  }%
}
\begin{document}
\title{Trotter error and gate complexity of the SYK and sparse SYK models}

\author{Yiyuan Chen}
\affiliation{Korteweg-de Vries Institute for Mathematics, University of Amsterdam, The Netherlands}

\author{Jonas Helsen}
\affiliation{QuSoft, Amsterdam, The Netherlands}
\affiliation{CWI, Amsterdam, The Netherlands}

\author{Maris Ozols}
\affiliation{Institute for Logic, Language, and Computation, University of Amsterdam, The Netherlands}
\affiliation{Korteweg-de Vries Institute for Mathematics, University of Amsterdam, The Netherlands}
\affiliation{QuSoft, Amsterdam, The Netherlands}

\begin{abstract}
The Sachdev--Ye--Kitaev (SYK) model is a prominent model of strongly interacting fermions that serves as a toy model of quantum gravity and black hole physics. In this work, we study the Trotter error and gate complexity of the quantum simulation of the SYK model using Lie--Trotter--Suzuki formulas. Building on recent results by Chen and Brand\~{a}o \cite{Anthony} -- in particular their uniform smoothing technique for random matrix polynomials -- we derive bounds on the first- and higher-order Trotter error of the SYK model, and subsequently find near-optimal gate complexities for simulating these models using Lie--Trotter--Suzuki formulas. For the $k$-local SYK model on $n$ Majorana fermions, at time $t$, our gate complexity estimates for the first-order Lie--Trotter--Suzuki formula scales with $\tilde{\mathcal{O}}(n^{k+\frac{5}{2}}t^2)$ for even $k$ and $\tilde{\mathcal{O}}(n^{k+3}t^2)$ for odd $k$, and the gate complexity of simulations using higher-order formulas scales with $\tilde{\mathcal{O}}(n^{k+\frac{1}{2}}t)$ for even $k$ and $\tilde{\mathcal{O}}(n^{k+1}t)$ for odd $k$. Given that the SYK model has $\Theta(n^k)$ terms, these estimates are close to optimal. These gate complexities can be further improved upon in the context of simulating the time evolution of an arbitrary fixed input state $\ket{\psi}$, leading to a $\mathcal{O}(n^2)$-reduction in gate complexity for first-order formulas and $\mathcal{O}(\sqrt{n})$-reduction for higher-order formulas.

We also apply our techniques to the sparse SYK model, which is a simplified variant of the SYK model obtained by deleting all but a $\Theta(n)$ fraction of the terms in a uniformly i.i.d.\ manner. We find the average (over the random term removal) gate complexity for simulating this model using higher-order formulas scales with $\tilde{\mathcal{O}}(n^{1+\frac{1}{2}} t)$ for even $k$ and $\tilde{\mathcal{O}}(n^{2} t)$ for odd $k$. Similar to the full SYK model, we obtain a $\mathcal{O}(\sqrt{n})$-reduction simulating the time evolution of an arbitrary fixed input state $\ket{\psi}$.

Our results highlight the potential of Lie--Trotter--Suzuki formulas for efficiently simulating the SYK and sparse SYK models, and our analytical methods can be naturally extended to other Gaussian random Hamiltonians.
\end{abstract}

\maketitle

\tableofcontents 

\parskip \baselineskip
\def\svmybf#1{\rotatebox{90}{\stretchto{\{}{#1}}}
\def\svnobf#1{}
\def\rlwd{.5pt}
\newcommand\notate[4][B]{%
  \if B#1\let\myupbracefill\svmybf\else\let\myupbracefill\svnobf\fi%
  \def\useanchorwidth{T}%
  \setbox0=\hbox{$\displaystyle#2$}%
  \def\stackalignment{c}\stackunder[-6pt]{%
    \def\stackalignment{c}\stackunder[-1.5pt]{%
      \stackunder[2pt]{\strut $\displaystyle#2$}{\myupbracefill{\wd0}}}{%
    \rule{\rlwd}{#3\baselineskip}}}{%
  \strut\kern9pt$\rightarrow$\smash{\rlap{$~\displaystyle#4$}}}%
}

\pagebreak

\section{Introduction}
The \emph{Sachdev--Ye--Kitaev} (SYK) model is a randomized model of $n$ strongly interacting fermions that has recently gained significant interest as a toy model for holography \cite{kitaev1, kitaev2} in the field of quantum gravity. Notably, it can capture certain characteristics of black holes such as Hawking radiation \cite{hawking_radiation}.  Although this model is solvable in certain limits \cite{IntroSYK, Maldacena_2016, Jian_2017}, its behavior at finite $n$ remains nontrivial. Simulating the SYK model on a quantum computer would help us gain insights into its properties~\cite{jafferis2022traversable}. \emph{Out-of-time-order correlators} (OTOC), which are an important measurable quantity that serve as a measure for quantum chaos \cite{Hashimoto_2017, Garc_a_Mata_2023}, can be measured for the SYK model at finite $n$ on a quantum computer \cite{Garc_a_lvarez_2017}.  Such measurements have already been performed for small numbers of $n$ on noisy superconducting quantum computers~\cite{PhysRevD.109.105002}.

The main goal of this work is to study the quantum simulation of the SYK model using \emph{Lie--Trotter--Suzuki} formulas, also referred to as \emph{product formulas}. These formulas approximate the time evolution of an arbitrary Hamiltonian, and the error of these approximations is referred to as \emph{Trotter error}. Recently Chen and Brand\~{a}o \cite{Anthony} developed sophisticated tools for bounding the Trotter error of \emph{random Hamiltonians}. Using their toolkit, adapted to fermionic Hamiltonians, we derive error bounds for the first-order and higher-order Trotter errors of the SYK model. Based on these error bounds, we subsequently estimate the gate complexity of the quantum simulation of the SYK model using product formulas.

Additionally, we study quantum simulation of the \emph{sparse} SYK model proposed by Xu, Susskind, Su, and Swingle \cite{susskind}. This is a modified version of the SYK model that requires fewer resources to simulate, but still captures several essential properties. In particular, it exhibits a maximally chaotic gravitational sector at low temperature \cite{susskind,preskill}. Moreover, \cite{Anschuetz_2025} demonstrates that there exists a certain correspondence between the sparse and dense SYK models in terms of limiting maximum energy, and similar correspondence can be found across spin glasses \cite{Crawford_2007,Anschuetz_2025}. The sparse SYK model can be obtained by probabilistically removing each local interaction in the SYK model with a given probability, such that on average there are only $\mathcal{O}(n)$ terms left. Due to the probabilistic nature of the removal, we study the Trotter error (and the corresponding gate complexity) in terms of average-case statements.

\subsection{Previous work}

The notion of Trotter error dates back to the late 1800s (in the context of Lie algebra theory) and has been widely studied in the context of quantum computation \cite{Childs}. In recent years, significant progress has been made in bounding the Trotter error \cite{Anthony,Childs,Berry_2006, Somma_2016, PhysRevLett.123.050503}. In 2019, Childs et al.\ showed that the $l$-th order Trotter error for any Hamiltonian in the form $H=\sum_{i=1}^{\Gamma}H_{i}$ scales with the sum of (spectral) norms of nested commutators \cite[Theorem~6]{Childs}:\footnote{Here, $e^{iHt}$ denotes the time-evolution operator where we hide the minus-sign in the Hamiltonian. $S_l(t)$ represents the Trotterized unitary of order $l$.}
\begin{equation}
    \norm{e^{iHt} - S_l(t)} = \mathcal{O}\left(\sum_{i_1,\dotsc,i_{l+1}=1}^{\Gamma}\norm{[H_{i_{l+1}},\cdots [H_{i_2},H_{i_1}]\cdots ]}\; t^{l+1}\right).
\end{equation}
With this, they showed that the gate complexity $G$ of simulating the time-evolution operator using higher-order product formulas scales as
\begin{equation}
    G \approx \Gamma \norm{H}_{\glob,1}t,
\end{equation}
where $\Gamma$ is the number of terms in $H=\sum_{i=1}^{\Gamma}H_{\gamma}$ \cite{Childs}, and the global 1-norm is defined as 
\begin{equation}
    \norm{H}_{\glob,1} = \sum_{i=1}^{\Gamma}\norm{H_{i}}.
\end{equation}

In 2023, Chen and Brand\~{a}o developed an improved bound on the Trotter error of \emph{random} $k$-local qubit Hamiltonians \cite{Anthony}. They studied two cases extensively. First, for a random qubit Hamiltonian $H = \sum_{i=1}^{\Gamma}H_{i}$ such that each term is independent, bounded, and has zero mean, i.e.
\begin{equation}
    \E(H_{i}) = 0, \quad \text{and} \quad \norm{H_{i}} \leq b_{i} < \infty,
\end{equation}
the gate complexity $G$ of implementing the time-evolution operator using first-order product formula scales as 
\begin{equation}
    G\approx \Gamma \norm{H}_{\glob,2}\norm{H}_{\local,2}t^2,
\end{equation}
where the global and local 2-norms are defined as 
\begin{equation}
    \norm{H}_{\glob,2} = \sqrt{\sum_{i=1}^{\Gamma} b_{i}^2}\qquad\text{ and }\qquad\norm{H}_{\local,2} = \max_{\text{site $s$}}\sqrt{\sum_{i: H_i\text{ acts on site }s}b_{i}^2}.
\end{equation}
This result applies even when the Hamiltonian is non-local, but does require each term $H_{i}$ to be bounded.

The second case applies to random $k$-local qubit Hamiltonians with Gaussian coefficients
\begin{equation}
    H = \sum_{i=1}^{\Gamma} g_{i}K_{i}
\end{equation}
where $g_{i}$'s are i.i.d.\@ standard Gaussian coefficients and $K_{i}$ deterministic matrices bounded by $\norm{K_{i}} \leq b_{i}$. Chen and Brand\~{a}o found that the gate complexity of implementing the time-evolution operator using higher-order product formulas in this case enjoys an even better scaling:
\begin{equation}
    G\approx \Gamma \norm{H}_{\local,2}t.
\end{equation}
Contrary to the previous case, this result does not require each local term $g_{i}
K_{i}$ to be bounded, as the Gaussian coefficient $g_{i}$ is an unbounded random variable. However, this result does require the Hamiltonian to be local. Both cases provide a mean square root improvement over the results of Childs et al.~\cite{Anthony}. 

However, Chen and Brand\~{a}o's results \cite{Anthony} are not directly applicable to the SYK model, because their theorems are derived in the context of local qubit Hamiltonians while the SYK models is a fermionic Hamiltonian. In particular, the Majorana fermions obey a different algebra than qubits. Contrary to qubits, Majorana fermions anti-commute when they do not collide. Mapping Majorana fermions to qubit operators (through e.g.\ the Jordan--Wigner transform) incurs inevitable locality overheads that make the bounds in~\cite{Anthony} suboptimal. Still, \cite{Anthony} contains useful theorems and tools, such as uniform smoothness and Rademacher expansions, which apply in a broader context for operators as opposed to scalars. Using these results directly in the fermionic setting we can study the Trotter error of the SYK and the sparse SYK model. 

There are other works studying the Trotter error of the SYK model. García-Álvarez et al.~\cite{Garc_a_lvarez_2017} gave a rough estimate of the gate complexity for simulating the SYK model with $k=4$ using the first-order product formula. Their result estimated that the gate complexity scales as $\mathcal{O}(n^{10}t^2)$. Asaduzzaman et al.~\cite{PhysRevD.109.105002} found an improved gate complexity of $\mathcal{O}(n^5 t^2)$ by using a graph-coloring algorithm to minimize the number of commuting clusters of terms in the qubitized Hamiltonian, although this result seems to be formulated under a norm assumption we could not verify. Both of these results only consider the first-order product formula specifically applied to the SYK model with $k=4$. For the SYK model with general $k$, Xu, Susskind, Su, and Swingle \cite{susskind} give a gate complexity estimate of $\mathcal{O}(n^{2k-1 + (k+1)/l}\log(n)t^{1+1/l})$ for simulations using higher-order product formulas (where $l$ is the degree of the Trotter formula). They have also given a gate complexity estimate of $\mathcal{O}(n^{1+1/l}\log(n)t^{1+1/l})$ for simulating the sparse SYK model using higher-order product formulas. All the works above evaluate the complexity scaling under an average case statement, i.e.\ they compute the gate complexity for which the average Trotter error of the corresponding simulation is sufficiently small. In this work, we present an improved result in gate complexity using higher-order product formulas for both the SYK model and the sparse SYK model with general $k$, formulated under a more precise probability statement to characterize sufficiently small Trotter error.

Product formulas are not the only way to simulate the SYK model. For instance,  asymmetric qubitization allows one to reach a gate complexity of $\mathcal{O}(n^{7/2}t + n^{5/2}t\cdot \mathrm{polylog}(n/\epsilon))$ for simulating\footnote{Here $\epsilon$ denotes the accuracy.} the SYK model with $k=4$ \cite{Babbush}, which is lower than the theoretical limit of product formulas. However, unlike product formulas, this algorithm is less suitable for current noisy quantum devices \cite{Anthony, PhysRevD.109.105002}. 

\subsection{Summary of results for the SYK model}
\subsubsection{Error bounds}
We present a summary of our main results here. First, we have obtained bounds for the first- and higher-order Trotter error of the SYK model, established in Theorems \ref{thm:FOE} and \ref{thm:SYKRPG} respectively (we state simplified versions below).
In the statements below, $n$ denotes the number of Majorana fermions, $k$ is the locality of the SYK model, $p$ refers to the Schatten $p$-norm, while $r$ and $l$ denote the number of rounds (also referred to as \emph{Trotter number}) and the order of the Lie--Trotter--Suzuki formula, respectively.

\begin{theo*}[Simplified version of Theorem \ref{thm:FOE}]
    The first-order Trotter error of the SYK model for fixed locality $k$ and $p\geq 2$ is bounded by 
    \begin{equation}
        \frac{\opnorm{e^{iHt} - S_1(t/r)^r}_{*,p}}{\opnorm{I}_{*,p}} \leq \Delta^{\mathrm{dense}}_1,
    \end{equation}
    where 
    \begin{equation}
        \Delta^{\mathrm{dense}}_1 = 
        \begin{cases}
           \mathcal{O}\left( p^2 \sqrt{n}t^2\left[\frac{1}{r} 
    + \frac{t}{r^2} \right]\right) & \text{if $k$ is even},\\[12pt]
     \mathcal{O}\left(p^2 n t^2\left[\frac{1}{r} 
    + \sqrt{n} \frac{t}{r^2} \right]\right) & \text{if $k$ is odd.}
        \end{cases}
    \end{equation}
\end{theo*}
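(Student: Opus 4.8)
The plan is to combine the commutator-scaling representation of the Trotter error~\cite{Childs} with the uniform-smoothing and chaos-expansion techniques of~\cite{Anthony}, the new ingredient being the explicit Clifford algebra of Majorana strings in place of qubit-locality bookkeeping (which would be lossy here). Write $H=\sum_{\gamma}H_{\gamma}$ where each $H_{\gamma}$ is a single SYK term $J_{\gamma}\,i^{k/2}\psi_{S_{\gamma}}$ with $|S_{\gamma}|=k$ and the $J_{\gamma}$ i.i.d.\ centered Gaussians of variance $\Theta(n^{1-k})$. Starting from the Duhamel/variation-of-constants identity for $e^{iHt/r}-S_{1}(t/r)$ and composing $r$ rounds via $A^{r}-B^{r}=\sum_{j=0}^{r-1}A^{r-1-j}(A-B)B^{j}$ (with $A=e^{iHt/r}$, $B=S_{1}(t/r)$), the global error becomes a sum, over rounds and over simplices, of the single-step integrand conjugated by the random unitaries $e^{iHs}$ and partial products of $S_{1}$. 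Organizing that integrand by its order in $t/r$, the order-$q$ piece ($q\geq 2$) is $\frac{t^{q}}{r^{q-1}}$ times a degree-$q$ operator built from $q$-fold nested commutators of the $H_{\gamma}$'s; the $q=2$ and $q=3$ pieces will produce the two terms of $\Delta_{1}$, and (for $t\lesssim r$) the remaining terms form a geometric tail absorbed into the $\mathcal{O}$.

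Next, since the normalized Schatten-$p$ norm is two-sided unitarily invariant, all the conjugating unitaries drop out, and the triangle inequality over rounds and over the simplices (together with linearity of expectation) reduces the task to bounding the expected normalized $p$-norms of the single-commutator sum $\sum_{\gamma>\gamma'}[H_{\gamma},H_{\gamma'}]$ and of the corresponding double-commutator sum. Estimating these directly -- rather than term-by-term, which is exactly what costs the square-root factor in~\cite{Childs} -- is where the uniform smoothness of $S_{p}$ and the attendant non-commutative Khintchine/hypercontractivity bounds of~\cite{Anthony} enter; tracking the resulting powers of $p$ through the (operator-valued, degree $\leq 3$) chaos moment estimates is what yields the $p^{2}$.

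The core computation is then combinatorial. For Majorana strings, $[\psi_{S},\psi_{T}]$ vanishes unless $|S\cap T|\not\equiv k\pmod 2$ -- in particular two disjoint $k$-strings commute \emph{precisely when $k$ is even} -- and is otherwise $\pm 2\,\psi_{S\triangle T}$; more generally a $q$-fold nested commutator of SYK terms collapses to one signed Majorana string times a product of couplings. Expanding the commutator sums this way and grouping by the output string gives $\sum_{R}c_{R}\psi_{R}$ with $c_{R}$ a Gaussian chaos in the $J$'s; since the $\psi_{R}$ are orthonormal and pairwise (anti)commuting, non-commutative Khintchine plus hypercontractivity give $\E\opnorm{\sum_{R}c_{R}\psi_{R}}_{*}/\opnorm{I}_{*}\lesssim\mathrm{poly}(p)\sqrt{\sum_{R}\E|c_{R}|^{2}}$, and $\E|c_{R}|^{2}$ is an explicit count of index configurations producing $R$ times a power of the SYK variance. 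Carrying out the count, the decisive point is that for \emph{odd} $k$ the dominant configurations are those in which the strings are as disjoint as possible, so each extra nesting level contributes a full factor $\Theta(n)$ of configurations and multiplies the normalized $2$-norm by $\Theta(\sqrt n)$ -- giving $\Theta(\sqrt n)$ for $H$, $\Theta(n)$ for the single-commutator sum, and $\Theta(n^{3/2})$ for the double-commutator sum -- whereas for \emph{even} $k$ the vanishing of commutators of disjoint strings forces every nonzero nested commutator to involve overlapping strings, capping the multiplicities and pinning the normalized $2$-norm at $\Theta(\sqrt n)$ at every level. Substituting these into the previous step, together with the $t^{2}/r$ and $t^{3}/r^{2}$ prefactors and absorbing the $k$-dependent constants into the $\mathcal{O}$, yields $\Delta_{1}$.

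The hard part is executing the last step. Getting the \emph{$p$-norm} of the operator-valued chaos (not merely its $L^{2}$ norm) with the correct $n$-scaling requires marrying the Gaussian chaos moment inequality to a non-commutative Khintchine estimate adapted to the Clifford structure while keeping the $p$-dependence down to $p^{2}$. More subtly, at the $q=3$ level a sizeable -- for odd $k$, the dominant -- contribution comes from \emph{collision} terms, nested commutators $[H_{a},[H_{a},H_{b}]]$ in which two SYK indices coincide, which reduce to $\pm 4\,\psi_{S_{b}}$ times a near-deterministic sum $\sum_{a}J_{a}^{2}$; it is precisely the parity-dependent cardinality of the admissible set of $a$ -- of order $n^{k}$ for odd $k$ but only $n^{k-1}$ for even $k$, because commuting pairs are excluded -- that produces the extra $\sqrt n$ in the odd case. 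Verifying that this mechanism and its analogues at higher $q$ behave as claimed, and that the subleading output-string sizes are genuinely lower order, is where most of the care is needed.
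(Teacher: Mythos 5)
Your proposal is correct in its conclusions and in the essential combinatorics, but it takes a genuinely different organizational route from the paper. The paper does not go through Duhamel, round composition, and order-by-order expansion at all: it cites a ready-made first-order bound from Chen--Brand\~{a}o (equation~\eqref{eq:fogenerator} in the paper),
\begin{equation}
    \opnorm{\mathcal{E}(t)}_*^2 \leq 2C_p \sum_{i=2}^{\Gamma}\left[4C_pt^2\sum_{j=1}^{i-1}\opnorm{[H_{\gamma(j)}, H_{\gamma(i)}]}_*^2
    +\left(\sum_{j=1}^{i-1}\tfrac{t^2}{2}\opnorm{[H_{\gamma(j)},[H_{\gamma(j)},H_{\gamma(i)}]]}_*\right)^2\right],
\end{equation}
which already packages the $S_p$-uniform-smoothness of the error generator into an exact (non-truncated) two-term estimate with a specific mixed $\ell^2/\ell^1$ structure over pairs $(i,j)$. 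Given this black box, the paper only needs (a) the Majorana anti-commutation lemma to identify which $[H_j,H_i]$ and $[H_j,[H_j,H_i]]$ are nonzero, (b) the count $Q(n,k)$ of anti-commuting partners, and (c) Gaussian moment bounds for $M_1,M_2$, which supply the $p^2$. You instead re-derive from scratch: Duhamel for the per-round error, $A^r-B^r$ telescoping, Taylor expansion in $t/r$, unitary invariance, then an operator-valued Gaussian-chaos/Khintchine estimate applied directly to the aggregated commutator sums $\sum_{\gamma>\gamma'}[H_\gamma,H_{\gamma'}]$ and the nested sums at $q=3$. Your route would in principle allow cancellation in the operator sum rather than the term-by-term bound the paper uses, but (as you implicitly verify) for SYK there is essentially no cancellation, so both give the same $\sigma^2\sqrt{\binom{n}{k}Q(n,k)}$ and $\sigma\sqrt{Q(n,k)}$ scalings and hence the same $\Delta_1$.

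Two points to be careful about. First, your approach produces an infinite series in $t/r$ and you appeal to a ``geometric tail'' for $q\geq 4$ being absorbed; the paper avoids this entirely because its cited inequality is already a closed-form two-term bound for the full error generator, and the only $t$-integration is the elementary $\int_0^{t/r}\opnorm{\mathcal{E}(\tau)}_*\,d\tau$. You would need to actually establish convergence and subleading scaling of that tail, which is an extra step. Second, your remark that at $q=3$ the collision terms $[H_a,[H_a,H_b]]$ are \emph{dominant} for odd $k$ is not quite right: the generic nested commutators $[H_c,[H_a,H_b]]$ with $a,b,c$ distinct produce an $L^2$-chaos contribution of the same order $\sigma^3\sqrt{\binom{n}{k}}\,Q(n,k)$ as the collision block, so neither dominates the other; the final scaling is unchanged, but the narrative should be corrected. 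Finally, the precise polynomial-in-$p$ factor needs tracking: the paper gets $p^2$ cleanly from the two factors of $C_p\approx p$ inside~\eqref{eq:fogenerator} combined with $M_1\lesssim p\sigma^2$; your degree-two chaos plus Khintchine argument must be arranged so as to not exceed that, which is plausible but nontrivial and is precisely the kind of bookkeeping the cited inequality is designed to avoid at first order.
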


\begin{theo*}[Simplified version of Theorem \ref{thm:SYKRPG}]
     The (even) $l$-th order Trotter error of the SYK model for fixed $k$ and $l,p\geq 2$ is bounded by
     \begin{equation}
         \frac{\opnorm{e^{iHt} - S_l(t/r)^r}_{*,p}}{\opnorm{I}_{*,p}} \leq \Delta^{\mathrm{dense}}_l,
     \end{equation}
     where
     \begin{equation}
         \Delta^{\mathrm{dense}}_l = \begin{cases}
             \mathcal{O}\left(\sqrt{p} n t\left( \left[\sqrt{p} \frac{t}{r}\right]^{l}  +  n^{k}\left[\sqrt{p} \frac{t}{r}\right]^{l+1} \right) \right) & \text{if $k$ is even},\\[12pt]
             \mathcal{O}\left(\sqrt{pn} t\left(\left[\sqrt{pn} \frac{t}{r}\right]^{l}  +  n^{k}\left[\sqrt{pn} \frac{t}{r}\right]^{l+1} \right) \right) & \text{if $k$ is odd.}
         \end{cases}
     \end{equation}
\end{theo*}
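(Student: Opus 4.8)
The plan is to treat the SYK Hamiltonian as a Gaussian random matrix polynomial and feed the nested-commutator expansion of its Trotter error through the uniform-smoothness (and Rademacher-expansion) machinery of Chen and Brand\~{a}o \cite{Anthony}. Write $H=\sum_{S}b\,g_{S}\,\psi_{S}$, the sum over the $\binom{n}{k}=\Theta(n^{k})$ size-$k$ index sets $S\subseteq[n]$, with $g_{S}$ i.i.d.\ standard Gaussians, $\psi_{S}$ the associated Majorana monomial (a unitary, so $\norm{\psi_{S}}=1$), and $b=\Theta(n^{-(k-1)/2})$ the SYK standard deviation; this is precisely the Gaussian-coefficient form $\sum_{i}g_{i}K_{i}$ on which the Chen--Brand\~{a}o tools operate. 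First reduce the $r$-round error to a single round: since $e^{iHt}$ and $S_{l}(t/r)$ are unitary, telescoping $e^{iHt}-S_{l}(t/r)^{r}$ and using unitary invariance gives
\begin{equation}
  \frac{\opnorm{e^{iHt}-S_{l}(t/r)^{r}}_{*}}{\opnorm{I}_{*}}\;\le\;r\,\frac{\opnorm{e^{iHt/r}-S_{l}(t/r)}_{*}}{\opnorm{I}_{*}},
\end{equation}
so it suffices to bound the one-segment Trotter error in normalized Schatten $p$-norm and multiply by $r$.

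Next, expand the one-segment error: the Trotter/BCH series (in the conjugated form of \cite{Childs,Anthony}) writes $e^{iHs}-S_{l}(s)$ as an $\mathcal{O}(s^{l+1})$-weighted sum of order-$(l+1)$ nested commutators $[H_{S_{l+1}},\dots[H_{S_{2}},H_{S_{1}}]\dots]$ of the individual SYK terms, plus an $\mathcal{O}(s^{l+2})$-weighted sum of order-$(l+2)$ ones, plus a geometrically suppressed tail. The fermionic algebra collapses every commutator to a single monomial: $\psi_{S}$ and $\psi_{T}$ commute iff $\abs{S\cap T}\equiv k\pmod{2}$, and otherwise $[\psi_{S},\psi_{T}]=2\psi_{S}\psi_{T}=\pm2\,\psi_{S\triangle T}$; iterating, a nonzero order-$m$ nested commutator of SYK terms equals $\pm2^{m-1}b^{m}\,g_{S_{1}}\cdots g_{S_{m}}\,\psi_{S_{1}\triangle\cdots\triangle S_{m}}$ and vanishes unless each successive set anticommutes with the running symmetric difference. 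Hence, up to the $s^{m}$ weights, the one-segment error is a sum of matrix-valued Gaussian chaoses of degrees $l+1$ (leading) and $l+2$ (first correction), whose coefficient tensors $(A_{\boldsymbol S})$ have entries of operator norm $\le 2^{m-1}b^{m}$, supported on the admissible tuples only.

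The core estimate is to bound the normalized Schatten $p$-norm of these chaoses. Chen and Brand\~{a}o's uniform-smoothness inequalities applied to a degree-$d$ matrix Gaussian chaos $\sum_{\boldsymbol S}g_{S_{1}}\cdots g_{S_{d}}A_{\boldsymbol S}$ bound it, up to constants, by $\max_{\pi}p^{d(\pi)/2}M_{\pi}$, where $\pi$ ranges over the ways of pairing up or freezing the $d$ Gaussian indices, $d(\pi)$ is the reduced chaos degree, and $M_{\pi}$ is the normalized Schatten norm of the corresponding partial contraction of $(A_{\boldsymbol S})$. Since every $A_{\boldsymbol S}$ is a scalar times a unitary, each $M_{\pi}$ is purely combinatorial: the full contraction gives $M_{\pi}=2^{m-1}b^{m}\sqrt{\#\{\text{admissible }\boldsymbol S\}}$ (the square-function term), while the diagonal contractions coming from the Hermite structure --- note that repeating a non-consecutive index collapses $S_{i}\triangle\cdots\triangle S_{j}$ to a shorter symmetric difference, e.g.\ $S_{1}\triangle S_{2}\triangle S_{1}=S_{2}$, forcing many commutators onto the \emph{same} monomial --- contribute the sizes of those collapsing classes without the square-root gain. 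Performing these counts with $b=\Theta(n^{-(k-1)/2})$: for even $k$ the running symmetric difference always has even size, so every new set must share an odd (hence nonzero) number of indices with it, pinning all the relevant counts to the powers of $n$ that produce the contributions $\sqrt p\,nt\,[\sqrt p\,t/r]^{l}$ (from the leading, degree-$(l+1)$ chaos, times $r$) and $\sqrt p\,nt\cdot n^{k}[\sqrt p\,t/r]^{l+1}$ (from the degree-$(l+2)$ correction, carrying one extra SYK term and one extra factor $\sqrt p\,t/r$); for odd $k$, a disjoint $k$-monomial anticommutes rather than commutes, lifting those sharing restrictions at half the nesting levels and enlarging every count by exactly the factors of $n$ that replace $\sqrt p$ by $\sqrt{pn}$ throughout. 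Summing the two contributions and absorbing the fixed $k$- and $l$-dependent constants ($2^{l}$, $k!$, $(k-1)!$, the number of Suzuki stages $\Upsilon$) into $\mathcal{O}(\cdot)$ yields the stated $\Delta_{l}$.

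I expect the main obstacle to be controlling all of the partial-contraction norms $M_{\pi}$ simultaneously and extracting exactly this two-term bound rather than something looser. There are $\Theta(2^{l})$ contraction patterns, and one must check that none of them exceeds the claimed bound --- especially the diagonal ones, where repeated indices obstruct the square-root gain and one pays the number of collapsing tuples linearly; getting the exact exponent of $n$ there (not merely an upper bound), and verifying that it is the $n^{k}$-enhanced term and nothing larger, is the delicate point, and it is precisely where the even/odd $k$ dichotomy has to be tracked nesting level by nesting level. Secondary technical points: justifying the conjugated Trotter-series representation and the telescoping inequality in Schatten $p$-norm with the unitary conjugators placed correctly; confirming that the geometrically suppressed tail of the series (degrees $\ge l+3$) is dominated by the two explicit terms; and checking that only the Gaussian-coefficient / Rademacher-expansion part of the Chen--Brand\~{a}o toolkit is invoked, which does not use the qubit-locality assumptions under which \cite{Anthony} is phrased, so its transfer to Majorana fermions is lossless.
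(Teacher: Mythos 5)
Your proposal follows the paper's route essentially step for step: reduce to one round, expand via the conjugated error generator $\mathcal{E}(t)$ into a degree-$(l+1)$ piece plus a degree-$(l+2)$ residual, bound the resulting Gaussian random matrix polynomials by the uniform-smoothness inequality (Theorem~\ref{thm:RPG}), and close the combinatorics via the Majorana anticommutation-parity count that produces $Q(n,k)$ (this is exactly Lemma~\ref{lem:Gw}, which you correctly flag as the delicate point), with the scaling you extract reproducing $\Delta_l$ in both parities. The only small mischaracterization is the tail: the paper does not rely on a geometrically suppressed remainder but writes $\mathcal{E}(t)=\sum_{g=l+1}^{g'-1}\mathcal{E}_g+\mathcal{E}_{\geq g'}$ with $g'=l+2$, so everything of degree $\geq l+2$ is a single exact integral remainder bounded directly by the same chaos argument rather than a separately controlled geometric series.
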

 
In these theorems the norm $\opnorm{A}_{*,p}$ for a random operator $A$ on a given Hilbert space $\Hil$ is either the expected Schatten $p$-norm 
\begin{equation}
    \opnorm{A}_p = \of[\big]{\E(\norm{A}_p^p)}^{\frac{1}{p}},
\end{equation}
or the $L^p$-expected operator norm
\begin{equation}
    \opnorm{A}_{\mathrm{fix},p} = \sup\left\{ \of[\Big]{\E\of[\big]{\norm{A\ket{\psi}}_{l^2}^p}}^{\frac{1}{p}} : \ket{\psi} \in \Hil, \norm{\ket{\psi}}_{l^2} = 1\right\}.
\end{equation}

\subsubsection{Gate complexity}

\begin{table}[!hbt]
\centering
\begin{tabular}{|c|c|c|}
 \hline
 \multirow{2}*{Orders of product formulas} & \multicolumn{2}{c|}{Gate complexity} \\
 \cline{2-3}
     & Even $k$ & Odd $k$ \\
 \hline \rule{0pt}{1.5\normalbaselineskip}
First-order & $n^{k + \frac{5}{2}} t^2$ & $n^{k + 3}t^2$ \\[2ex]
 \hline \rule{0pt}{1.5\normalbaselineskip}
Higher-order & $n^{k + \frac{1}{2}} t$ & $n^{k + 1} t$\\[2ex]
 \hline 
\end{tabular}
\caption{The gate complexity of simulating the SYK model on a quantum computer such that $\norm{e^{iHt} - S_l(t/r)^r} < \epsilon$. We omitted the big-$\mathcal{O}$ notation in the cells. The $\epsilon$-scaling of these bounds go with $\sim \frac{1}{\epsilon}$ for first-order formulas and $\sim \left(\frac{1}{\epsilon}\right)^{\frac{1}{l}}$ for $l$-th order formulas.}
\label{tab:simpleresults}
\end{table}

Based on the above error bounds, we can estimate the corresponding gate complexity of the quantum simulation of the $k$-local SYK model of $n$ Majorana fermions using product formulas. We define the gate complexity as the number of gates required to approximate the time-evolution operator of the SYK model using product formulas such that the Trotter error is sufficiently small. We quantify small Trotter error from two different perspectives. The first is to characterize it by the spectral norm of the difference between the time-evolution operator and the product formula approximation, which we can informally write down as 
\begin{equation}
    \norm{e^{iHt} - S_l(t/r)^r} < \epsilon.
\end{equation}
A precise statement of this is given in equation \eqref{eq:concineq}. Under this characterization of small Trotter error, we get the gate complexities shown in Table~\ref{tab:simpleresults}. Notice the different scaling behaviors for the gate complexities depending on the parity of the locality $k$ of the SYK model, which we indeed expect as the behavior of the SYK model is different for even $k$ and odd $k$.

We find that for the SYK model with $k=4$, the gate complexity using first-order product formula scales as $n^{6.5}t^2$, and the gate complexity using higher-order product formulas scales as $n^{4.5} t$. This offers an improvement of $\mathcal{O}(n^{3.5})$ and $\mathcal{O}(n^{5.5})$ in $n$ respectively compared to the previous result $\mathcal{O}(n^{10}t^2)$ obtained by \cite{Garc_a_lvarez_2017}. Our higher-order result also offers an improvement of $\mathcal{O}(n^{2.5})$ compared to the $\mathcal{O}(n^7 t)$ result obtained by \cite{susskind}\footnote{The original statement is $\mathcal{O}(n^{2k-1 + (k+1)\delta}\log(n)t^{1+\delta})$, but we have dropped $\log(n)$ and taken $\delta$ to be sufficiently small for simplicity.}, and a slight improvement of $\mathcal{O}(\sqrt{n})$ compared to the $\mathcal{O}(n^{5}t^2)$ result obtained by \cite{PhysRevD.109.105002}, although the latter result seems to assume that the Trotter number $r$ is constant in $n$, which in general is not the case when simulating the SYK model. 

Moreover, as simulating the SYK model with product formulas of any order requires at least $\Omega(n^k)$ gates to capture all Hamiltonian terms, our estimated gate complexities $\mathcal{O}(n^{k+\frac{1}{2}}t)$ and $\mathcal{O}(n^{k+1}t)$ for implementing the SYK model using higher-order product formulas with even and odd $k$ respectively, are close to optimal. 

The other way to characterize small Trotter error is by the $l^2$-norm deviation of approximating the time evolution of an arbitrary fixed input state $\ket{\psi}$ compared to its real time evolution, which we can informally write down as 
\begin{equation}
    \norm{(e^{iHt} - S_l(t/r)^r)\ket{\psi}}_{l^2} <\epsilon.
\end{equation}
A precise statement of this is given in equation \eqref{eq:concineqpsi}. Under this characterization of small Trotter error, we get the gate complexities shown in Table~\ref{tab:simpleresults_fixed}. As we can see, simulating the SYK model in this case requires fewer gates, as the gate complexity of using first-order formulas gets reduced by a factor of $\mathcal{O}(n^2)$ and the gate complexity using higher-order formulas gets reduced by a factor of $\mathcal{O}(\sqrt{n})$ compared to Table~\ref{tab:simpleresults}. Moreover, given that simulating the SYK model requires at least $\Omega(n^k)$ gates for any order or product formula, we find that higher-order formulas are optimal for approximating the time evolution of an arbitrary fixed input state $\ket{\psi}$ under the even-$k$ SYK Hamiltonians.

\begin{table}[!t]
\centering
\setlength\tabcolsep{10pt}
\begin{tabular}{|c|c|c|}
 \hline
 \multirow{2}*{Orders of product formulas} & \multicolumn{2}{c|}{Gate complexity} \\
 \cline{2-3}
     & Even $k$ & Odd $k$ \\
 \hline \rule{0pt}{1.5\normalbaselineskip}
First-order & $n^{k + \frac{1}{2}} t^2$ & $n^{k + 1}t^2$ \\[2ex]
 \hline \rule{0pt}{1.5\normalbaselineskip}
Higher-order & $n^{k} t$ & $n^{k + \frac{1}{2}} t$\\[2ex]
 \hline 
\end{tabular}
\caption{The gate complexity of simulating the SYK model on a quantum computer such that for an arbitrary fixed input state $\ket{\psi}$ we have $\norm{(e^{iHt} - S_l(t/r)^r)\ket{\psi}}_{l^2} <\epsilon$ with high probability. We omitted the big-$\mathcal{O}$ notation in the cells. The $\epsilon$-scaling of these bounds go with $\sim \frac{1}{\epsilon}$ for first-order formulas and $\sim \left(\frac{1}{\epsilon}\right)^{\frac{1}{l}}$ for $l$-th order formulas.}
\label{tab:simpleresults_fixed}
\end{table}

\subsection{Summary of results for the sparse SYK model}

We also extend our analysis to the sparse SYK model. The probabilistic removal of terms for obtaining the sparse SYK model can be equivalently achieved by attaching a Bernoulli variable $B_{\bm{i}} \in \{0,1\}$ to each term of the SYK model, such that every $B_{\bm{i}}$ takes the value 1 with a certain probability $p_B$ and the value 0 with probability $1-p_B$. The value of $p_B$ is tuned in such a way that on average, there are $\mathcal{O}(n)$ terms in the sparse SYK model.

Due to the probabilistic removal, the number of remaining terms in the sparse SYK model depends on the specific sample of the Bernoulli variables. Hence, both the Trotter error and the corresponding gate complexity will also depend on the specific sample of the Bernoulli variables. Therefore, we study the sparse SYK model in the following setting. First, we treat the Bernoulli variables as coefficients and compute the Trotter error and the gate complexity as functions depending on these coefficients. Then, by averaging over these Bernoulli coefficients, we end up with average-case statements for both the Trotter error and gate complexity of the sparse SYK model. 

We first state the following theorem bounding the average Trotter error: 
\begin{equation}
    \Langle\opnorm{\mathcal{D}(\bm{B})}_{*,p}\Rangle := \Langle\opnorm{e^{iH_{\mathrm{SSYK}}(\bm{B})t} - S_l(t/r)^r}_{*,p}\Rangle,
\end{equation}
where we use $\langle \cdot \rangle$ to denote averaging over the Bernoulli variables.
\begin{theo*}[Simplified version of Theorem \ref{thm:avgSSYKerror}]
    Consider the sparse SYK model $H_{\mathrm{SSYK}}(\bm{B})$. The $l$-th order average Trotter error~\eqref{eq:avD(B)} is bounded by
    \begin{equation}
        \frac{\Langle\opnorm{\mathcal{D}(\bm{B})}_{*,p}\Rangle }{\opnorm{I}_{*,p}} \leq \Delta^{\mathrm{sparse}}_l 
\end{equation}
where the error bound $\Delta^{\mathrm{sparse}}_l$ is given by
\begin{equation}
    \Delta^{\mathrm{sparse}}_l :=
        \begin{cases}
        n^{\frac{3-k}{2}}\sqrt{p} t\cdot\Bigg( \left[\sqrt{p} n^{\frac{1}{2}-\frac{k}{2}}\frac{t}{r}\right]^{l} 
        + n^k\left[\sqrt{p} n^{\frac{1}{2}-\frac{k}{2}}\frac{t}{r}\right]^{l+1}\Bigg),& \text{if $k$ is even},\\[12pt]
        n^{\frac{3-k}{2}}\sqrt{p} t\cdot\Bigg( \left[\sqrt{p}  n^{1-\frac{k}{2}}\frac{t}{r}\right]^{l} 
        + n^k \left[\sqrt{p} n^{1-\frac{k}{2}}\frac{t}{r}\right]^{l+1}\Bigg),& \text{if $k$ is odd}.
        \end{cases}
\end{equation}
\end{theo*}

Similarly, we find the following \emph{average} gate complexity of the simulation of the sparse SYK model using higher-order product formulas:
\begin{equation}
    \overline{G} \sim \begin{cases}
        n^{1+\frac{1}{2}} \mathcal{J}t & \text{if $k$ is even},\\
        n^2\mathcal{J}t & \text{if $k$ is odd}.
    \end{cases}
\end{equation}
such that the spectral norm difference is sufficiently small. This gate complexity, compared to the (regular) SYK model, is significantly smaller. Similar to the argument made before for the SYK model, our result for the sparse SYK model is near optimal, since simulating the sparse SYK model using any order of product formulas requires on average $\Omega(n)$ gates. 

Similar to the full SYK Hamiltonian, if we only aim for a slight deviation when approximating the time-evolution of an arbitrary fixed input state $\ket{\psi}$, we get the following improved average gate complexity
\begin{equation}
    \overline{G} \sim \begin{cases}
        n \mathcal{J}t & \text{if $k$ is even},\\
        n^{1+\frac{1}{2}}\mathcal{J}t & \text{if $k$ is odd}.
    \end{cases}
\end{equation}
which shows that higher-order formulas are also near optimal when it comes to approximating the time evolution of an arbitrary fixed input state $\ket{\psi}$ under even $k$ sparse SYK Hamiltonians. The above two results also hold for a general class of sparse Gaussian random Hamiltonians where the sparsification is regulated by the Bernoulli variables similar to the sparse SYK model.\footnote{See equation \eqref{eq:generalsparseH}.}

\subsection{Comparison with numerical simulations}
We also study the performance of our bounds relative to actual numerical computations of the normalized Trotter error, 
\begin{equation}
    \opnorm{e^{iH_{\SYK}t} - S_l(t/r)^r}_{\overline{p}} := \frac{\opnorm{e^{iH_{\SYK}t} - S_l(t/r)^r}_p}{\opnorm{I}_p}
\end{equation} 
for the dense SYK model, and the average normalized Trotter error, 
\begin{equation}
    \left\langle \opnorm{e^{iH_{\SSYK}t} - S_l(t/r)^r}_{\overline{p}} \right\rangle := \frac{\left \langle \opnorm{e^{iH_{\SYK}t} - S_l(t/r)^r}_p\right\rangle}{\opnorm{I}_p}
\end{equation} 
for the sparse SYK model. The $\langle\cdots\rangle$ averages are taken over the Bernoulli variables. We perform two types of test of our error bounds by comparing their scaling in $n$, the system size, and in $t$, the evolution time, with the numerical values. The code we used for the numerics here can be found in GitHub repository \cite{trotter2025}. 

\begin{figure}[!t]
  \begin{subfigure}[b]{0.49\textwidth}
    \includegraphics[width=\textwidth]{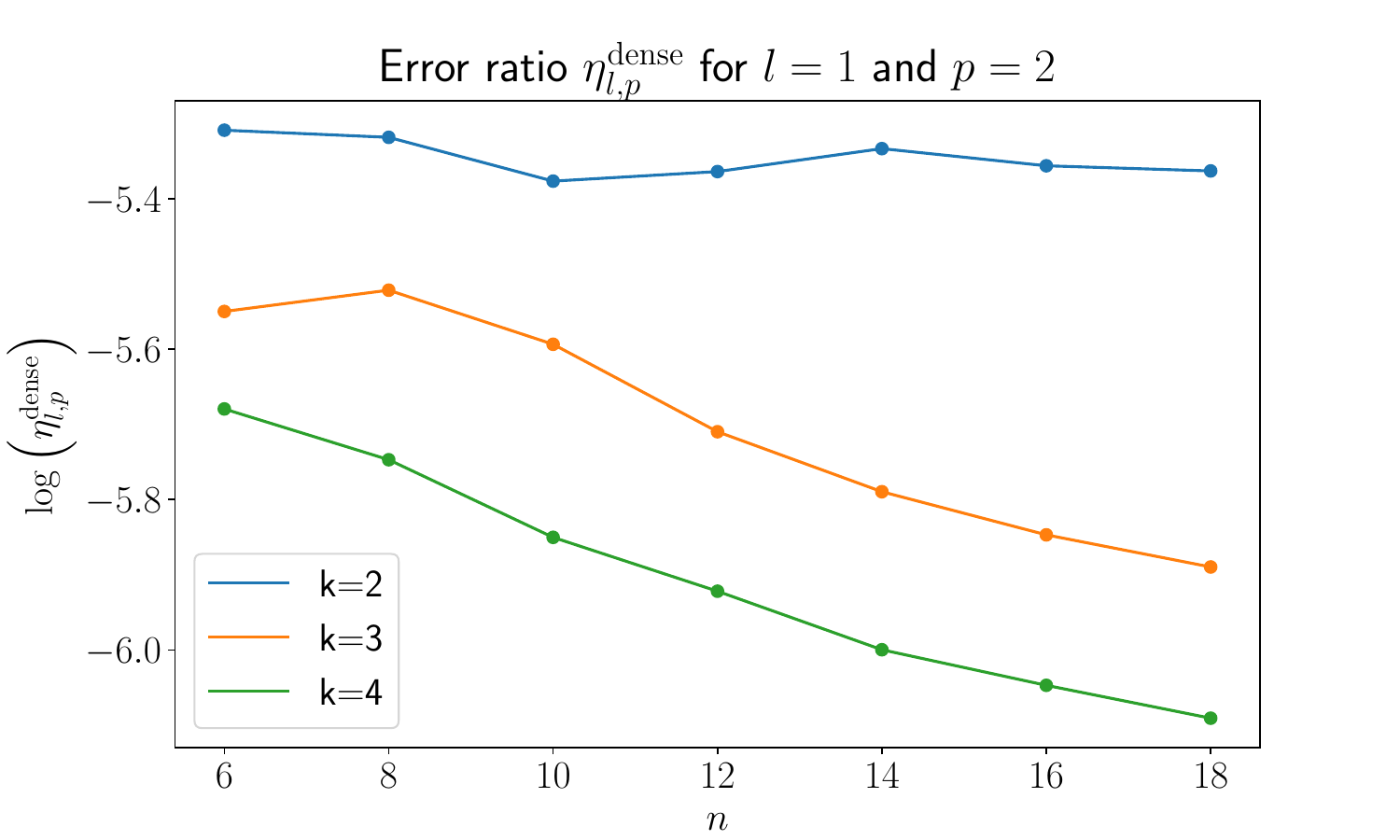}
    \caption{Error ratio of the first-order Trotter error.}
    \label{fig:error_ratio_dense1}
  \end{subfigure}
  \hfill
  \begin{subfigure}[b]{0.49\textwidth}
    \includegraphics[width=\textwidth]{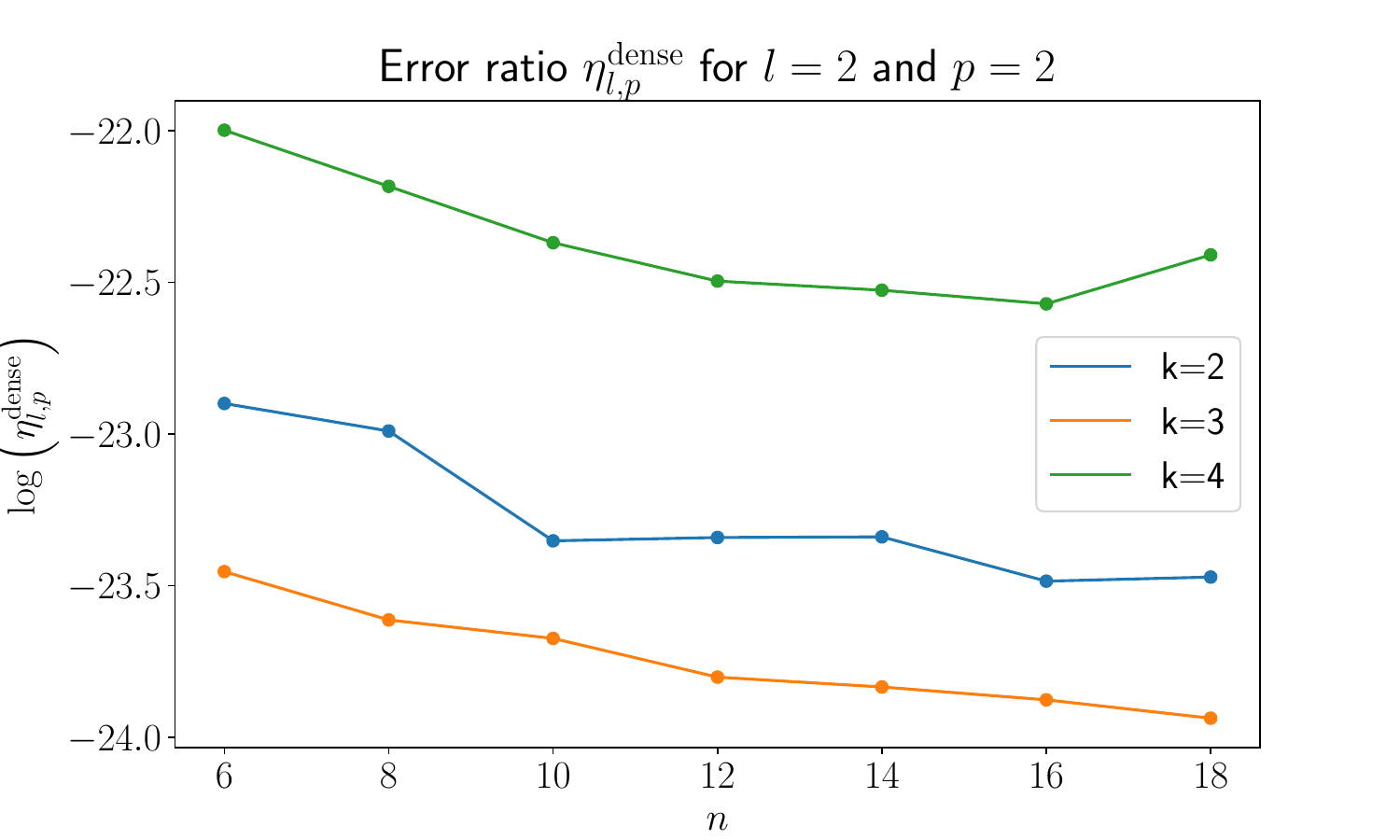}
    \caption{Error ratio of the second-order Trotter error.}
    \label{fig:error_ratio_dense2}
  \end{subfigure}
  \caption{Error ratios of the dense SYK models of different localities for even $n$ from $n=6$ to $n=18$.}
  \label{fig:error_ratio_dense}
\end{figure}
\begin{figure}[!t]
  \begin{subfigure}[b]{0.49\textwidth}
    \includegraphics[width=\textwidth]{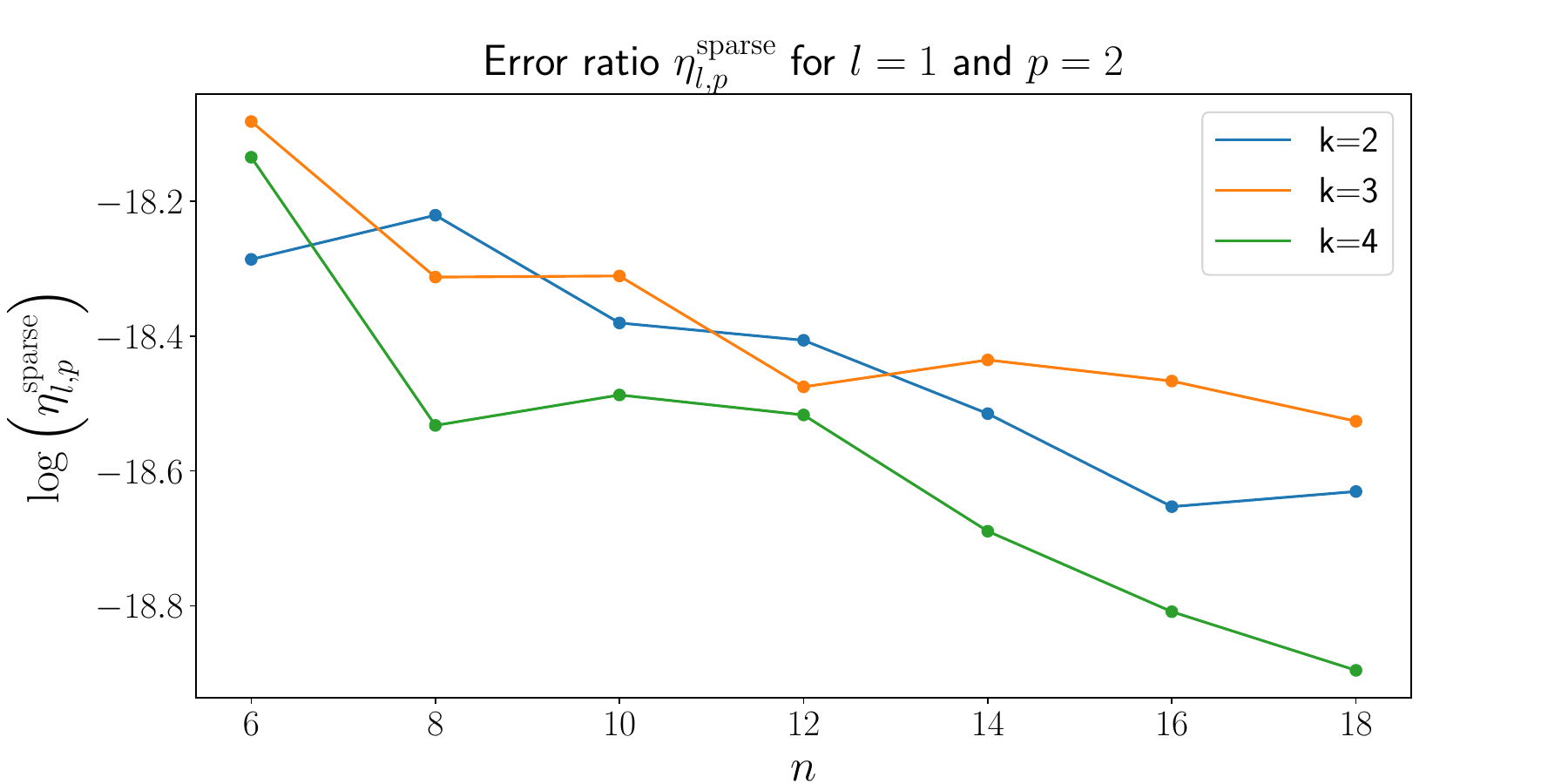}
    \caption{Error ratio of the first-order Trotter error.}
    \label{fig:error_ratio_sparse1}
  \end{subfigure}
  \hfill
  \begin{subfigure}[b]{0.49\textwidth}
    \includegraphics[width=\textwidth]{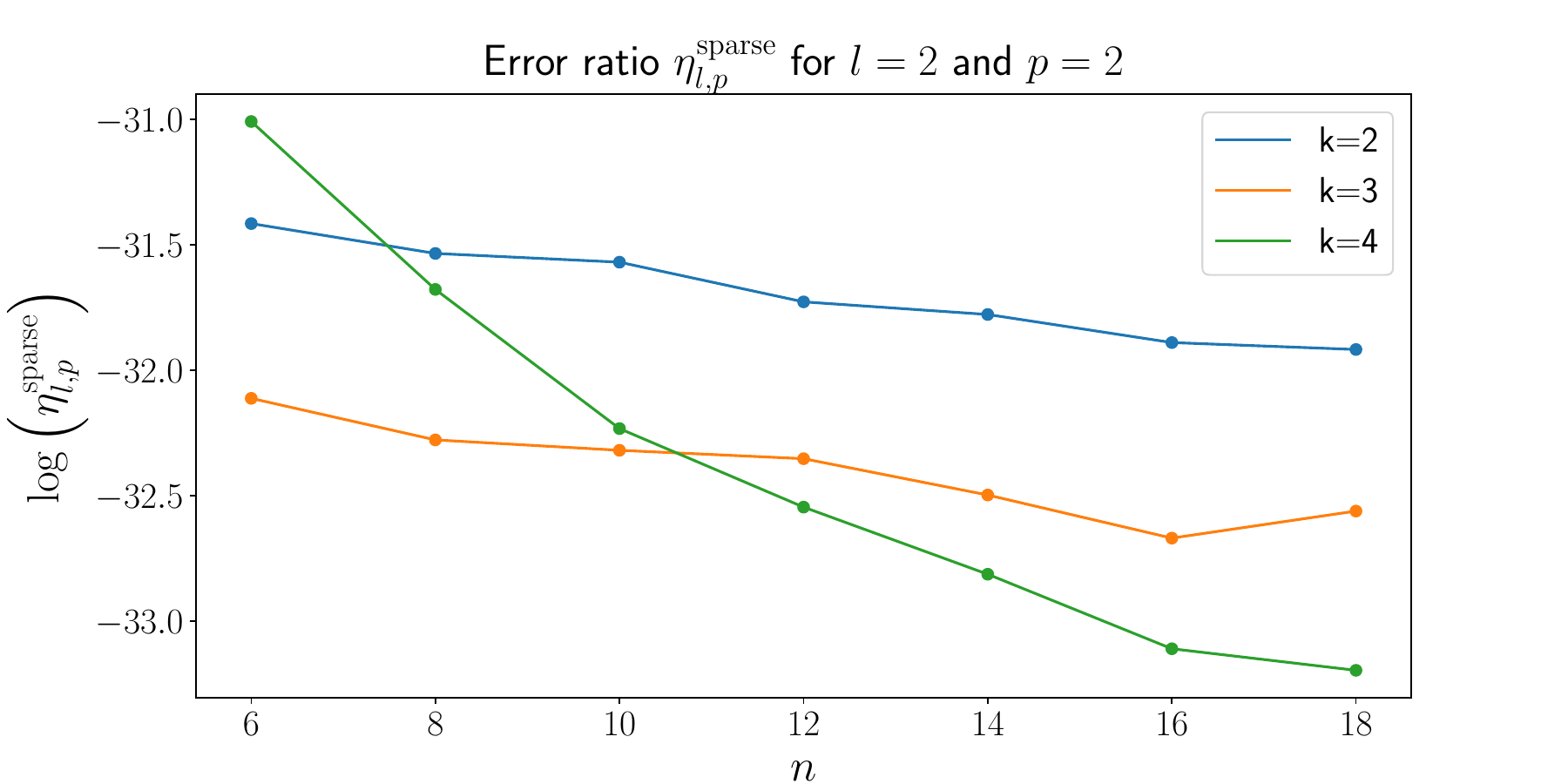}
    \caption{Error ratio of the second-order Trotter error.}
    \label{fig:error_ratio_sparse2}
  \end{subfigure}
  \caption{Error ratios of the sparse SYK models of different localities for even $n$ from $n=6$ to $n=18$.}
  \label{fig:error_ratio_sparse}
\end{figure}

\subsubsection{Scaling in \texorpdfstring{$n$}{n}}
Due to hardware limitations, our simulations could not be performed beyond $n = 18$. Consequently, it is not meaningful to compare the asymptotic scaling exponents of the analytical error bounds with the numerical results directly. To enable a more reliable comparison at small system sizes, we introduce the following \emph{error ratios}:
\begin{equation}
    \eta^{\mathrm{dense}}_{l,p}
    := 
    \frac{1}{\Delta^{\mathrm{dense}}_l}
    \bigl\| e^{i H_{\SYK} t} - S_l(t/r)^r \bigr\|_{\overline{p}},
\end{equation}
for the dense SYK model, and
\begin{equation}
    \eta^{\mathrm{sparse}}_{l,p}
    := 
    \frac{1}{\Delta^{\mathrm{sparse}}_l}
    \left\langle 
    \bigl\| e^{i H_{\SSYK} t} - S_l(t/r)^r \bigr\|_{\overline{p}}
    \right\rangle,
\end{equation}
for the sparse SYK model. 

If the analytical bounds correctly capture the finite-$n$ scaling behavior of the actual Trotter error, these ratios should remain approximately constant as $n$ varies. In particular, a constant $\eta_{l,p}$ indicates that the bound differs from the true Trotter error only by a fixed multiplicative factor. Throughout our numerical analysis, we set $p = 2$, use $r = 10^5$ Trotter steps, and focus exclusively on the first- and second-order Trotter errors.

For the dense SYK model with $k = 2, 3, 4$, we plot the error ratio $\eta^{\mathrm{dense}}_{l,p}$ for both the first- and second-order Trotter errors in Figure~\ref{fig:error_ratio_dense}. As shown, the scaling behavior predicted by our analytical bounds $\Delta^{\mathrm{dense}}_l$ agrees well with the numerical Trotter errors at small $n$. Overall, the plots display a decreasing trend, meaning that we are slightly overestimating the actual Trotter error. We do not know if this remains the case for larger $n$, as computing the Trotter error for $n > 18$ lies beyond our current computational capabilities. We nevertheless encourage future work to explore higher $n$-regime further.

We performed a similar numerical analysis for the sparse SYK model and plotted the corresponding error ratios $\eta^{\mathrm{sparse}}_{l,p}$ for the first- and second-order Trotter errors in Figure~\ref{fig:error_ratio_sparse}. In this case, the analytical bounds $\Delta^{\mathrm{sparse}}_l$ tend to overestimate the actual Trotter errors more than the dense case, particularly for the second-order cases with $k = 4$. It remains uncertain whether these ratios would approach a constant value at larger $n$, as our computational resources do not allow us to probe beyond this regime. We encourage interested readers to explore this question further.

\subsubsection{Scaling in \texorpdfstring{$t$}{t}}
To examine the scaling behavior with respect to $t$, we fix the system size to $n = 10$. In this setting, the operator dimension remains constant as $t$ varies, allowing us to compute the Trotter error for times up to $t \sim 1000$, where an asymptotic scaling exponent becomes meaningful.

For the dense SYK model, we plot the first- and second-order normalized Trotter errors together with the analytical bounds $\Delta^{\mathrm{dense}}_l$ in Figure~\ref{fig:t_dense}. The scaling exponent in $t$ is obtained via a linear fit $y = a x + b$, where $y$ corresponds to either $\log\!\bigl(\| e^{i H_{\SYK} t} - S_l(t/r)^r \|_{\overline{p}}\bigr)$ or $\log\!\bigl(\Delta^{\mathrm{dense}}_l\bigr)$, and $x = \log(t)$. Once again, we find that our error bounds accurately capture the scaling behavior in $t$.
\begin{figure}[!b]
  \begin{subfigure}[b]{0.49\textwidth}
    \includegraphics[width=\textwidth]{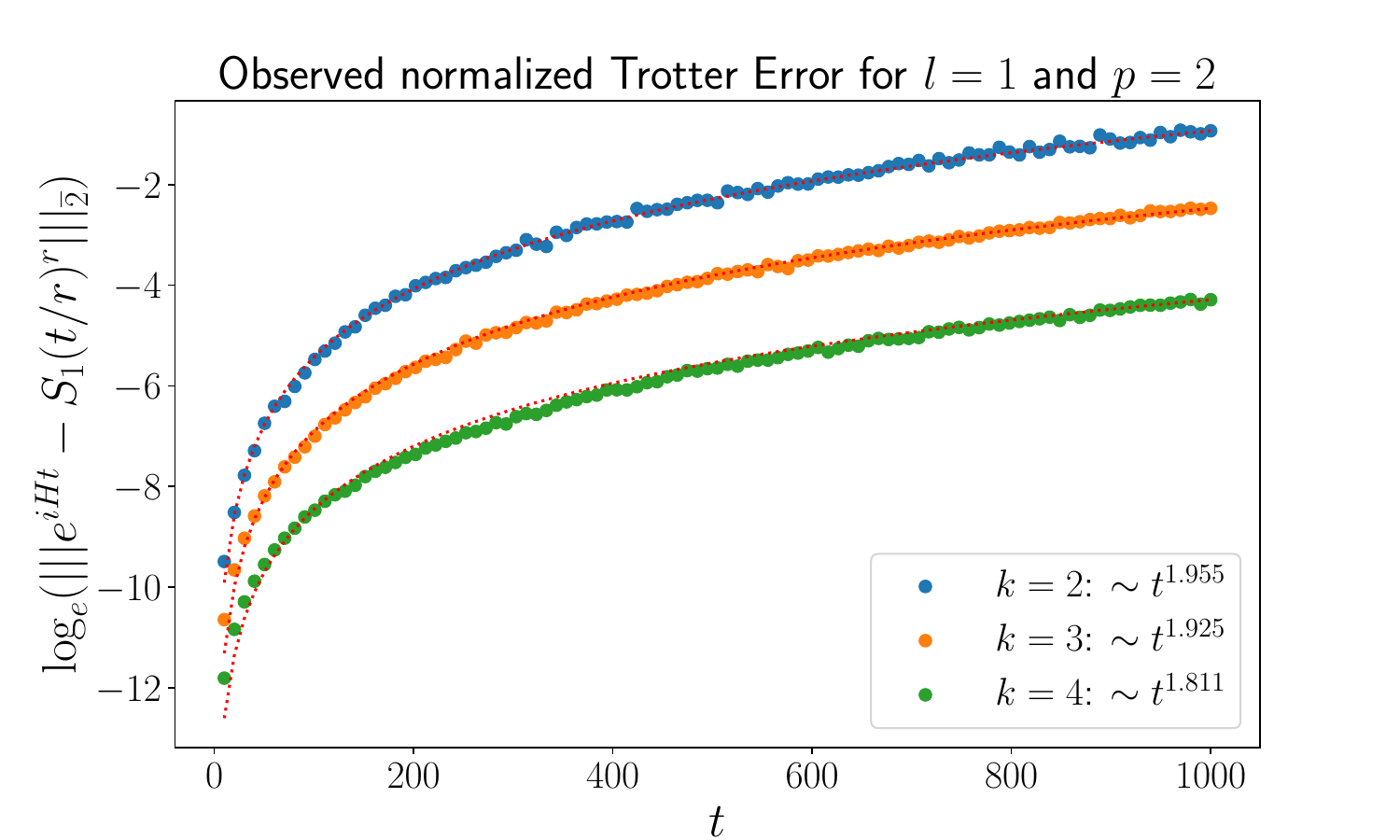}
    \caption{Observed first-order Trotter error.}
    \label{fig:t_dense_obs1}
  \end{subfigure}
  \hfill
  \begin{subfigure}[b]{0.49\textwidth}
    \includegraphics[width=\textwidth]{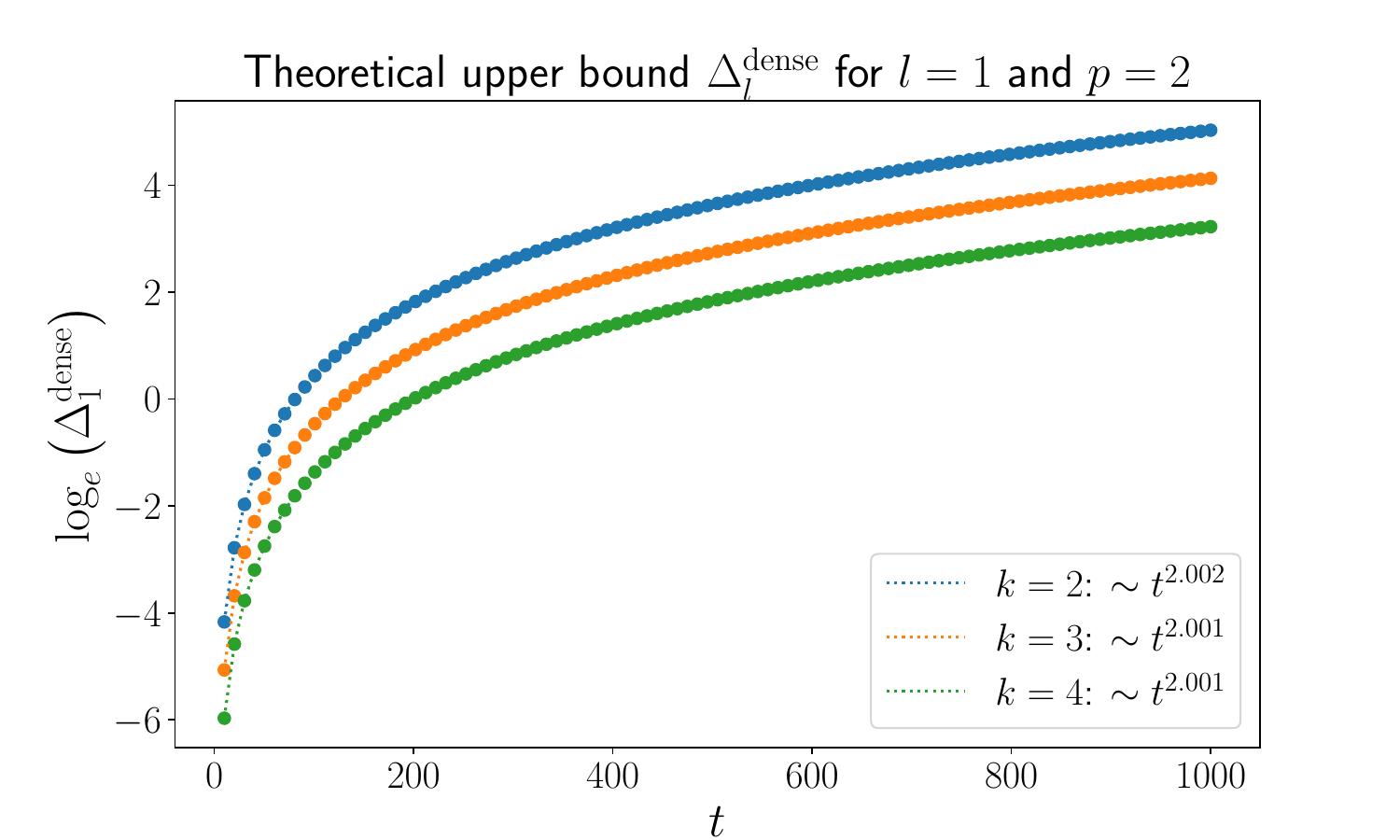}
    \caption{First-order error bound $\Delta_1^{\mathrm{dense}}$.}
    \label{fig:t_dense_theory1}
  \end{subfigure}
  \\
  \begin{subfigure}[b]{0.49\textwidth}
    \includegraphics[width=\textwidth]{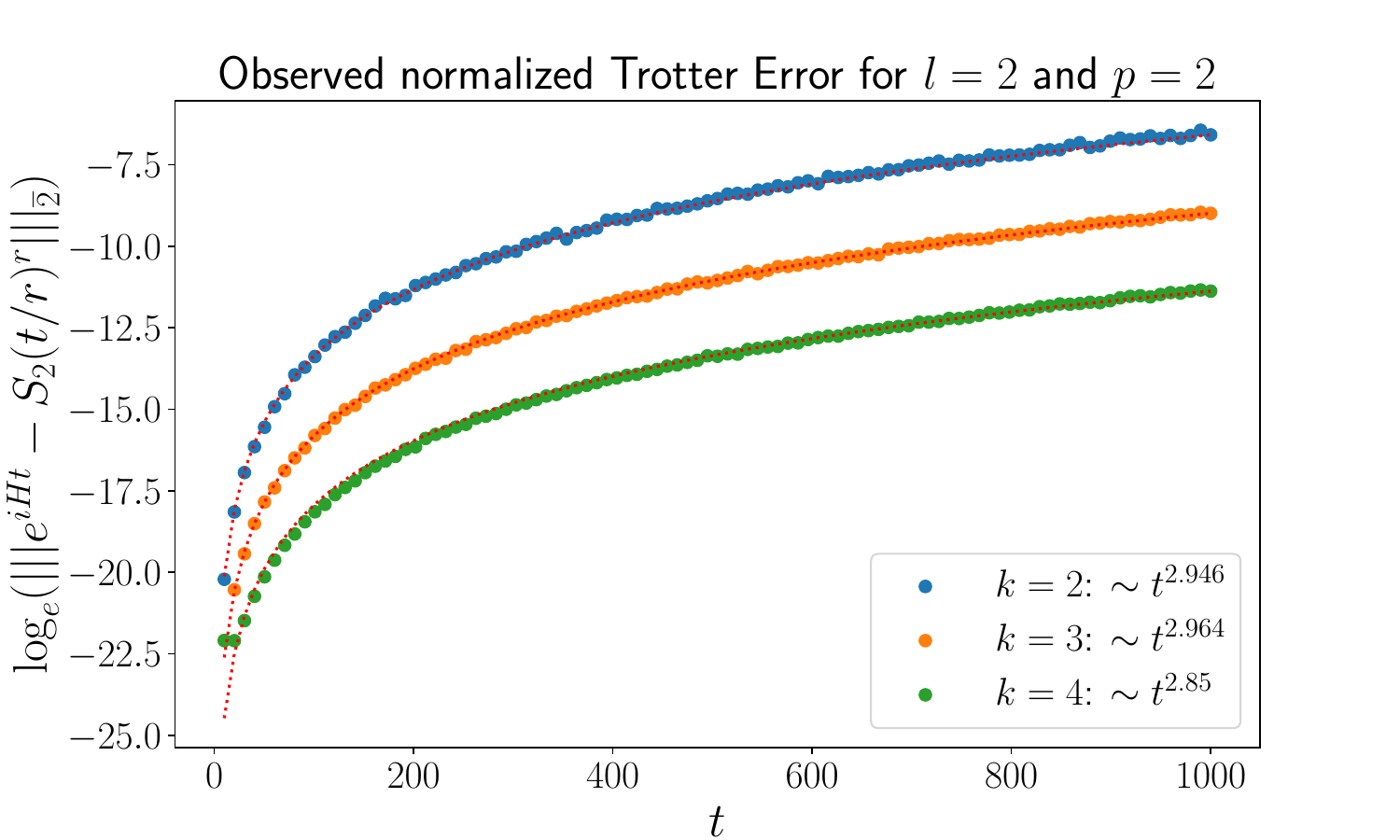}
    \caption{Observed second-order Trotter error.}
    \label{fig:t_dense_obs2}
  \end{subfigure}
  \hfill
  \begin{subfigure}[b]{0.49\textwidth}
    \includegraphics[width=\textwidth]{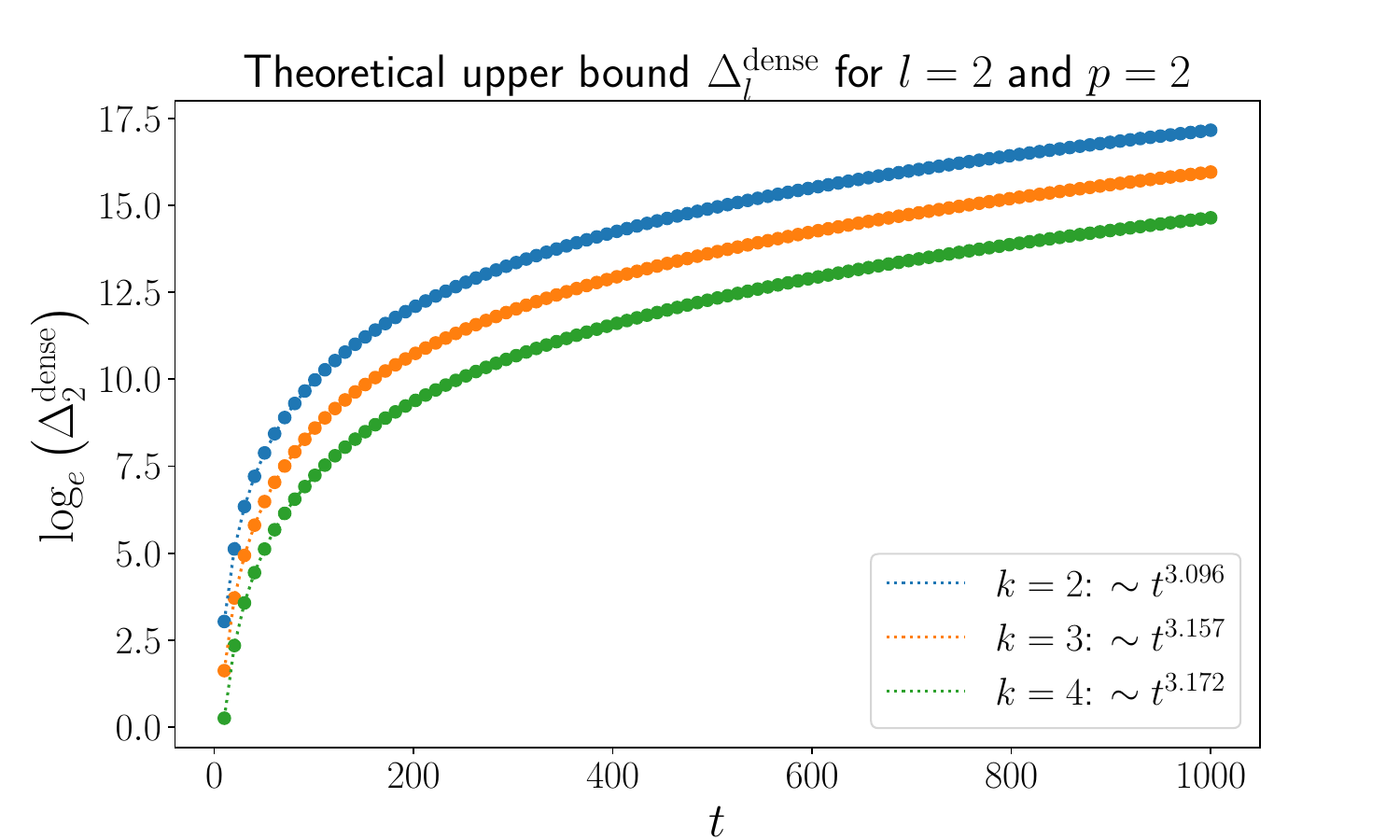}
    \caption{Second-order error bound $\Delta_2^{\mathrm{dense}}$.}
    \label{fig:t_dense_theory2}
  \end{subfigure}
  \caption{Observed normalized Trotter error and the theoretical upper bounds $\Delta_l^{\mathrm{dense}}$ for $l=1,2$. We fix $n=10$, and study the dense SYK models of localities $k=2,3,4$ with $t\in [0,1000].$}
  \label{fig:t_dense}
\end{figure}

Similarly, for the sparse SYK model, we plot the first- and second-order average normalized Trotter errors together with the analytical bounds $\Delta^{\mathrm{sparse}}_l$ in Figure~\ref{fig:t_sparse}. As in the dense case, the bounds $\Delta^{\mathrm{sparse}}_l$ provide accurate predictions for $k = 2, 3, 4$.

\begin{figure}[!t]
  \begin{subfigure}[b]{0.49\textwidth}
    \includegraphics[width=\textwidth]{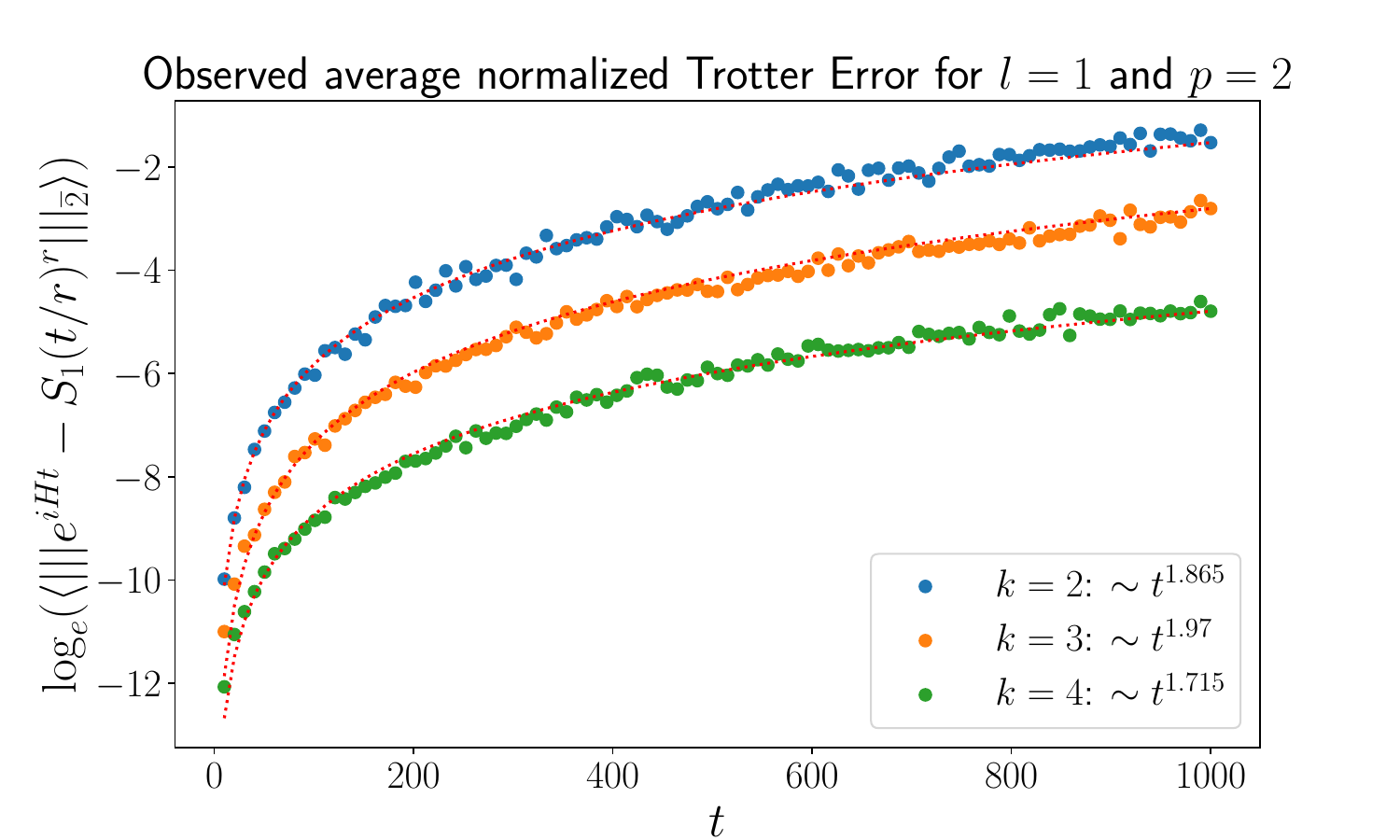}
    \caption{Observed average first-order Trotter error.}
    \label{fig:t_sparse_obs1}
  \end{subfigure}
  \hfill
  \begin{subfigure}[b]{0.49\textwidth}
    \includegraphics[width=\textwidth]{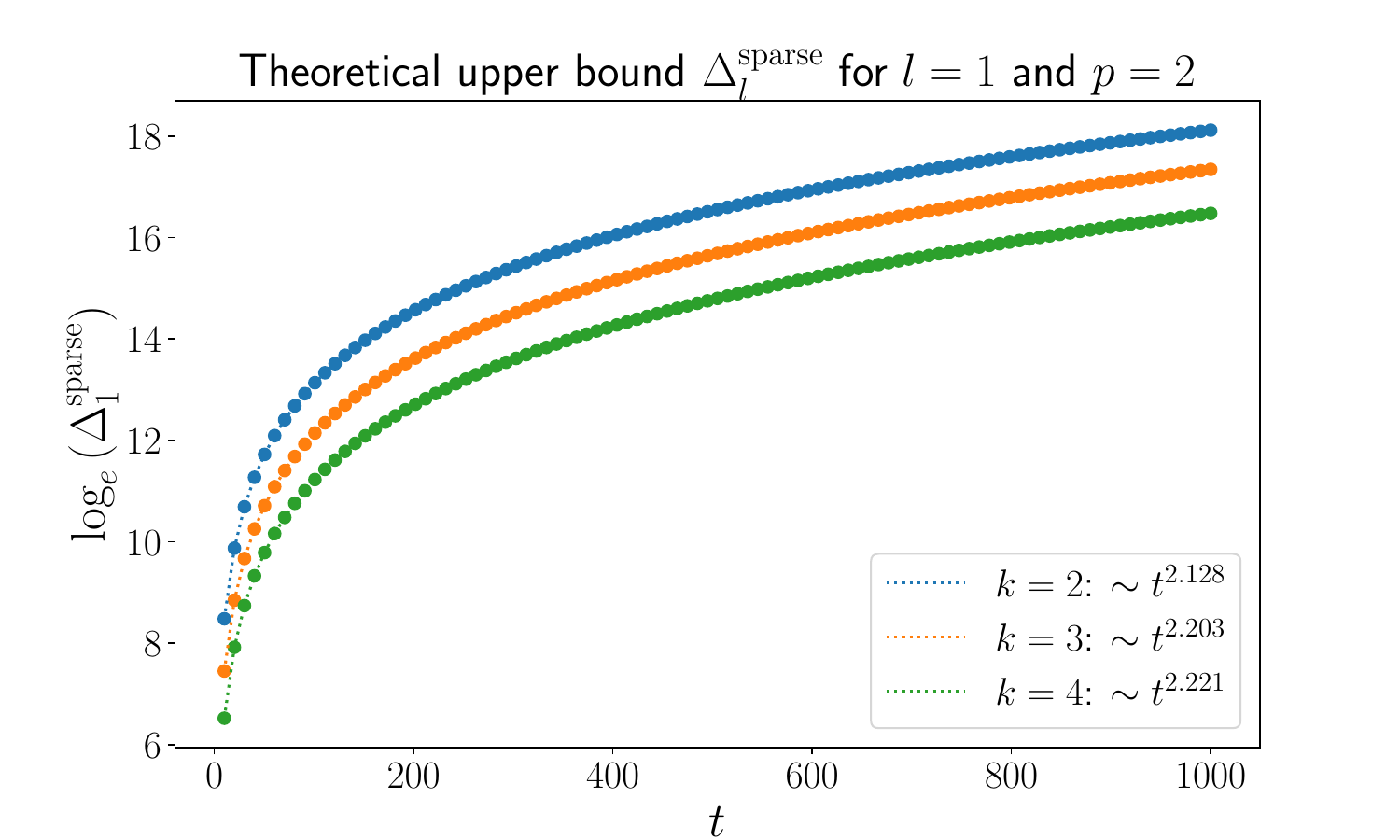}
    \caption{First-order error bound $\Delta_1^{\mathrm{sparse}}$.}
    \label{fig:t_sparse_theory1}
  \end{subfigure}
  \\
  \begin{subfigure}[b]{0.49\textwidth}
    \includegraphics[width=\textwidth]{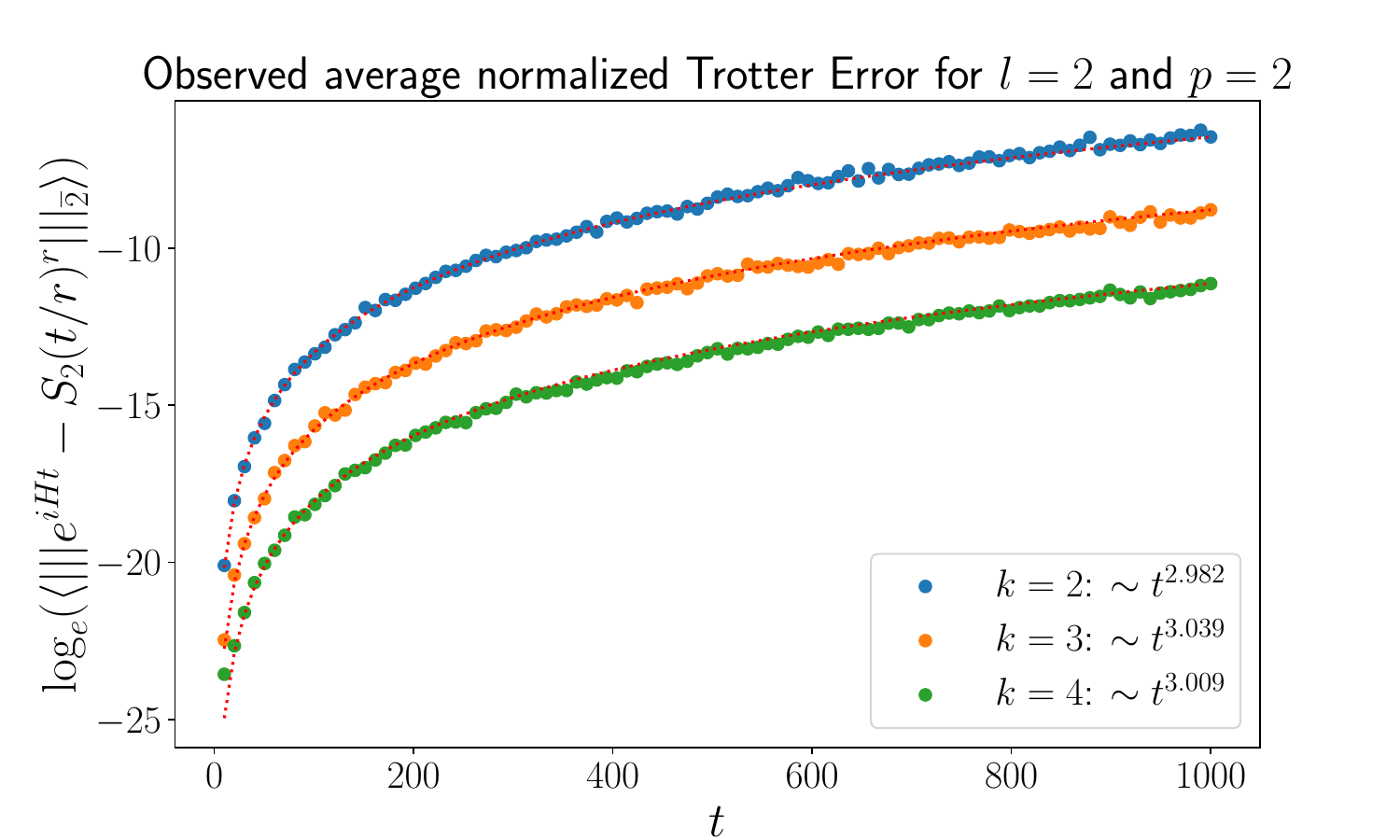}
    \caption{Observed average second-order Trotter error.}
    \label{fig:t_sparse_obs2}
  \end{subfigure}
  \hfill
  \begin{subfigure}[b]{0.49\textwidth}
    \includegraphics[width=\textwidth]{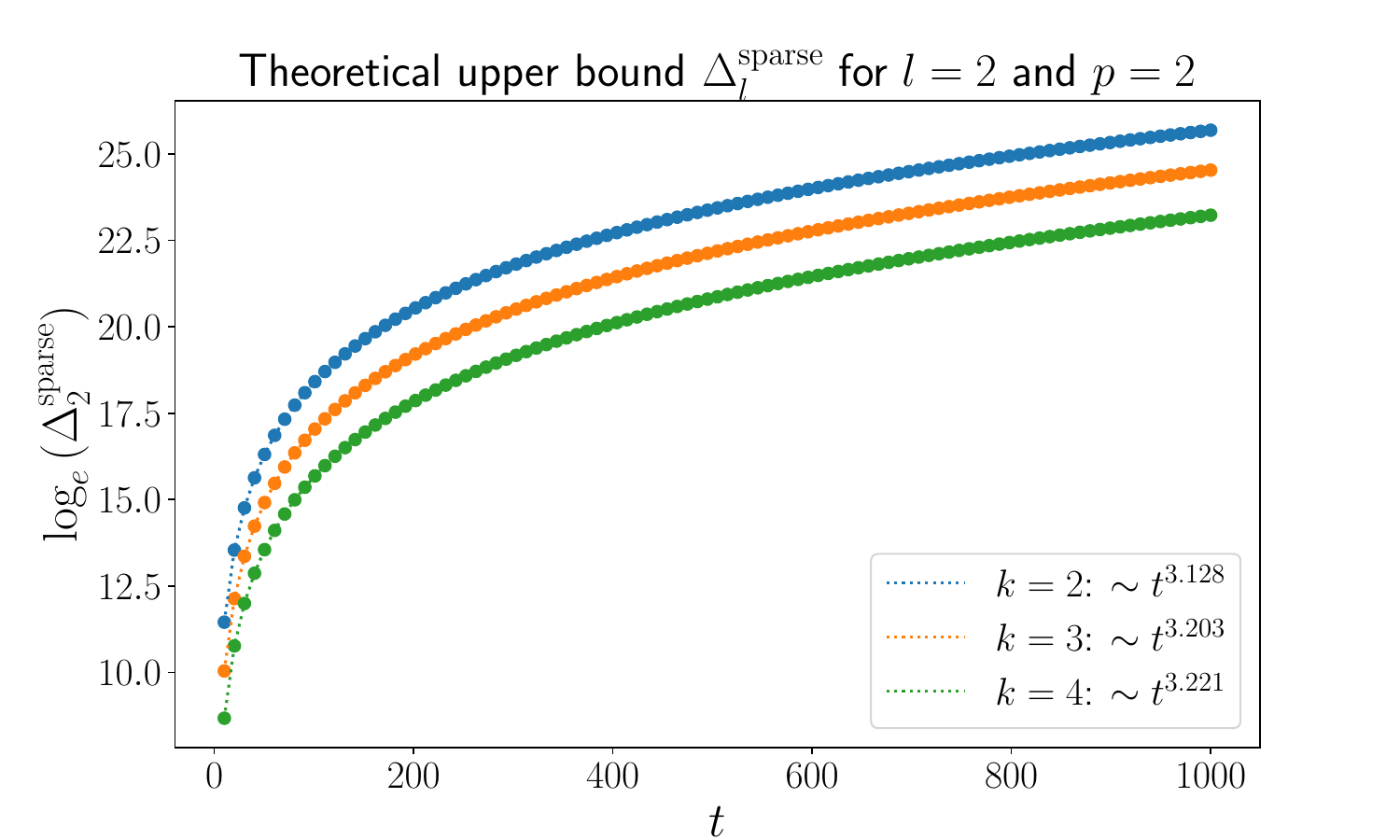}
    \caption{Second-order error bound $\Delta_2^{\mathrm{sparse}}$.}
    \label{fig:t_sparse_theory2}
  \end{subfigure}
  \caption{Observed average normalized Trotter error and the theoretical upper bounds $\Delta_l^{\mathrm{sparse}}$ for $l=1,2$. We fix $n=10$, and study the sparse SYK models of localities $k=2,3,4$ with $t\in [0,1000].$}
  \label{fig:t_sparse}
\end{figure}
\break

\subsection{Structure of the paper}

The paper is organized as follows. In Section~\ref{sec:Prelim}, we establish notation and introduce several mathematical tools to study the Trotter error and the gate complexity of the associated simulation. After that, we first tackle the regular SYK model -- we study the first-order and higher-order Trotter errors and their gate complexity in Sections~\ref{sec:FOE} and~\ref{sec:HOSYK} respectively. In Section~\ref{sec:SSYK}, we study the higher-order Trotter error and gate complexity of the sparse SYK model. We conclude with a summary and open questions in Section~\ref{sec:Discussion}.

\subsection{Acknowledgements}
We thank Chi-Fang Chen for explaining the proof methods of Ref.~\cite{Anthony}.
JH acknowledges funding from the Dutch Research Council (NWO) through a Veni grant (No.VI.Veni.222.331) and from the Quantum Software Consortium (NWO Zwaartekracht Grant No.024.003.037).
MO was supported by an NWO Vidi grant (No.VI.Vidi.192.109) and by a National Growth Fund grant (NGF.1623.23.025).
This work was primarily performed in the context of YC's bachelor thesis at the University of Amsterdam.

\section{Preliminaries}
\label{sec:Prelim}
In this section, we formally describe the problem, provide sufficient mathematical background and introduce some key definitions and notation that will be used throughout. 

\subsection{The SYK and sparse SYK models}

The \emph{SYK model} is defined as a random $k$-local Majorana fermionic Hamiltonian\footnote{For qubit Hamiltonians the underlying $n$-qubit Hilbert space $(\C^2)^{\otimes n}$ has tensor product structure, and a term is ``$k$-local'' if it acts as identity on all but $k$ qubits. For fermionic Hamiltonians the underlying Hilbert space need not have such structure, and the notion of locality is different -- it refers to the number of Majoranas involved in a given term.}
\begin{equation}
\label{eq:SYKdef}
    H_{\text{SYK}} = i^{k(k-1)/2}\sum_{1\leq i_1<\cdots<i_k\leq n}J_{i_1,\dots,i_k}\chi_{i_1}\cdots\chi_{i_k},
\end{equation}
consisting of (even) $n$ Majorana fermions. The coefficients $J_{i_1,\dots,i_k} \in \R$ are i.i.d.\@ Gaussian variables such that 
\begin{equation}
    \label{eq:mean and variance of J}
    \E(J_{i_1,\dots,i_k}) = 0, \quad \text{and} \quad
    \E(J_{i_1,\dots,i_k}^2) = \frac{(k-1)!\mathcal{J}^2}{k n^{k-1}} =: \sigma^2
\end{equation}
with a certain energy constant $\mathcal{J}< \infty$.
The $\chi_i$'s are the Majorana fermions that obey the following anti-commutation relation:
\begin{equation}
    \label{eq:Majoranas}
    \{\chi_i,\chi_j\} = 2\delta_{ij}.
\end{equation}

The \emph{sparse SYK model} is a modified version of the SYK model proposed by Xu, Susskind, Su, and Swingle \cite{susskind}, that requires fewer resources to simulate, but still captures several essential properties \cite{susskind,preskill}. The model is obtained by removing each local interaction in the SYK model with probability $1-p_B$, where $p_B$ is usually defined as 
\begin{equation}
    p_B = \frac{\kappa n}{\binom{n}{k}}
\end{equation}
with a constant $\kappa \geq 0$ (not to be confused with locality $k$) controlling the sparsity of the model.  According to \cite{susskind, preskill}, gravitational physics emerges for $\kappa$ between $\frac{1}{4}$ and 4 at low temperatures.

Removing the local terms in the SYK model is equivalent to attaching a Bernoulli variable $B_{i_1,\dotsc, i_k} \in \set{0,1}$ to each term such that $\Prob(B_{i_1,\dotsc, i_k} = 1) = p_B$. Hence, the sparse SYK model can be written as
\begin{equation}
    H_{\mathrm{SSYK}} = i^{k(k-1)/2}\sum_{1\leq i_1 < \cdots < i_k \leq n} B_{i_1,\dotsc, i_k} J_{i_1,\dotsc, i_k} \chi_{i_1} \cdots \chi_{i_k}.
\end{equation}
To ensure that this model still has extensive energy, we renormalize the variance of the Gaussian couplings $J_{i_1,\dots,i_k}$ as
\begin{equation}
    \sigma^2 \longmapsto \frac{\sigma^2}{p_B},
\end{equation}
to keep the model extensive. 

\subsection{Representation and norm of Majoranas}\label{sec:rep and norm}

Evaluating the norm of the Trotter error of the SYK model inevitably requires evaluating the norm of the Majorana fermions $\chi_i$ appearing in equation~\eqref{eq:SYKdef}. For even $n$, the Majoranas $\chi_1,\dotsc,\chi_{n}$ can be understood as generators of the Clifford algebra $\mathrm{Cl}(\R^{n/2}).$ However, $\mathrm{Cl}(\R^{n/2})$ is inherently not a normed space. Hence, when computing the norm of any element of this algebra, we implicitly compute the norm of its representation in $\mathrm{GL}(2^{n/2},\C).$ A concrete representation we have in mind here is the Jordan--Wigner transformation\footnote{One can choose another representation, as long as it satisfies the two properties in equation \eqref{eq:rho}.} \cite{JW}
\begin{equation}
    \rho_{\mathrm{JW}}: \mathrm{Cl}(\R^{n/2}) \longrightarrow \mathrm{GL}(2^{n/2},\C).
\end{equation}
It is a representation of $\mathrm{Cl}(\R^{n/2})$ because it preserves anti-commutativity and normalization of the generators $\chi_i$:
\begin{align}
    \label{eq:rho}
    \rho(\{\chi_i,\chi_j\}) &= \{\rho(\chi_i),\rho(\chi_j)\}, &
    \rho(\chi_i)\rho(\chi_i) &= I_{2^{n/2}}
\end{align}
for all $1\leq i,j\leq n$. From now on, when we write the norm $\norm{\chi}$ for any $\chi \in \mathrm{Cl}(\R^{n/2})$, we implicitly mean the norm $\norm{\rho_{\mathrm{JW}}(\chi)}$ of its Jordan--Wigner representation. In other words, we have introduced a norm on $\mathrm{Cl}(\R^{n/2})$ through a norm on $\mathrm{GL}(2^{n/2},\C).$

\subsection{Notation for local Hamiltonians}
The summation indices of the SYK and sparse SYK models can be cumbersome when computing product formula approximations of these models. Hence, in this section we introduce an alternative way of indexing terms in a local Hamiltonian by assigning an arbitrary ordering to hyperedges so that we can index terms by integers instead of subsets.

Let $H$ be a $k$-local Hamiltonian acting on $n$ particles labeled by $V=\{1,\dotsc,n\}$. We can write $H$ as 
\begin{equation}
    H = \sum_{\eta\in E} H_{\eta},
\end{equation}
where $E \subseteq 2^V$ is some set of size-$k$ subsets of $V$, and each $H_{\eta}$ is a local interaction on particles $\eta \subseteq V$.
This notation naturally induces a hypergraph $(V,E)$ representing the local terms of $H.$ The particles $V$ are the vertices of the hypergraph, while $E$ are the hyperedges. Each hyperedge denotes the particles on which the corresponding local interaction acts on. As we are using the hyperedges as labels for the local terms in the Hamiltonian, we denote the total number of local terms as $\Gamma := |E|.$

Throughout this paper, we will encounter situations where we need to partially sum over hyperedges in $E$, or sum over pairs of hyperedges in certain ordering. In those cases, the above notation can be cumbersome. To get a clearer overview for bookkeeping, we choose an arbitrary bijection 
\begin{equation}
    \phi: \{1,\dotsc,\Gamma\}\to E
\end{equation}
to identify each hyperedge with a number (the explicit definition of $\phi$ does not matter). Observe that $\phi$ induces an ordering on $E$, such that for $\eta_1,\eta_2\in E$,
\begin{equation}
    \eta_1 > \eta_2 \iff \phi^{-1}(\eta_1) > \phi^{-1}(\eta_2).
\end{equation}

In certain situations, we need to repeatedly sum over $E$. Hence, we can periodically extend $\phi$ to $\N$ by considering the map
\begin{equation}
\label{eq:labeling}
    \gamma:\N \to E,
    \qquad
    \gamma(r + q\cdot \Gamma) := \phi(r)
\end{equation}
with $1\leq r \leq \Gamma$ and $q\in \Z_{\geq 0}$. We will refer to this map $\gamma$ as the \emph{ordering map}.
For example, if $f: E^4 \to \C$ and $g:\N\to \N$, the ordering map allows us to write down sums such as 
\begin{equation}
    \sum_{i = 1}^{3 \Gamma} \sum_{j = i+1}^{4 \Gamma}\sum_{\substack{k,l=1\\k + l \leq \Gamma}}^{\Gamma} f(H_{\gamma(g(i))}, H_{\gamma(j)}, H_{\gamma(g(k))}, H_{\gamma(l)}),
\end{equation}
which would be quite cumbersome if we used summation over hyperedges. More importantly, we can write the SYK model in this notation as
\begin{equation}
    \label{eq:HSYK}
    H_{\mathrm{SYK}} = \sum_{i=1}^{\Gamma} H_{\gamma(i)},
\end{equation}
where $\gamma$ is any ordering map of our choice.  The hyperedge $\gamma(i) = \{\gamma(i)_1,\dotsc, \gamma(i)_k\} \in E$ denotes the Majorana fermions acted upon by the local term
\begin{equation}
    H_{\gamma(i)} = i^{k(k-1)/2}J_{\gamma(i)} \left(\chi_{\gamma(i)_1}\cdots \chi_{\gamma(i)_k}\right)_{\text{lex}},
    \label{eq:H gamma i}
\end{equation}
where the product of Majorana fermions is ordered lexicographically to match the ordering of indices in the original definition of the SYK model in equation~\eqref{eq:SYKdef}.

\subsection{Trotter error and product formulas}
\label{subsec:TEPFA}
Simulating the SYK model on a quantum computer means implementing its time-evolution operator $e^{itH_{\mathrm{SYK}}}$ for any choice of the random couplings.
In general, it is unclear how to exactly implement this matrix exponential, but there are several efficient approximation methods, one of them being product formulas. In essence, product formulas allow us to decompose an exponential of a sum of matrices into products of exponentials. Suppose $H$ is a local qubit Hamiltonian defined as 
\begin{equation}
    H = \sum_{i=1}^{\Gamma}H_{\gamma(i)},
\end{equation}
with a certain ordering map $\gamma$ defined in equation~\eqref{eq:labeling}.
Then we can decompose the unitary time evolution as a product plus some error
\begin{equation}
    e^{iHt} = e^{iH_{\gamma(1)}t}e^{iH_{\gamma(2)}t}\cdots e^{iH_{\gamma(\Gamma)}t} + \mathcal{O}(t^2).
\end{equation}
The product 
\begin{equation}
    S_1(t) = \prod_{i=1}^{\Gamma} e^{itH_{\gamma(i)}}
\end{equation}
is called the \emph{first-order product formula}. The error term $\mathcal{O}(t^2)$ associated with this formula is called the \emph{first-order Trotter error}.

We can also write down product formulas of higher orders -- the most prominent family of examples being the \emph{Lie--Trotter--Suzuki formulas} \cite{Suzuki}. The \emph{second-order formula} is defined as 
\begin{equation}
    S_2(t) = \prod_{i = \Gamma}^1 e^{i H_{\gamma(i)} t/2} \prod_{j = 1}^{\Gamma}e^{i H_{\gamma(j)} t/2},
\end{equation}
and \emph{higher-order formulas} are defined recursively:
\begin{equation}
\label{eq:recursive_def}
    S_{2p}(t) = S_{2p - 2}(q_pt)^2\cdot S_{2p-2}((1-4q_p)t)\cdot S_{2p-2}(q_pt)^2
\end{equation}
with $q_p = 1/ (4 - 4^{1/(2p-1)})$. This recursive structure only defines product formulas of even order. Therefore, when we refer to higher-order product formulas in this work, we only consider even orders. In general, an $l$-th order product formula has Trotter error of leading order $\mathcal{O}(t^{l+1}).$ 

Another way to write down the $l$-th order product formula is to factor the recursive equation~\eqref{eq:recursive_def} into a product of $\Upsilon$ \emph{stages}:
\begin{equation}
\label{eq:stages}
    S_l(t) = \prod_{a = 1}^{\Upsilon}\prod_{b=1}^{\Gamma}e^{ic(a,b) H_{\gamma(d(a,b))}t}
\end{equation}
where $c(a,b)\in \R$ and $d(a,b)$ are chosen such that the above equals equation~\eqref{eq:recursive_def}. By the recursive definition of $l$-th order product formulas, the number of stages is $\Upsilon = 2\cdot 5^{\frac{l}{2}-1}$. We can write this even more compactly with $J= \Upsilon \cdot \Gamma$:
\begin{equation}
\label{eq:compact_def}
    S_l(t) = \prod_{j = 1}^{J} e^{ia_j H_{\gamma(b_j)}t }
\end{equation}
where $a_j\in \R$ and $b_j\in \N$ are chosen such that the above expression is the same as equation~\eqref{eq:recursive_def}.

In general, one uses $S_l(t/r)^r$ to approximate the Hamiltonian, instead of just $S_l(t).$ We refer to the \emph{number of rounds} $r$ of applying the product formulas as \emph{Trotter number} for short. For further reading on this topic, we refer the reader to \cite{Anthony, Childs, Suzuki}.

\subsection{Probability statements and concentration inequalities}
Because the SYK and sparse SYK models are random Hamiltonians, there are many ways to characterize what sufficiently small Trotter error means. The first one, appearing in \cite{PhysRevD.109.105002, susskind, Garc_a_lvarez_2017}, is the average case statement
\begin{equation}
\label{eq: averagestatement}
    \E\left(\norm{e^{iHt} - S_l(t/r)^r}\right) < \epsilon,
\end{equation}
where the expectation is over the randomness in $H$. Because this formulation of small Trotter error only covers the average case, we use the following two more precise statements given by \cite{Anthony}. The first one is stated using the operator norm:
\begin{equation}
\label{eq:concineq}
    \Prob\left(\norm{e^{iHt} - S_l(t/r)^r} \geq \epsilon \right) < \delta.
\end{equation}
The other one quantifies small Trotter error by looking at the difference between the actual time-evolution of an arbitrary fixed input state $\ket{\psi}$ and its approximated time evolution in $l^2$-norm:
\begin{equation}
\label{eq:concineqpsi}
    \Prob\left(\norm{(e^{iHt} - S_l(t/r)^r)\ket{\psi}}_{l^2} \geq \epsilon \right) < \delta.
\end{equation}

In addition to yielding more precise characterization of small Trotter error, these two probability statements also connect to the following two norms for random-valued operators $A$ on a given finite-dimensional Hilbert space $\Hil$:
\begin{align}
\label{eq: ESN}
    \opnorm{A}_p &:= \of[\Big]{\E\of[\big]{\norm{A}_p^p}}^{\frac{1}{p}}, \\
\label{eq: fixedESN}
    \opnorm{A}_{\fix,p} &:= \sup\left\{ \of[\Big]{\E\of[\big]{\norm{A\ket{\phi}}_{l^2}^p}}^{\frac{1}{p}} : \ket{\phi} \in \Hil, \norm{\ket{\phi}}_{l^2} = 1\right\}.
\end{align}
These norms obey certain uniform smoothness properties, which lead to better scaling in the Trotter error bounds and the corresponding gate complexities. Structuring our proofs around these properties ensures that the results that we derive in terms of $\opnorm{\cdot}_p$ also hold in terms of $\opnorm{\cdot}_{\fix,p}$. Hence, following the notation in \cite{Anthony}, we will introduce the notation $\opnorm{\cdot}_{*,p}$\footnote{The notation used in \cite{Anthony} is $\opnorm{}_{*}$ which hides the $p$-dependence. Here, we will denote this dependence explicitly.} as a placeholder for both of these norms so that we can formulate our results more concisely. Whenever we state a result in the form 
\begin{equation}
    \opnorm{A}_{*,p} \leq C\opnorm{B}_{*,p}
\end{equation}
for $A,B$ operators on our Hilbert space and $C$ a positive real constant, what we mean is that $\opnorm{\cdot}_{*,p}$ can be replaced with either $\opnorm{\cdot}_p$ or $\opnorm{\cdot}_{\fix,p}$ throughout. In other words, both
\begin{equation}
     \opnorm{A}_p \leq C\opnorm{B}_p,\quad\text{and}\quad  \opnorm{A}_{\fix,p} \leq C\opnorm{B}_{\fix,p}
\end{equation}
are true. For the scope of this work, we shall not discuss these uniform smoothness properties in detail. We refer interested readers to \cite{Tomczak1974, Ball1994SharpUC, ESPN, Anthony} for more details.

To evaluate the probability statements in equations~\eqref{eq:concineq} and~\eqref{eq:concineqpsi}, we use the following concentration inequalities:

\begin{restatable}[Concentration inequality for operator norm]{lemma}{lemmaconcineq}
\label{lem:concineq}
    Let $H$ be a Hamiltonian on a Hilbert space. Let $l\in \{1\} \cup 2\N$ be the order of the product formula approximating the Hamiltonian. 
    If the Trotter error of this product formula is bounded by a positively valued function $\lambda(p,r)$, in the sense that
    \begin{equation}
    \label{eq: actual_concineq}
        \opnorm{e^{iHt} - S_l(t/r)^r}_p \leq p\lambda(p, r) \opnorm{I}_p
    \end{equation}
    for all $p\geq 2$, then choosing the Trotter number $r$ so that it obeys the inequality
    \begin{equation}
    \label{eq:solveforr}
        \frac{\lambda\left(\log(e^2 \opnorm{I}_p^p/\delta),r\right)}{\epsilon} \leq \frac{1}{e\cdot \log(e^2 \opnorm{I}_p^p/\delta)}
    \end{equation}
   ensures that
   \begin{equation}
       \Prob\left(\norm{e^{iHt} - S_l(t/r)^r} \geq \epsilon \right) < \delta.
   \end{equation}
\end{restatable}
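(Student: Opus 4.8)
The plan is to reduce the statement to an ordinary $p$-th moment (Markov) bound for the expected Schatten norm and then optimize over the index $p$. Write $A := e^{iHt} - S_l(t/r)^r$ and $D := \dim\Hil$ (so that $D = 2^{n/2}$ for the SYK model). First I would invoke the pointwise norm comparison $\norm{A} \le \norm{A}_p$, valid for every realization of the random couplings and every finite $p \ge 1$, since the operator norm is the largest singular value of $A$ while $\norm{A}_p$ is the $\ell^p$-norm of its singular values. Hence the event $\set{\norm{A}\ge\epsilon}$ is contained in $\set{\norm{A}_p^p\ge\epsilon^p}$, and applying Markov's inequality to the nonnegative random variable $\norm{A}_p^p$, together with the definition \eqref{eq: ESN} of $\opnorm{\cdot}_p$, would give
\[
    \Prob\of{\norm{A}\ge\epsilon} \;\le\; \frac{\E\of{\norm{A}_p^p}}{\epsilon^p} \;=\; \left(\frac{\opnorm{A}_p}{\epsilon}\right)^p .
\]
Substituting the hypothesised Trotter-error bound \eqref{eq: actual_concineq} then yields $\Prob(\norm{A}\ge\epsilon)\le\bigl(p\,\lambda(p,r)\,\opnorm{I}_p/\epsilon\bigr)^p$ for every $p\ge2$.

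The remaining step is to choose $p$ so that the right-hand side drops below $\delta$. The apparent circularity in the statement — the index $\log(e^2\opnorm{I}_p^p/\delta)$ seemingly depending on $p$ — disappears once one notes that $\opnorm{I}_p^p = \E(\norm{I}_p^p) = \norm{I}_p^p = D$ is independent of $p$; so I would set $p^\star := \log(e^2 D/\delta)$. Since $\delta<1\le D$ one has $e^2 D/\delta > e^2$, hence $p^\star>2$ and the bound above is available at $p=p^\star$. Using $\opnorm{I}_{p^\star} = D^{1/p^\star}$ and rewriting the hypothesis \eqref{eq:solveforr} as the equivalent statement $p^\star\lambda(p^\star,r)/\epsilon \le 1/e$, we would obtain
\[
    \Prob\of{\norm{A}\ge\epsilon} \;\le\; \left(\frac{D^{1/p^\star}}{e}\right)^{p^\star} \;=\; \frac{D}{e^{p^\star}} \;=\; \frac{D}{e^2 D/\delta} \;=\; \frac{\delta}{e^2} \;<\; \delta ,
\]
which is the claim.

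I do not expect a genuine obstacle here; the argument is essentially a textbook moment bound, and the only points that will need care are bookkeeping ones: checking that $p^\star\ge2$ so that the hypothesis \eqref{eq: actual_concineq} may be invoked at that index, recording that $\opnorm{I}_p^p$ is $p$-independent so that $p^\star$ is well-defined, and keeping track that $\lambda$ is a positive function so that clearing it and $\epsilon$ from the denominators preserves the direction of the inequalities. The parallel statement for the fixed-input-state norm (cf.\ equation~\eqref{eq:concineqpsi}) should follow by the identical argument, with $\norm{A\ket{\psi}}_{l^2}$ in place of $\norm{A}$, $\opnorm{\cdot}_{\fix,p}$ in place of $\opnorm{\cdot}_p$, and $\opnorm{I}_{\fix,p}=1$ replacing $D^{1/p}$.
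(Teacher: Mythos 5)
Your proposal is correct and follows essentially the same approach as the paper: Markov's inequality applied to $\norm{A}_p^p$, the observation that $\opnorm{I}_p^p = D$ is $p$-independent, and the choice $p^\star = \log(e^2 D/\delta)$. The only (minor) difference is that you close the argument by directly computing $D/e^{p^\star} = \delta/e^2 < \delta$, whereas the paper reaches the same conclusion by taking logarithms and invoking the auxiliary bound $e^{-x/(x+2)} \geq e^{-1}$; your version is a little tidier, but it is the same route.
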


\begin{restatable}[Concentration inequality for fixed input state]{lemma}{lemmaconcineqpsi}
\label{lem:concineqpsi}
    Let $H$ be a Hamiltonian on a Hilbert space of dimension $D$ and $\ket{\psi}$ an arbitrary fixed input state in this space. Let $l\in \{1\} \cup 2\N$ be the order of the product formula approximating the Hamiltonian.
    If the Trotter error of this product formula is bounded by a positively valued function $\lambda(p,r)$, in the sense that
    \begin{equation}
    \label{eq: actual_concineqpsi}
        \opnorm{e^{iHt} - S_l(t/r)^r}_{\fix,p} \leq p\lambda(p, r) 
    \end{equation}
    for all $p\geq 2$, then choosing the Trotter number $r$ so that it obeys the inequality
    \begin{equation}
    \label{eq:solveforrpsi}
        \frac{\lambda\left(\log(e^2/\delta),r\right)}{\epsilon} \leq \frac{1}{e\cdot \log(e^2/\delta)}
    \end{equation}
   ensures that
   \begin{equation}
       \Prob\left(\norm{(e^{iHt} - S_l(t/r)^r)\ket{\psi}}_{l^2} \geq \epsilon \right) < \delta.
   \end{equation}
\end{restatable}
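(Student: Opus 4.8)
The plan is to deduce Lemma~\ref{lem:concineqpsi} from the hypothesized moment bound~\eqref{eq: actual_concineqpsi} by a textbook ``moments-to-tail'' (Chernoff-type) argument: control one sufficiently high moment of the random $l^2$-deviation, apply Markov's inequality, and then choose the moment order $p$ so that the resulting bound beats $\delta$. No substantive randomness coming from the SYK structure enters here; that work is done upstream in the theorems that supply the bound~\eqref{eq: actual_concineqpsi}.

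Concretely, I would first fix the unit vector $\ket{\psi}$ and introduce the random variable $X := \norm{(e^{iHt} - S_l(t/r)^r)\ket{\psi}}_{l^2} \geq 0$, the randomness being that of $H$. Since $\ket{\psi}$ is one admissible vector in the supremum defining $\opnorm{\cdot}_{\fix,p}$ in~\eqref{eq: fixedESN}, we get $\of[\big]{\E(X^p)}^{1/p} \leq \opnorm{e^{iHt} - S_l(t/r)^r}_{\fix,p}$, and then the hypothesis~\eqref{eq: actual_concineqpsi} gives $\E(X^p) \leq \of[\big]{p\,\lambda(p,r)}^p$ for every $p \geq 2$. Next, Markov's inequality applied to $X^p$ (using that $x \mapsto x^p$ is increasing on $[0,\infty)$) yields, for any $\epsilon > 0$ and any $p \geq 2$,
\begin{equation}
    \Prob(X \geq \epsilon) \;=\; \Prob(X^p \geq \epsilon^p) \;\leq\; \frac{\E(X^p)}{\epsilon^p} \;\leq\; \left(\frac{p\,\lambda(p,r)}{\epsilon}\right)^{p}.
\end{equation}
Finally, I would take $p := \log(e^2/\delta)$; we may assume $\delta \in (0,1]$ (otherwise the claim is vacuous), so $p \geq 2$ and the moment bound is indeed available at this value. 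The hypothesis~\eqref{eq:solveforrpsi} on the Trotter number $r$ says precisely that $\lambda(p,r)/\epsilon \leq 1/(e\,p)$, hence $p\,\lambda(p,r)/\epsilon \leq 1/e$; substituting gives $\Prob(X \geq \epsilon) \leq e^{-p} = \delta/e^2 < \delta$, which is the assertion.

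I do not expect a genuine obstacle: this is essentially the standard derivation of a subexponential tail from uniform moment control. The only points that deserve a moment's care are that the supremum in the definition of $\opnorm{\cdot}_{\fix,p}$ is exactly what lets one pass from the operator-level bound to the $p$-th moment of the particular state $\ket{\psi}$, and that the chosen order $p = \log(e^2/\delta)$ is admissible (i.e.\ at least $2$) so that~\eqref{eq: actual_concineqpsi} may be invoked there. For completeness I would note that the companion statement, Lemma~\ref{lem:concineq}, follows along the same lines: take $X := \norm{e^{iHt} - S_l(t/r)^r}$, use $\norm{A} \leq \norm{A}_p$ together with $\opnorm{A}_p^p = \E(\norm{A}_p^p)$, carry the extra factor $\opnorm{I}_p^p$ through the Markov step (note $\opnorm{I}_p^p = D$), and choose $p = \log(e^2 \opnorm{I}_p^p/\delta)$ so that the $\log \opnorm{I}_p^p$ term in $p$ absorbs that factor.
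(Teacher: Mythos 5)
Your proposal is correct and follows essentially the same route as the paper: bound the $p$-th moment of the fixed-state deviation via the supremum in the definition of $\opnorm{\cdot}_{\fix,p}$ and hypothesis~\eqref{eq: actual_concineqpsi}, apply Markov's inequality to $X^p$, then choose $p=\log(e^2/\delta)\geq 2$ and use the assumed bound on $r$. The only (cosmetic) difference is that the paper, in the sibling Lemma~\ref{lem:concineq} whose argument Lemma~\ref{lem:concineqpsi} invokes by analogy, works through an iff-chain and then relaxes via $e^{-1}\leq e^{-x/(x+2)}$, whereas you substitute directly and compute $(p\lambda/\epsilon)^p \leq e^{-p} = \delta/e^2 < \delta$ — a slightly cleaner rendering of the identical calculation.
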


The proof of these lemmas can be found in Appendix~\ref{apx:concineq}. We will refer to equations~\eqref{eq: actual_concineq} and~\eqref{eq: actual_concineqpsi} as concentration inequalities. The positively-valued function $\lambda(p,r)$ typically takes the form of a positive rational power function of $p$ and a negative rational power function of $r$, obtained from explicitly evaluating the representations of Trotter error defined in Section~\ref{subsec:Error Representation}. 

\subsection{Gate complexity}\label{sec:gate complexity}

After finding the Trotter number $r$ through the concentration inequalities (depending on the probability statement we choose), we can compute the number of local exponentials in the corresponding approximation $S_l(t/r)^r$ as
\begin{equation}
    C = \Upsilon\Gamma r,
\end{equation}
where $\Upsilon$ is the number of stages and $\Gamma$ the number of local terms, see equation~\eqref{eq:stages}.
However, $C$ is not the actual gate complexity\footnote{We defined gate complexity as the number of gates required to simulate the Hamiltonian such that the Trotter error is sufficiently small, as characterized by either equation~\eqref{eq:concineq} or~\eqref{eq:concineqpsi}.} of simulating a fermionic Hamiltonian.  As we mentioned earlier, mapping fermions to qubits inevitably introduces extra overhead for the implementation of each local exponential. For instance, one could consider the standard Jordan--Wigner transform that gives each local exponential an $\mathcal{O}(n)$ overhead. In that case, the actual gate complexity should scale with 
\begin{equation}
    G = \mathcal{O}(n\cdot \Upsilon \Gamma r).
\end{equation}

Fortunately, using the ternary tree mapping \cite{Jiang_2020} incurs only a $\log(n)$ overhead for each local exponential, giving a total gate complexity that scales with
\begin{equation}
    G = \mathcal{O}(\log(n)\cdot \Upsilon \Gamma r).
\end{equation}
Because $\log(n)$ is a slow-growing function, we opt to omit it when writing down the gate complexity. To distinguish from the actual gate complexity, our result will be reported as 
\begin{equation}
    G = \tilde{\mathcal{O}}(\Upsilon \Gamma r),
\end{equation}
but the reader should always keep in mind the overhead from fermion-to-qubit mapping. 
 
\subsection{Error representation}
\label{subsec:Error Representation}

We will need a more nuanced error representation to obtain better bounds on the Trotter error $\norm{e^{iHt} - S_l(t/r)^r}$.
Instead of measuring the error by a single number, we will consider an operator $\mathcal{E}(t)$ that captures the deviation from the exact evolution $e^{iHt}$ at any given time $t$.
More specifically, we will use the exponential form of the error from \cite{Childs, Anthony} to capture the deviation of the product formula~\eqref{eq:compact_def} at any given $t$:
\begin{equation}
    \label{eq:expT}
    S_l(t) = \prod_{j=1}^J e^{ia_jH_{\gamma(b_j)}t} = \mathrm{exp}_{\mathcal{T}}\left(i\int(\mathcal{E}(t) + H)dt\right),
\end{equation}
where the right-hand side involves an ordered exponential. Hence, we can think of every product formula as the time evolution of our Hamiltonian $H$ with some time-dependent error $\mathcal{E}(t)$ added to it. We call $\mathcal{E}(t)$ the \emph{error generator}, and it takes the following form \cite{Anthony, Childs}:
\begin{equation}
    \label{eq:Et def}
    \mathcal{E}(t) := \sum_{j=1}^J\left(\prod_{k=j+1}^J e^{a_k\Lag_{\gamma(b_k)}t}[a_jH_{\gamma(b_j)}] - a_jH_{\gamma(b_j)}\right),
\end{equation}
where $\Lag_{\gamma(j)}$ sends an operator $O$ to the commutator with $H_{\gamma(j)}$:
\begin{equation}
    \label{eq:Lj}
    \Lag_{\gamma(j)}[O] := i[H_{\gamma(j)}, O].
\end{equation}
This error generator will be our primary target when deriving bounds for the Trotter error.

\subsection{Random matrix polynomials}
\label{subsec:randommatpol}

A very useful method for the study of higher-order error generator is to view it as \emph{random matrix polynomials} of local terms in the Hamiltonian. Hence, we introduce some tools for treating random matrix polynomials in this subsection.

We define a degree-$g$ Gaussian random matrix polynomial of $m$ matrix variables $X_{1},\dotsc,X_{m}$ as 
\begin{equation}
\label{eq: def_F}
    F(X_{m},\dotsc,X_{1}) = \sum_{\bm{i}}T_{\bm{i}} X_{i_1}\cdots X_{i_g}
\end{equation}
where $\sum_{\bm{i}}$  is the abbreviated notation for the summation 
\begin{equation}
    \sum_{\bm{i}} := \sum_{i_1=1}^m\dotsc \sum_{i_g = 1}^m ,
\end{equation}
and $T_{\bm{i}} := T_{(i_1,\dotsc, i_g)}$ are scalar coefficients. For notation here, we will consistently use bold text to denote vectors. In this case $\bm{j}$ is the vector of the summation indices $(i_1,\dotsc, i_g).$ For simplicity, we pose that the variables $X_{1},\dotsc, X_{m}$ are independent from each other, and each $X_i$ can be written as 
\begin{equation}
    X_i = g_i K_i
\end{equation}
with $g_i$ a standard Gaussian variable and $K_i$ a deterministic matrix with bounded norm $\norm{K_i} \leq \sigma_i$. We refer to the set that collects the subscripts of the random variables $F$ depends on as the index set $\mathcal{I}$. In this case, we have
\begin{equation}
    \mathcal{I} := \{1,2,3,\dotsc, m\} = [m].
\end{equation}

To study the concentration bounds on these polynomials, \cite[Proposition VII.2.1]{Anthony} proposed the following (sum of squares) uniform smoothing inequality:
\begin{theo}
\label{thm:matpol}
Let $F(X_{i_m},\dotsc,X_{i_1})$ be a random matrix polynomial, with index set $\mathcal{I}=\{i_1,\dotsc,i_m\}$ and each $X_i$ is independent. Then, for $p\geq 2$ and $C_p = p-1$,
    \begin{equation}
        \opnorm{F}_p^2 \leq \sum_{S\subseteq \mathcal{I}}C_p^{|S|}\opnorm{[F]_S}_p^2
    \end{equation}
    where
    \begin{equation}
        [F]_S = \prod_{s\in S}(1 - \mathbb{E}_s)\prod_{s'\in S^c}\mathbb{E}_{s'}[F],
    \end{equation}
    with $\mathbb{E}_s$ being the entry-wise expectation applied only on $X_s.$ The notation $S^c$ is the complement $\mathcal{I}\backslash S.$
\end{theo}
We refer to $[\cdots]_S$ as the smoothing operator. This operator has a central part given by 
\begin{equation}
    (1-\mathbb{E})_{S}[\;\cdot\;] := \prod_{s\in S}(1 - \mathbb{E}_s)[\;\cdot\;]
\end{equation}
and an outer part given by 
\begin{equation}
    \mathbb{E}_{S^c}[\;\cdot\;] := \prod_{s'\in S^c}\mathbb{E}_{s'}[\;\cdot\;].
\end{equation}
Using our notation, the smoothing operator could be written as 
\begin{equation}
    [\;\cdot\;]_S = (1-\mathbb{E})_S\circ \mathbb{E}_{S^c}[\;\cdot\;].
\end{equation}

We would like to apply Theorem \ref{thm:matpol} to the random matrix polynomial $F$ defined in equation \eqref{eq: def_F}. This requires us to identify, for each subset $S \subseteq \mathcal{I}$, all terms $T_{\bm{i}} X_{i_1}\cdots X_{i_g}$ in the polynomial $F$ such that the corresponding smoothing operator is non-vanishing, i.e.
\begin{equation}
    \left[T_{\bm{i}} X_{i_1}\cdots X_{i_g}\right]_S \neq 0.
\end{equation}
Notice that the Gaussian matrices $X_{i}$ can appear with powers up to $g$ in $F$, which makes them difficult to handle when applying the smoothing operator. To address this, we first expand each $X_i$ into a product of Rademacher variables.

\subsubsection{Rademacher expansion}
Recall that the central limit theorem tells us that a standard Gaussian variable $g_i$ can be written as the limit of a sum of i.i.d.\@ Rademacher variables:
\begin{equation}
    g_i = \lim_{N\to \infty} \sum_{j = 1}^N \frac{\epsilon_{i,j}}{\sqrt{N}},
\end{equation}
where $\epsilon_{i,j}\in \{\pm 1\}$ are Rademacher variables. Taking $N$ to be a sufficiently large integer motivates us to express each $X_i$ as a sum of Rademacher variables $Y_{i,j}$:
\begin{equation}
    X_i \approx \sum_{j=1}^N \frac{\epsilon_{i,j}}{\sqrt{N}}K_i =: \sum_{j=1}^N Y_{i,j}.
\end{equation}
Under this expansion, we regard $F$ as a polynomial in the variables
\begin{equation}
\begin{aligned}
   F &=  F(Y_{1,1}, Y_{1,2},\dotsc,Y_{1,N},\dotsc, Y_{m,1}, Y_{m,2},\dotsc, Y_{m,N})\\
   &:= \sum_{\bm{i}}\sum_{\bm{j}} T_{(i_1,\dotsc,i_g)}Y_{i_1,j_1} Y_{i_2,j_2}\dotsc Y_{i_g,j_g}.
\end{aligned}
\end{equation}
Although this representation makes $F$ appear more complicated, it offers a key simplification: the non-subleading terms of $F$ in the large-$N$ limit contain Rademacher variables with exponents only in $\{0,1,2\}$, rather than arbitrary powers up to $g$. This significantly simplifies the application of the smoothing operator. To illustrate this subleading argument, we consider the following example.

For the first non-trivial case $m=3$ (the analysis for $m>3$ is identical), we have
\begin{equation}
    \sum_{j_1 = 1}^N \sum_{j_2 = 1}^N \sum_{j_3 = 1}^N Y_{1, j_1}Y_{1,j_2}Y_{1,j_3}. 
\end{equation}
How many terms of the form $Y_{1, j_i}^3$ are present in this sum? The answer is $N$, corresponding to $j_1 = j_2 = j_3$. Summing over these terms gives
\begin{equation}
    \sum_{j_1 = 1}^N Y_{1,j_1}^3 = \sum_{j_1 = 1}^N \frac{\epsilon_{1,j_1}^3}{N^{3/2}} K_1^3.
\end{equation}
Since the Rademacher variable $\epsilon^3_{1,j_1}$ only takes values $\pm 1$, we obtain the inequality
\begin{equation}
\bigg|\sum_{j_1 = 1}^N \frac{\epsilon_{1,j_1}^3}{N^{3/2}}\bigg| \leq \sum_{j_1 = 1}^N \frac{1}{N^{3/2}}.
\end{equation}
Simplifying yields
\begin{equation}
\bigg|  \sum_{j_1 = 1}^N \frac{\epsilon_{1,j_1}^3}{N^{3/2}}\bigg| \leq \frac{1}{N^{1/2}}.
\end{equation}
Clearly, this expression vanishes as $N\to \infty$. Hence, each Rademacher variable in a non-subleading term of $F$, say $Y_{i,j_i}$, may only have exponent $0$, $1$, or $2$.

\subsubsection{Resummation by subsets}
While this provides useful restrictions on the exponents of the Rademacher variables, it also introduces some complications: the index set of $F$ becomes $\mathcal{I} = [m]\times [N]$, rather than simply $[m]$, because we now regard $F$ as a polynomial depending on the variables $Y_{1,1},\dotsc, Y_{m,N}$. Consequently, the subset $S$ in the smoothing operator $[\cdots]_S$ must now be taken from the Cartesian product $[m]\times [N]$. Given a subset $S\subseteq [m]\times [N]$, we can rewrite $F$ as follows:
\begin{equation}
\begin{aligned}
    F =\sum_{\substack{x + y = g\\ x,y \in \{0,\dotsc, g\}}} \sum_{s_1,\dotsc, s_x \in S}\sum_{s'_1,\dotsc,s'_y \in S^c} \sum_{\pi \in \mathcal{S}_g\big/\mathcal{S}_x\times \mathcal{S}_y}T_{\pi\left[\bm{\pr}_{1}  (s_1,\dotsc, s_x, s'_1,\dotsc, s'_y)\right]}\pi\left[Y_{s_1}\cdots Y_{s_x} Y_{s'_1}\dotsc Y_{s'_y}\right].
\end{aligned}
\end{equation}

Let us unpack the notation. Each term in $F$ takes the form 
\begin{equation}
    \pi\left[Y_{s_1}\cdots Y_{s_x} Y_{s'_1}\dotsc Y_{s'_y}\right],
\end{equation}
where $s_1,\dotsc,s_x \in S$ and $s'_1,\dotsc, s'_y \in S^c$. The sums over $x$ and $y$ specify how many terms in the product $Y_{s_1}\cdots Y_{s_x} Y_{s'_1}\dotsc Y_{s'_y}$ have subscripts in $S$ and $S^c$, respectively. The permutation $\pi \in \mathcal{S}_g$ reorders the positions of the Rademacher variables in the product to account for non-commutativity of the random variables. To avoid double counting, we exclude permutations that only reorder $Y_{s_1},\dotsc, Y_{s_x}$ among themselves and/or $Y_{s'_1},\dotsc, Y_{s'_y}$ among themselves, since such symmetries are already present in the summations over $s_1,\dotsc, s_x$ and $s'_1,\dotsc, s'_y$. Thus, we sum over the quotient set $\mathcal{S}_g\big/\mathcal{S}_x\times \mathcal{S}_y$, which removes redundant symmetries from $\mathcal{S}_g$. We also extend the definition of the coefficients $T$ so that they vanish whenever their subscript $\pi\left[\bm{\pr}_{1} (s_1,\dotsc, s_x, s'_1,\dotsc, s'_y)\right]$ does not correspond to a nonzero term in $F$.

To match the subscripts of coefficients $T_{(\cdots)}$, we introduce the projection from a direct product to the first set $\pr_{1}: A\times B \to A$ defined by 
\begin{equation}
    \pr_1(i,j) = i \in A.
\end{equation}
The bold symbol $\bm{\pr}_{1}$ denotes a vector of projectors $(\pr_1, \dotsc, \pr_1)$ with $g$ entries. Hence,
\begin{equation}
   \bm{\pr}_{1}(s_1,\dotsc, s_x, s'_1,\dotsc, s'_y) = \left(\pr_1s_1,\dotsc, \pr_1s_x,\pr_1s'_1,\dotsc, \pr_1s'_y\right).
\end{equation}
Because we have separated the subscripts in parts that belong to $S$ and parts that belong to $S^c$, the action of the smoothing operator is simply
\begin{equation}
\label{eq: big_sum}
    [F]_S  = \sum_{x + y = g} \sum_{s_1,\dotsc, s_x \in S}\sum_{s'_1,\dotsc,s'_y \in S^c} \sum_{\pi \in \mathcal{S}_g\big/\mathcal{S}_x\times \mathcal{S}_y}T_{\pi\left[\bm{\pr}_{1}  (s_1,\dotsc, s_x, s'_1,\dotsc, s'_y)\right]}\pi\left[(1-\mathbb{E})_S\Big[Y_{s_1}\cdots Y_{s_x}\Big] \mathbb{E}_{S^c}\Big[Y_{s'_1}\dotsc Y_{s'_y}\Big]\right]
\end{equation}
Evaluating the above summations and combining it with Theorem \ref{thm:matpol}, we get the following result
\begin{restatable}{theo}{thmRPG}
\label{thm:RPG}
    For $p\geq 2$ and $C_p =p-1$, 
    \begin{equation}
        \opnorm{F}_{*,p}^2 \leq g^{g}C_p^g\sum_{\bm{w}}
        \Bigg( \sum_{\bm{v}}\sum_{\pi \in \mathcal{S}_g}\left|T_{\pi\left[\bm{\eta}(\bm{w} , 2\bm{v})\right]}\right|
        \prod_{i=1}^m \sigma_i^{w_i + 2v_i}\Bigg)^2\opnorm{I}_{*,p}^2,
    \end{equation}
    where $\bm{w},\bm{v}\in \{0,\dotsc,g\}^m$ are such that $|\bm{w}|+2|\bm{v}|=g$.\footnote{If $\bm{u} = (u_1, \dotsc, u_m) \in \Z_{\geq 0}^m$ is a vector then $|\bm{u}| := \sum_{i=1}^m |u_i|$.} The unary encoding $\bm{\eta}$ is defined as 
    \begin{equation}
    \label{eq:eta}
        \bm{\eta}(\bm{a}, \bm{b}) := (\underbrace{1,\dotsc,1}_{a_1\text{ times}}, \underbrace{2,\dotsc,2}_{a_2\text{ times}}, \dotsc, \underbrace{m,\dotsc,m}_{a_m\text{ times}},\underbrace{1,\dotsc,1}_{b_1\text{ times}}, \underbrace{2,\dotsc,2}_{b_2\text{ times}}, \dotsc, \underbrace{m,\dotsc,m}_{b_m\text{ times}})
    \end{equation}
    for $\bm{a},\bm{b}\in \{0,\dotsc,g\}^m.$
\end{restatable}

This is a significant result originally from the work of \cite[Theorem VII.3]{Anthony} with a slight variation of prefactors compared to their statement. However, we prove this result via a slightly different route by considering the resummation in equation \eqref{eq: big_sum}. For technical details, we refer the reader to Appendix~\ref{sec:RMPGC}. Intuitively, Theorem \ref{thm:RPG} upper bounds the square of the norm of a Gaussian random matrix polynomial $F$ by a sum of squares of its individual terms given by
\begin{equation}
   \sum_{\pi\in \mathcal{S}_g}\sum_{ \bm{v}: |\bm{w}|+2|\bm{v}|=g}\left|T_{\pi[\bm{\eta}(\bm{w},2\bm{v})]}\right| \prod_{i=1}^m \sigma_i^{w_i + 2v_i}.
\end{equation}
Therefore, this theorem provides us a square root improvement compared to triangle inequality for upper bounding the norm of a random matrix polynomial. Since the SYK model is Gaussian random, we will use this result to evaluate the expected Schatten norm of higher-order Trotter errors as we are going to see in the next section. 

\subsubsection{Remark for notations}
Most of our notation follows closely to that of \cite{Anthony}. In particular, the norms and error representations follow their conventions. Since Theorem \ref{thm:RPG} is a slight variant of Theorem VII.3 in \cite{Anthony}, we follow the same notation for the coefficients $T_{\pi(\cdots)}$ and the vector summations over $\bm{w},\bm{v}$ take over the role of $\bm{u},\bm{v}$ in the original statement. 

We also use our own notation apart from writing down local Hamiltonians. For instance, we use $[N]$ to denote the set $\{1,\dotsc, N\}$ for any positive integer $N$. Unless explicitly stated otherwise, all the vectors inside summations should be interpreted as non-negatively valued integer vectors. We shall introduce more notations as we progress through the following chapters. 

\section{First-order Trotter error of the SYK model}
\label{sec:FOE}
In this section, we will derive a bound for the first-order Trotter error of the SYK model, and estimate the corresponding gate complexity through the concentration inequality \eqref{eq: actual_concineq}. To derive the error bound, we use the following result in \cite[Lemma VI.2]{Anthony}: for $p\geq 2$ and $C_p = p-1$,
\begin{equation}
\label{eq:fogenerator}
    \opnorm{\mathcal{E}(t)}_{*,p}^2 \leq 2C_p \sum_{i=2}^{\Gamma}\left[4C_pt^2\sum_{j=1}^{i-1}\opnorm{[H_{\gamma(j)}, H_{\gamma(i)}]}_{*,p}^2 
    +\left(\sum_{j=1}^{i-1}\frac{t^2}{2}\opnorm{[H_{\gamma(j)},[H_{\gamma(j)},H_{\gamma(i)}]]}_{*,p}\right)^2\right],
\end{equation}
with an arbitrary choice of ordering map $\gamma$, see equation~\eqref{eq:labeling}. 

Although this result is derived in the context of qubit Hamiltonians, its proof generalizes to fermionic Hamiltonians such as the SYK model, with a slightly modified interpretation of the expected Schatten $p$-norm. Because the Majorana fermions are not normed objects, one could first transform them to qubit operators using, for instance, the Jordan--Wigner transformation. The Schatten norm is then taken over these qubit operators transformed from the Majoranas.

\subsection{Error bound}

Equation \eqref{eq:fogenerator} can be explicitly evaluated because we can directly compute the commutator using the commutation relation \eqref{eq:Majoranas} of Majorana fermions. From the generator, we can derive an error bound on the first-order Trotter error of the SYK model. We state our result below:

\begin{theo}
\label{thm:FOE}
    The first-order Trotter error of the SYK model is bounded by 
    \begin{align}
        \frac{\opnorm{e^{iHt} - S_1(t/r)^r}_{*,p}}{\opnorm{I}_{*,p}}
        &\leq \Delta_1(n,k,t,r,p),
    \end{align}
    where
    \begin{equation}
       \Delta_1(n,k,t,r,p):= 4\sqrt{2} p^2 \sigma^2 \sqrt{\binom{n}{k}Q(n,k)}\;t^2\left[\frac{1}{2r} 
    + \sigma \sqrt{Q(n,k)}\frac{t}{3r^2} \right].
    \end{equation}
    Here $\sigma^2$ is the variance of the coupling coefficients, see equation~\eqref{eq:mean and variance of J}, and $Q(n,k)$ is defined as
    \begin{equation}
        \label{eq:Q(n,k)}
        Q(n,k) := \begin{cases}
        \mathlarger{\sum}\limits_{s \text{ odd},\;s\geq\mu}^{k-1}\binom{n-k}{k-s}\binom{k}{s} & \text{if $k$ is even},\\[15pt]
        \mathlarger{\sum}\limits_{s \text{ even},\;s\geq\mu}^{k-1}\binom{n-k}{k-s}\binom{k}{s} & \text{if $k$ is odd},
        \end{cases}
    \end{equation}
    where $\mu := \max\{0, k - (n - k)\}$ is the minimal overlap of two hyperedges.
\end{theo}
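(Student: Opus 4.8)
The plan is to evaluate the right-hand side of the first-order generator bound~\eqref{eq:fogenerator} explicitly for the SYK Hamiltonian, using the Majorana anti-commutation relations~\eqref{eq:Majoranas}, and then assemble the single-round bound into the $r$-round bound by subadditivity of the error under composition. The key structural observation is that for two Majorana monomials $H_{\gamma(i)} = J_{\gamma(i)}(\chi_{\gamma(i)_1}\cdots\chi_{\gamma(i)_k})_{\mathrm{lex}}$ and $H_{\gamma(j)}$, the commutator $[H_{\gamma(i)},H_{\gamma(j)}]$ vanishes exactly when the two monomials commute, which (given the parity of $k$) depends only on the size $s=|\gamma(i)\cap\gamma(j)|$ of the overlap of the two hyperedges: two monomials of degree $k$ anti-commute iff $k^2-s$ is odd, i.e.\ iff $s$ has the opposite parity of $k$ (for even $k$, anti-commute iff $s$ odd; for odd $k$, iff $s$ even). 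When they do anti-commute, $[H_{\gamma(i)},H_{\gamma(j)}] = 2H_{\gamma(i)}H_{\gamma(j)}$, and since a product of distinct-up-to-cancellation Majoranas has operator norm equal to a fixed power of $1/\sqrt2$ times $|J_{\gamma(i)}J_{\gamma(j)}|$, the norm is controlled. Taking expected Schatten norms, each $\opnorm{[H_{\gamma(j)},H_{\gamma(i)}]}_*^2$ contributes on the order of $\sigma^2\,\sigma^2\opnorm{I}_*^2$ up to constants (using $\E(J^2)=\sigma^2$ and that the $J$'s are independent so $\E(J_{\gamma(i)}^2 J_{\gamma(j)}^2)=\sigma^4$), while the nested commutator $[H_{\gamma(j)},[H_{\gamma(j)},H_{\gamma(i)}]]$ is a degree-$3$ monomial in Majoranas (one hyperedge repeated), so $\opnorm{\cdot}_*$ scales like $|J_{\gamma(j)}|^2|J_{\gamma(i)}|$; taking expectations and using that $\E(|J|)\le\sigma$ and $\E(J^2)=\sigma^2$ gives a factor $\sigma^3$.

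Next I would count. For a fixed hyperedge $\eta\in E$, the number of hyperedges $\eta'\in E$ whose overlap with $\eta$ has a prescribed size $s$ is $\binom{k}{s}\binom{n-k}{k-s}$, and $s$ ranges over $\mu\le s\le k-1$ where $\mu=\max\{0,k-(n-k)\}$ is forced by pigeonhole when $n<2k$ (we exclude $s=k$, i.e.\ $\eta'=\eta$, since a term commutes with itself). Summing the appropriate-parity values of $s$ gives exactly $Q(n,k)$ in~\eqref{eq:Q(n,k)}: this is the number of hyperedges any given hyperedge anti-commutes with. So for each $i$, the inner sum $\sum_{j<i}\opnorm{[H_{\gamma(j)},H_{\gamma(i)}]}_*^2 \le Q(n,k)\cdot O(\sigma^4)\opnorm{I}_*^2$, and similarly the nested-commutator inner sum is bounded by $Q(n,k)\cdot O(\sigma^3)\opnorm{I}_*^2$ since for $[H_{\gamma(j)},[H_{\gamma(j)},H_{\gamma(i)}]]$ to be nonzero we again need $H_{\gamma(j)}$ and $H_{\gamma(i)}$ to anti-commute. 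Plugging these into~\eqref{eq:fogenerator} and summing $i$ from $2$ to $\Gamma=\binom{n}{k}$ gives
\begin{equation}
    \opnorm{\mathcal{E}(t)}_*^2 \le O\!\left(C_p^2\,\sigma^4\,\binom{n}{k}Q(n,k)\,t^2 + C_p\,\sigma^6\,\binom{n}{k}Q(n,k)^2\,t^4\right)\opnorm{I}_*^2,
\end{equation}
where $C_p=p-1\le p$. Taking square roots and using $\sqrt{a^2+b^2}\le a+b$ and $\sqrt{C_p}\le\sqrt{C_p}\le p$ (absorbing the $C_p$ factors into the stated $p^2$) yields $\opnorm{\mathcal{E}(t)}_*\le O\!\big(p^2\sigma^2\sqrt{\binom{n}{k}Q(n,k)}\,t\,[\,1 + \sigma\sqrt{Q(n,k)}\,t\,]\big)\opnorm{I}_*$, which is the per-round bound at time $t$. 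Finally, by the standard telescoping argument $e^{iHt}-S_1(t/r)^r = \sum_{m}e^{iH(r-m)t/r}\big(e^{iHt/r}-S_1(t/r)\big)S_1(t/r)^{m-1}$, and since $\opnorm{\cdot}_*$ is submultiplicative under composition with unitaries (both for the expected Schatten norm and for $\opnorm{\cdot}_{\fix,p}$, which is the point of working with $\opnorm{\cdot}_*$), one gets $\opnorm{e^{iHt}-S_1(t/r)^r}_* \le r\cdot\opnorm{\mathcal{E}(t/r)}_*$ evaluated with $\int\mathcal{E}$ over one round of length $t/r$; writing the relation~\eqref{eq:expT} at timescale $t/r$ and bounding the time-ordered integral gives the clean factor $1/(2r) + \sigma\sqrt{Q(n,k)}\,t/(3r^2)$ in $\Delta_1$, where the constants $\tfrac12$ and $\tfrac13$ come from $\int_0^1 \tau\,d\tau$ and $\int_0^1 \tau^2\,d\tau$ in integrating the two error-generator contributions.

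I expect the main obstacle to be the bookkeeping in the commutator sign/norm analysis: carefully justifying that $[H_{\gamma(i)},H_{\gamma(j)}]$ is nonzero precisely for the parity-$s$ hyperedge pairs counted by $Q(n,k)$, getting the Majorana-product norm powers of $1/\sqrt2$ right, and confirming that the nested commutator $[H_{\gamma(j)},[H_{\gamma(j)},H_{\gamma(i)}]]$ is controlled by the \emph{same} overlap-parity condition (so the same $Q(n,k)$ appears) rather than a different count. A secondary subtlety is handling the expectations over the Gaussian couplings correctly inside $\opnorm{\cdot}_*$ — in particular that $\opnorm{\cdot}_p$ pulls the $J$'s out as $(\E|\,\cdot\,|^p)^{1/p}$-type moments, so one must use $\|J\|_{L^p}\lesssim\sqrt p\,\sigma$ and track where these $\sqrt p$ factors land; they are exactly what upgrades the bare $C_p$ dependence to the stated $p^2$. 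Everything else — the hyperedge counting, the telescoping, the integral constants — is routine once these two points are nailed down.
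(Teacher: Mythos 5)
Your proposal follows the paper's proof essentially verbatim: start from the first-order error-generator bound~\eqref{eq:fogenerator}, use the Majorana anti-commutation parity (Lemma~\ref{lemma: anti_com}) to identify which $[H_{\gamma(j)},H_{\gamma(i)}]$ and $[H_{\gamma(j)},[H_{\gamma(j)},H_{\gamma(i)}]]$ survive, count the anti-commuting hyperedge pairs via $\binom{k}{s}\binom{n-k}{k-s}$ to obtain $Q(n,k)$, control the Gaussian couplings through their $L^p$ moments (the source of the $p^2$), and finally integrate the generator over one round of length $t/r$ and telescope over $r$ rounds. The minor imprecisions you flag (the $s$ vs.\ $s^2$ parity, the $1/\sqrt{2}$ normalization of the Majoranas, and writing $\int_0^1$ instead of $\int_0^{t/r}$) do not change the substance and are indeed the bookkeeping points the paper itself handles via the quantities $M_1$, $M_2$ and the Jordan--Wigner Pauli-string observation.
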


Proving the above result revolves around evaluating the commutators in equation \eqref{eq:fogenerator}. To do that, we use the following anti-commutation lemma for local terms in the SYK model. The proof of this lemma is presented in Appendix~\ref{apx:Majoranas}. The same result is also recorded in \cite{Garc_a_Garc_a_2018}.

\begin{restatable}{lemma}{lemmaanticom}
\label{lemma: anti_com}
Let $H = \sum_{i=1}^{\Gamma} H_{\gamma(i)}$ be the $k$-local SYK model with $\gamma: [\Gamma] \to E$ an arbitrary ordering of hyperedges. Then for all $1\leq i<j\leq \Gamma$,
\begin{equation}
    H_{\gamma(i)}H_{\gamma(j)} = (-1)^{k+m}H_{\gamma(j)}H_{\gamma(i)},
\end{equation}
where $m = |\gamma(i)\cap \gamma(j)|$ is the overlap between hyperedges $\gamma(i)$ and $\gamma(j).$
\end{restatable}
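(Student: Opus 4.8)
The plan is to prove Lemma~\ref{lemma: anti_com} by reducing the commutation relation between two SYK terms $H_{\gamma(i)} = J_{\gamma(i)} (\chi_{\gamma(i)_1}\cdots\chi_{\gamma(i)_k})_{\mathrm{lex}}$ and $H_{\gamma(j)}$ to a purely combinatorial bookkeeping of sign changes accrued while swapping two ordered products of Majorana generators past each other. Since the scalar coefficients $J_{\gamma(i)}, J_{\gamma(j)}$ commute freely, it suffices to track the sign produced by moving the block $\chi_{a_1}\cdots\chi_{a_k}$ (with $a_1 < \dots < a_k$, the lexicographically ordered Majoranas of hyperedge $\gamma(i)$) past the block $\chi_{c_1}\cdots\chi_{c_k}$ (for $\gamma(j)$). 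The two key algebraic facts are: (i) distinct Majoranas anticommute, $\chi_a\chi_b = -\chi_b\chi_a$ for $a\neq b$, from $\{\chi_a,\chi_b\}=\delta_{ab}$; and (ii) a repeated Majorana squares to a scalar, $\chi_a^2 = \tfrac12$, which commutes with everything. So every swap of two \emph{distinct} Majoranas contributes a factor $-1$, while ``swaps'' that land a Majorana next to its twin contribute nothing to the sign (they just produce the harmless scalar $\tfrac12$, which appears symmetrically on both sides and cancels).

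First I would set up the counting. To move $\chi_{c_1}\cdots\chi_{c_k}$ from the right of $\chi_{a_1}\cdots\chi_{a_k}$ to the left, each of the $k$ generators $\chi_{c_\ell}$ must be transported past all $k$ generators $\chi_{a_1},\dots,\chi_{a_k}$. Naively this is $k^2$ transpositions, each contributing $-1$, giving $(-1)^{k^2}$. But whenever $c_\ell$ equals some $a_{\ell'}$ — and there are exactly $m = |\gamma(i)\cap\gamma(j)|$ such coincidences — the pair $(\chi_{c_\ell}, \chi_{a_{\ell'}})$ does not pick up a sign when they are adjacent; more carefully, a shared index $a_{\ell'} = c_\ell$ fails to produce the $-1$ exactly once during the transport. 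The claim is that the shared indices collectively remove $m^2$ of the $k^2$ sign-flips, because each of the $m$ shared Majoranas in the left block is ``passed through'' once trivially by each of the $m$ shared Majoranas in the right block. Hence the net sign is $(-1)^{k^2 - m^2}$, which is what the lemma asserts. I would make this rigorous by induction on $m$, or by directly writing $H_{\gamma(i)}H_{\gamma(j)}$ as a signed reduced product over the symmetric difference $\gamma(i)\triangle\gamma(j)$ times a scalar $(\tfrac12)^m$, doing the same for $H_{\gamma(j)}H_{\gamma(i)}$, and comparing the two reduced products — they agree up to the accumulated sign, and the scalar $(\tfrac12)^m$ is identical on both sides.

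A clean alternative that I might actually prefer for the writeup: let $P = \chi_{a_1}\cdots\chi_{a_k}$ and $Q = \chi_{c_1}\cdots\chi_{c_k}$ as elements of $\mathrm{Cl}(\R^{n/2})$, and observe that for any two \emph{monomials} in distinct generators, $P$ and $Q$ satisfy $PQ = (-1)^{|P||Q| - (\text{shared})^2}\,QP$ or more precisely that one can compute $PQ/QP$ by counting, for the $|P|\cdot|Q|$ ordered pairs $(\chi_a, \chi_c)$ with $\chi_a$ from $P$ and $\chi_c$ from $Q$, how many have $a \neq c$: that parity is $|P||Q| - m = k^2 - m$ when $m$ indices are shared, so one might first guess $(-1)^{k^2-m}$. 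I would then need to be careful: the correct exponent turns out to be $k^2 - m^2$, not $k^2 - m$, and reconciling this requires noting that the shared generators can be ``cancelled first'' — commute $P$ and $Q$ so that matching pairs become adjacent and collapse to scalars — which costs a sign $(-1)^{\text{something}}$, and then the remaining disjoint blocks (of sizes $k-m$ each) commute with sign $(-1)^{(k-m)^2}$. Tracking the sign of the rearrangement that brings the $m$ matched pairs together, on both orderings $PQ$ and $QP$, and taking the ratio, is exactly where the $m^2$ emerges.

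The main obstacle is precisely this sign bookkeeping: getting the exponent to come out as $k^2 - m^2$ rather than $k^2 - m$ or $k^2$, and being honest about the fact that bringing matched Majoranas together involves moving them past \emph{other} shared Majoranas as well as disjoint ones. The safest route is a careful double-counting argument: write the sign of $PQ$ relative to the ``fully sorted'' product of all generators of $\gamma(i)\cup\gamma(j)$ (with multiplicity), do the same for $QP$, and subtract the two inversion counts; the shared indices appear with multiplicity two and contribute an extra $\binom{2}{2}$-type correction per shared index in a way that sums to $m^2$ in the difference. I would present this as: (1) reduce to the Clifford-monomial statement, stripping scalars; (2) express both products in a common sorted normal form; (3) count inversions mod $2$ for each; (4) take the difference and simplify to $k^2 - m^2$; (5) restore the scalars and coefficients, noting they match on both sides. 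Steps (3)–(4) are the crux and the only place genuine care is needed.
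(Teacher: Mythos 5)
Your first-paragraph heuristic is not a valid argument. In the naive transport of $Q = \chi_{c_1}\cdots\chi_{c_k}$ leftward past $P = \chi_{a_1}\cdots\chi_{a_k}$, the ordered pairs $(a_{\ell'},c_\ell)$ that fail to contribute a $(-1)$ are exactly those with $a_{\ell'} = c_\ell$, of which there are $m$, not $m^2$; that count gives $(-1)^{k^2-m}$, which (as you yourself note in your second paragraph) is the wrong exponent. Moreover the transport picture breaks down at a coincidence: the pair merges to the scalar $\frac{1}{2}$ and there is no longer a $\chi_{c_\ell}$ to keep transporting, so the naive ``$k^2$ passes'' is not even a coherent computation of the ratio $PQ/QP$.

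Your second route is the right one and is, modulo presentation, the paper's proof. The paper re-sorts $P$ and $Q$ individually so that the $m$ shared generators sit at the right end of each block; crucially, the re-sorting permutations $\pi_1,\pi_2$ contribute the same factor $\sigma(\pi_1)\sigma(\pi_2)$ to both $H_{\alpha}H_{\beta}$ and $H_{\beta}H_{\alpha}$ and hence drop out of the comparison. The remaining swap of $\tilde{P}\tilde{Q}$ to $\tilde{Q}\tilde{P}$ is then a bare anticommutation count among distinct generators: move the $k-m$ non-shared elements of $\tilde{Q}$ past all $k$ elements of $\tilde{P}$ (sign $(-1)^{(k-m)k}$), then move the block of $m$ shared elements past the $k-m$ non-shared elements of $P$ (sign $(-1)^{m(k-m)}$), giving $(-1)^{(k-m)(k+m)} = (-1)^{k^2-m^2}$. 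Your disjoint-block count $(-1)^{(k-m)^2}$ also lands in the right place, since $(k-m)^2 \equiv k^2-m^2 \pmod 2$; but to make your version watertight you would still have to show that the step of ``bringing matched pairs together'' contributes the same sign to $PQ$ and to $QP$ --- which is exactly what the paper's symmetric pre-sorting of $P$ and $Q$ (rather than a post-hoc matching inside $PQ$) accomplishes for free. As written, the proposal identifies the correct plan but leaves the decisive sign count uncomputed.
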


The above lemma leads to the following commutation relations if $k$ is even:
\begin{align}
    [H_{\gamma(i)},H_{\gamma(j)}] &=
    \begin{cases}
        0 & \text{if }|\gamma(i)\cap\gamma(j)| \text{ is even},\\
        2H_{\gamma(i)}H_{\gamma(j)} & \text{else},
    \end{cases} \\
    [H_{\gamma(i)}, [H_{\gamma(i)},H_{\gamma(j)}]] &=
    \begin{cases}
        0 & \text{if }|\gamma(i)\cap\gamma(j)| \text{ is even},\\
        4H_{\gamma(i)}^2H_{\gamma(j)} & \text{else},
    \end{cases}
\end{align}
and if $k$ is odd:
\begin{align}
    [H_{\gamma(i)},H_{\gamma(j)}] &=
    \begin{cases}
        0 & \text{if }|\gamma(i)\cap\gamma(j)| \text{ is odd},\\
        2H_{\gamma(i)}H_{\gamma(j)} & \text{else},
    \end{cases} \\
    [H_{\gamma(i)}, [H_{\gamma(i)},H_{\gamma(j)}]] &=
    \begin{cases}
        0 & \text{if }|\gamma(i)\cap\gamma(j)| \text{ is odd},\\
        4H_{\gamma(i)}^2H_{\gamma(j)} & \text{else}.
    \end{cases}
\end{align}

\begin{proof}[Proof (of Theorem~\ref{thm:FOE})]
Note from the above expressions that if $2\leq i\leq \Gamma$ and $k$ is even then
\begin{align}
    \sum_{j=1}^{i-1}\opnorm{[H_{\gamma(j)}, H_{\gamma(i)}]}_{*,p}^2
     &\leq \sum_{\substack{j=1\\j\neq i}}^{\Gamma} \opnorm{[H_{\gamma(j)}, H_{\gamma(i)}]}_{*,p}^2\\
    &\leq 4M_1^2\left\{\sum_{\substack{j=1\\
    |\gamma(j)\cap \gamma(i)| = 1}}^{\Gamma} 1 +\sum_{\substack{j=1\\
    |\gamma(j)\cap \gamma(i)| = 3}}^{\Gamma} 1 +\cdots + \sum_{\substack{j=1\\
    |\gamma(j)\cap \gamma(i)| = k-1}}^{\Gamma}1\right\},
\end{align}
while if $k$ is odd then
\begin{align}
    \sum_{j=1}^{i-1}\opnorm{[H_{\gamma(j)}, H_{\gamma(i)}]}_{*,p}^2 &\leq 4M_1^2\left\{\sum_{\substack{j=1\\
    |\gamma(j)\cap \gamma(i)| = 0}}^{\Gamma} 1 +\sum_{\substack{j=1\\
    |\gamma(j)\cap \gamma(i)| = 2}}^{\Gamma} 1 +\cdots + \sum_{\substack{j=1\\
    |\gamma(j)\cap \gamma(i)| = k-1}}^{\Gamma}1\right\},
\end{align}
with 
\begin{equation}
    M_1 := \frac{1}{\opnorm{I}_{*,p}}\max\left\{\opnorm{H_{\gamma(k)}H_{\gamma(l)}}_{*,p}: k,l \in [\Gamma], k \neq l\right\}.
\end{equation}

Observe that for any $0\leq s\leq k-1$,
\begin{equation}
    \sum_{\substack{j=1\\
    |\gamma(j)\cap \gamma(i)| = s}}^{\Gamma}1 = \binom{n-k}{k-s} \binom{k}{s},
\end{equation}
since we can construct the hyperedge $\gamma(j)$ in two steps: first we choose $k-s$ elements out of $n-k$ that do not belong to $\gamma(i)$, and then we choose $s$ additional elements from $\gamma(i)$. Therefore,
\begin{equation}
    \sum_{j=1}^{i-1}\opnorm{[H_{\gamma(j)}, H_{\gamma(i)}]}_{*,p}^2\leq 4M_1^2Q(n,k)
\end{equation}
with $Q(n,k)$ defined in equation~\eqref{eq:Q(n,k)}.\footnote{The same $Q(n,k)$ turns out to be important also in the higher-order case since it captures the commutation relation of the Majoranas in the SYK model.}
Similarly, observe that for $2\leq i \leq \Gamma$,
\begin{equation}
    \left(\sum_{j=1}^{i-1}\frac{\tau^2}{2}\opnorm{[H_{\gamma(j)},[H_{\gamma(j)},H_{\gamma(i)}]]}_{*,p}\right)^2
    \leq (2\tau^2M_2Q(n,k))^2,
\end{equation}
where 
\begin{equation}
    M_2 := \frac{1}{\opnorm{I}_{*,p}}\max\left\{\opnorm{H_{\gamma(k)}^2H_{\gamma(l)}}_{*,p}: k,l \in [\Gamma], k \neq l\right\}.
\end{equation} 
Putting all of the above together,
\begin{equation}
    \label{eq:Etau}
    \opnorm{\mathcal{E}(\tau)}_{*,p}^2 \leq 2\opnorm{I}_{*,p}^2 C_p \sum_{i=2}^{\Gamma}\left[16C_p\tau^2 M_1^2Q(n,k)
    +4\tau^4 M_2^2Q(n,k)^2\right].
\end{equation}

It remains to upper bound the constants $M_1$ and $M_2$. When we map the SYK model to a qubit Hamiltonian through, for instance, the Jordan--Wigner transformation, each local term of the SYK model can be written as an i.i.d.\@ standard Gaussian coefficient multiplied by a deterministic Pauli string:
\begin{equation}
    H_{\gamma(i)} \mapsto i^{k(k-1)/2} J_{\gamma(i)}\sigma_{\gamma(i)}.
\end{equation}
For every Pauli string, its $*$-norm is exactly equal to $\opnorm{I}_{*,p}$ by unitarity. Therefore,
\begin{align}
    \label{eq: M1}
    M_1 &= \max\left\{\of[\Big]{\E\of[\big]{|J_{\gamma(k)}J_{\gamma(l)}|^p}}^{\frac{1}{p}}: k,l \in [\Gamma], k \neq l\right\}, \\
    \label{eq: M2}
    M_2 &= \max\left\{\of[\Big]{\E\of[\big]{|J_{\gamma(k)}^2J_{\gamma(l)}|^p}}^{\frac{1}{p}}: k,l \in [\Gamma], k \neq l\right\}.
\end{align}
Since all $J_{\gamma(i)}$ are i.i.d., let $\tilde{J}$ denote a Gaussian coefficient as in equation~\eqref{eq:mean and variance of J}. Then $M_1$ and $M_2$ can be bounded using central absolute moments of the Gaussian distribution $\tilde{J}$:
\begin{align}
    M_1 &= (\E\of{|\tilde{J}|^{p}})^{1/p} \cdot (\E\of{|\tilde{J}|^{p}})^{1/p} \leq \of[\big]{\sigma^p(p-1)!!}^{2/p} \leq \sigma^2 p, \\
    M_2 &= (\E\of{|\tilde{J}|^{2p}})^{1/p} \cdot (\E\of{|\tilde{J}|^{p}})^{1/p} \leq \of[\big]{\sigma^{2p}(2p - 1)!!}^{1/p} \cdot \of[\big]{\sigma^p(p-1)!!}^{1/p} \leq \sigma^2(2p) \cdot \sigma \sqrt{p},
\end{align}
where $\sigma = \mathcal{J}\sqrt{\frac{(k-1)!}{kn^{k-1}}}$ is the standard deviation of $\tilde{J}$. 

Inserting this back into equation~\eqref{eq:Etau},
\begin{align}
    \opnorm{\mathcal{E}(\tau)}_{*,p}^2 &\leq 2\opnorm{I}_{*,p}^2 C_p \sum_{i=2}^{\Gamma}\left[16C_p\tau^2 \sigma^4 p^2Q(n,k)
    +16\tau^4 \sigma^6 p^3 Q(n,k)^2\right]\\
    &\leq 32\opnorm{I}_{*,p}^2 p^4 \binom{n}{k}\left[\tau^2 \sigma^4 Q(n,k)
    +\tau^4 \sigma^6 Q(n,k)^2\right]
\end{align}
where in the second inequality we used $C_p < p$ to pull all the factors of $p$ to the front, as well as $\Gamma = \binom{n}{k}$ to remove the sum. 

Hence, by subadditivity of the square root,
\begin{equation}
    \opnorm{\mathcal{E}(\tau)}_{*,p} \leq 4\sqrt{2}\opnorm{I}_{*,p} p^2 \sigma^2 \sqrt{\binom{n}{k}Q(n,k)}\left[\tau 
    +\tau^2 \sigma \sqrt{Q(n,k)}\right].
\end{equation}

If we divide the total evolution time $t$ into $r$ rounds, where each round has duration $t/r$, the error per round can be bounded using the triangle inequality:
\begin{align}
    \opnorm{e^{iHt/r} - S_1(t/r)}_{*,p} &\leq \int_0^{t/r}\opnorm{\mathcal{E(\tau)}}_{*,p}d\tau\\
    &\leq 4\sqrt{2}\opnorm{I}_{*,p} p^2 \sigma^2 \sqrt{\binom{n}{k}Q(n,k)}\left[\frac{t^2}{2r^2} 
    + \sigma \sqrt{Q(n,k)}\frac{t^3}{3r^3} \right].
\end{align}
The total first-order Trotter error is then bounded by 
\begin{align}
    \opnorm{e^{iHt} - S_1(t/r)^r}_{*,p} &\leq r\cdot \opnorm{e^{iHt/r} - S_1(t/r)}_{*,p}\\
    &\leq 4\sqrt{2}\opnorm{I}_{*,p} p^2 \sigma^2 \sqrt{\binom{n}{k}Q(n,k)}\;t^2\left[\frac{1}{2r} 
    + \sigma \sqrt{Q(n,k)}\frac{t}{3r^2} \right]
\end{align}
which concludes the proof.
\end{proof}

\subsection{Gate complexity}\label{subsec:comp}
Now that we have a bound on the first-order Trotter error, we can bound the associated gate complexity. 

\begin{corr}
When simulating the $k$-local SYK model \eqref{eq:SYKdef} with $r$ rounds of the first-order product formula,
the error bound
\begin{equation}
    \label{eq:concineq again}
    \Prob\left(\norm{e^{iHt} - S_1(t/r)^r} \geq \epsilon \right) < \delta
\end{equation}
can be ensured with gate complexity
\begin{equation}
    G = \tilde{\mathcal{O}}\left(n^{g(k)}\left(n + \log(e^2/\delta)\right)^2 \frac{\mathcal{J}^2t^2}{\epsilon} + n^{g(k) - \frac{1}{4}}\left(n + \log(e^2/\delta)\right) \frac{\mathcal{J}^{\frac{3}{2}}t^{\frac{3}{2}}}{\sqrt{\epsilon}}\right),
\end{equation}
where $\mathcal{J}$ controls the variance of the coupling coefficients, see equation~\eqref{eq:mean and variance of J}, and
\begin{equation}
    \label{eq:g}
    g(k) := \begin{cases}
        k + \frac{1}{2} & \text{if $k$ is even},\\
        k + 1& \text{if $k$ is odd}.
    \end{cases}
\end{equation}
In terms of $n$, the gate complexity $G$ scales as 
\begin{equation}
    \label{eq:G}
   G \sim \begin{cases}
        n^{k + \frac{5}{2}}\mathcal{J}^2 t^2 & \text{if $k$ is even},\\
        n^{k + 3} \mathcal{J}^{2}t^2 & \text{if $k$ is odd}.
    \end{cases}
\end{equation}
\end{corr}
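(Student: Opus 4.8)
The plan is to chain Theorem~\ref{thm:FOE} with the concentration inequality of Lemma~\ref{lem:concineq}: first cast the error bound $\Delta_1$ into the form required by the lemma, then solve the resulting inequality for the Trotter number $r$, and finally convert $r$ into a gate count via $G=\mathcal{O}(\Upsilon\Gamma r)$. For the first step, note that Theorem~\ref{thm:FOE} gives $\opnorm{e^{iHt}-S_1(t/r)^r}_p \le \Delta_1(n,k,t,r,p)\,\opnorm{I}_p$ for all $p\ge 2$, and $\Delta_1$ carries an explicit prefactor $p^2$. Hence I set $\lambda(p,r):=\Delta_1(n,k,t,r,p)/p = 4\sqrt{2}\,p\,\sigma^2\sqrt{\binom{n}{k}Q(n,k)}\,t^2\bigl[\tfrac{1}{2r}+\sigma\sqrt{Q(n,k)}\tfrac{t}{3r^2}\bigr]$, which is a positive function equal to a positive power of $p$ times a sum of negative powers of $r$, exactly as Lemma~\ref{lem:concineq} demands.

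Next I evaluate the threshold. Under the Jordan--Wigner representation the underlying Hilbert space has dimension $D=2^{n/2}$, so $\opnorm{I}_p^p = D$ and the relevant moment order is $p^\star:=\log(e^2 D/\delta)=\Theta\bigl(n+\log(1/\delta)\bigr)$, consistent with the factor $(n+\log(e^2/\delta))$ in the statement. Lemma~\ref{lem:concineq} then says it suffices to choose $r$ with $\lambda(p^\star,r)\le \epsilon/(e\,p^\star)$. Since $\lambda(p^\star,r)=A/r + B/r^2$ with $A=\Theta\bigl(p^\star\sigma^2\sqrt{\binom{n}{k}Q}\,t^2\bigr)$ and $B=\Theta\bigl(p^\star\sigma^3\sqrt{\binom{n}{k}Q}\sqrt{Q}\,t^3\bigr)$, I split the budget $\epsilon/(e p^\star)$ into two halves, require $A/r$ and $B/r^2$ to each be at most one half, and take $r$ to be the sum of the two resulting lower bounds. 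This yields an explicit
\begin{equation}
    r = \mathcal{O}\!\left(\frac{(p^\star)^2\sigma^2\sqrt{\binom{n}{k}Q(n,k)}\,t^2}{\epsilon} + \frac{p^\star\,\sigma^{3/2}\sqrt{Q(n,k)}\,\binom{n}{k}^{1/4}t^{3/2}}{\sqrt\epsilon}\right).
\end{equation}

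For the gate complexity, the first-order formula has $\Upsilon=\mathcal{O}(1)$ and $\Gamma=\binom{n}{k}=\Theta(n^k)$, so $G=\mathcal{O}(\Gamma r)$ up to the suppressed $\log n$ fermion-to-qubit overhead. Substituting $\sigma^2=\tfrac{(k-1)!\,\mathcal{J}^2}{k\,n^{k-1}}=\Theta(\mathcal{J}^2 n^{-(k-1)})$ and the asymptotics of $Q(n,k)$ finishes the estimate. The combinatorial input needed is that for fixed $k$ and large $n$ we have $\mu=\max\{0,k-(n-k)\}=0$, so the sum defining $Q(n,k)$ is dominated by its term with the smallest parity-admissible $s$ (because $\binom{n-k}{k-s}$ grows as $s$ decreases while $k-s\le(n-k)/2$): this gives $Q(n,k)=\Theta(n^{k-1})$ for even $k$ (the $s=1$ term $\binom{n-k}{k-1}\binom{k}{1}$) and $Q(n,k)=\Theta(n^k)$ for odd $k$ (the $s=0$ term $\binom{n-k}{k}$). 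Collecting the powers of $n$ through the square roots and fractional powers in $r$ then produces exactly the exponent $g(k)$ of the statement, and treating $\mathcal{J},t,\epsilon,\delta$ as constants (so $p^\star=\Theta(n)$ and the first term dominates the second) collapses $G$ to $n^{k+5/2}\mathcal{J}^2 t^2$ for even $k$ and $n^{k+3}\mathcal{J}^2 t^2$ for odd $k$.

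The main obstacle is purely the exponent bookkeeping: combining the $n$-powers contributed by $\sigma^2$, $Q(n,k)$, $\binom{n}{k}$, and $p^\star$ through the nested square roots and the $t^{3/2}/\sqrt\epsilon$ branch of $r$, and verifying in each parity class which term of $Q(n,k)$ dominates (including checking $\mu=0$ so that $s=1$, resp.\ $s=0$, is admissible). Everything else is a direct substitution into Lemma~\ref{lem:concineq} and the identity $G=\mathcal{O}(\Upsilon\Gamma r)$.
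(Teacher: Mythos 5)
Your proposal is correct and follows essentially the same route as the paper: bound the Trotter error via Theorem~\ref{thm:FOE}, feed $\lambda(p,r)=\Delta_1/p$ into Lemma~\ref{lem:concineq} with $p^\star=\log(e^2 2^{n/2}/\delta)$, solve for $r$ by splitting the budget between the $1/r$ and $1/r^2$ terms, and multiply by $\Gamma=\binom{n}{k}$. The exponent bookkeeping, including $\sigma^2=\Theta(\mathcal{J}^2 n^{-(k-1)})$ and the parity-dependent asymptotics $Q(n,k)=\Theta(n^{k-1})$ (even $k$, dominant $s=1$ term) versus $Q(n,k)=\Theta(n^k)$ (odd $k$, dominant $s=0$ term), matches the paper's, and your explicit verification that $\mu=0$ for large $n$ and that the smallest parity-admissible $s$ dominates is a small but welcome elaboration of a step the paper only asserts.
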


\begin{proof}
By Theorem~\ref{thm:FOE}, the first-order Trotter error is bounded by
\begin{equation}
    \opnorm{e^{iHt} - S_1(t/r)^r}_p\leq p\opnorm{I}_p \lambda(p,r)
\end{equation}
where
\begin{equation}
    \lambda(p,r) := 4\sqrt{2} p \sigma^2 \sqrt{\binom{n}{k}Q(n,k)}\;t^2\left[\frac{1}{2r} 
    + \sigma \sqrt{Q(n,k)}\frac{t}{3r^2} \right].
\end{equation}
This gives us an explicit expression for the concentration inequality~\eqref{eq: actual_concineq}.
By Lemma~\ref{lem:concineq}, finding a Trotter number $r$ that ensures the error bound \eqref{eq:concineq again} is equivalent to finding $r$ that solves the inequality
\begin{equation}
    e4\sqrt{2} p_D^2 \cdot \sigma^2 \sqrt{\binom{n}{k}Q(n,k)}\;\frac{t^2}{\epsilon}\left[\frac{1}{2r} 
    + \sigma \sqrt{Q(n,k)}\frac{t}{3r^2} \right] \leq 1
\end{equation}
where $p_D := \log(e^2D/\delta)$ with $D := \opnorm{I}_p^p$.

To ensure the error bound \eqref{eq:concineq again}, we need
\begin{equation}
   r \geq e4\sqrt{2}p_D^2 \cdot \sigma^2 \sqrt{\binom{n}{k}Q(n,k)}\;\frac{t^2}{2\epsilon} + \left(e4\sqrt{2}p_D^2 \cdot \sigma^3 \sqrt{\binom{n}{k}}Q(n,k) \frac{t^3}{3\epsilon}\right)^{\frac{1}{2}}.
\end{equation}
Recall from Section~\ref{sec:rep and norm} that for (even) $n$ Majoranas $\opnorm{I}_p$ denotes the Schatten $p$-norm on $\mathrm{GL}(2^{n/2}, \C)$, hence $D=\opnorm{I}_p^p = 2^{n/2}$ and
$p_D = \frac{n}{2} \log 2 + \log(e^2/\delta)$.

Recall from Section~\ref{sec:gate complexity} that the gate complexity scales as $G = \tilde{\mathcal{O}}(\Upsilon \Gamma r)$, where $\Upsilon = \mathcal{O}(1)$ for first-order formulas, $\Gamma = \binom{n}{k}$ is the number of terms, and $r$ is the number of rounds.
Hence,
\begin{align}
    G 
    &=\tilde{\mathcal{O}}\left(\binom{n}{k}\cdot \left[e4\sqrt{2}\left(\frac{\log2}{2}n + \log(e^2/\delta)\right)^2 \sigma^2 \sqrt{\binom{n}{k}Q(n,k)}\;\frac{t^2}{2\epsilon} \right.\right.\\
    &\quad\quad\quad \quad + \left.\left. \left(e4\sqrt{2}\left(\frac{\log2}{2}n + \log(e^2/\delta)\right)^2 \sigma^3 \sqrt{\binom{n}{k}}Q(n,k) \frac{t^3}{3\epsilon}\right)^{\frac{1}{2}}\right]\right).\nonumber
\end{align}
To analyze the asymptotic scaling of $G$ for fixed $k$, note that $\binom{n}{k} = \mathcal{O}(n^k)$. Moreover, note from equation~\eqref{eq:Q(n,k)} that $Q(n,k) = \mathcal{O}(n^{k-1})$ for even $k$ while $Q(n,k) = \mathcal{O}(n^k)$ for odd $k$.
Finally, note from equation~\eqref{eq:mean and variance of J} that $\sigma = \mathcal{O}(\mathcal{J} / \sqrt{n^{k-1}})$.
Putting this all together,
\begin{equation}
\label{eq: GFO}
    G = \tilde{\mathcal{O}}\left(n^{g(k)}\left(n + \log(e^2/\delta)\right)^2 \frac{\mathcal{J}^2t^2}{\epsilon} + n^{g(k) - \frac{1}{4}}\left(n + \log(e^2/\delta)\right) \frac{\mathcal{J}^{\frac{3}{2}}t^{\frac{3}{2}}}{\sqrt{\epsilon}}\right)
\end{equation}
where $g(k)$ is defined in equation~\eqref{eq:g}.
The asymptotic scaling of $G$ is as stated in equation~\eqref{eq:G} since the first term in \eqref{eq: GFO} is dominant.
\end{proof}

Alternatively, we can consider the error bound in equation \eqref{eq:concineqpsi} for a fixed input state $\ket{\psi}.$ In that case, we use Lemma \ref{lem:concineqpsi} instead of Lemma \ref{lem:concineq}, which leads us to the following result:\footnote{Using Lemma \ref{lem:concineqpsi} is ultimately the same as replacing the factors of $n + \log(e^2/\delta)$ with $\log(e^2/\delta)$.}

\begin{corr}
When simulating the $k$-local SYK model \eqref{eq:SYKdef} with $r$ rounds of the first-order product formula, the error bound
\begin{equation}
    \Prob\left(\norm{(e^{iHt} - S_1(t/r)^r)\ket{\psi}}_{l^2} \geq \epsilon \right) < \delta
\end{equation}
for an arbitrary input state $\ket{\psi}$ can be ensured with gate complexity
\begin{equation}
    G = \tilde{\mathcal{O}}\left(n^{g(k)}\left( \log(e^2/\delta)\right)^2 \frac{\mathcal{J}^2t^2}{\epsilon} + n^{g(k) - \frac{1}{4}}\left(\log(e^2/\delta)\right) \frac{\mathcal{J}^{\frac{3}{2}}t^{\frac{3}{2}}}{\sqrt{\epsilon}}\right),
\end{equation}
where $\mathcal{J}$ controls the variance of the coupling coefficients, see equation~\eqref{eq:mean and variance of J}, and $g(k)$ is defined in equation~\eqref{eq:g}.
In terms of $n$, the gate complexity $G$ scales as 
\begin{equation}
   G \sim \begin{cases}
        n^{k + \frac{1}{2}}\mathcal{J}^2 t^2 & \text{if $k$ is even},\\
        n^{k + 1} \mathcal{J}^{2}t^2 & \text{if $k$ is odd}.
    \end{cases}
\end{equation}
\end{corr}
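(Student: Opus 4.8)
The plan is to follow the proof of the preceding corollary almost verbatim, replacing the concentration inequality of Lemma~\ref{lem:concineq} by its fixed-input-state counterpart, Lemma~\ref{lem:concineqpsi}. The only structural change is that the ``effective order'' no longer carries the Hilbert-space dimension: the previous proof used $p_D = \log(e^2 D/\delta)$ with $D = \opnorm{I}_p^p = 2^{n/2}$, hence $p_D = \tfrac{n}{2}\log 2 + \log(e^2/\delta) = \Theta(n)$, whereas here the relevant quantity is $p_\delta := \log(e^2/\delta)$, which is independent of $n$. This is exactly the substitution $n + \log(e^2/\delta) \mapsto \log(e^2/\delta)$ anticipated in the footnote preceding the statement.

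First I would note that Theorem~\ref{thm:FOE}, being stated for the placeholder norm $\opnorm{\cdot}_*$, holds in particular for $\opnorm{\cdot}_{\fix,p}$; since the identity maps every unit vector to a unit vector we have $\opnorm{I}_{\fix,p} = 1$, so Theorem~\ref{thm:FOE} gives $\opnorm{e^{iHt} - S_1(t/r)^r}_{\fix,p} \leq p\,\lambda(p,r)$ with
\[
  \lambda(p,r) := 4\sqrt{2}\, p\, \sigma^2 \sqrt{\binom{n}{k}Q(n,k)}\; t^2\!\left[\frac{1}{2r} + \sigma\sqrt{Q(n,k)}\,\frac{t}{3r^2}\right],
\]
which is the concentration inequality~\eqref{eq: actual_concineqpsi} with this explicit $\lambda$. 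I would then apply Lemma~\ref{lem:concineqpsi}: it suffices to choose a Trotter number $r$ obeying $\lambda(p_\delta,r)/\epsilon \leq 1/(e\,p_\delta)$, i.e.
\[
  e\,4\sqrt{2}\, p_\delta^2\, \sigma^2 \sqrt{\binom{n}{k}Q(n,k)}\;\frac{t^2}{\epsilon}\left[\frac{1}{2r} + \sigma\sqrt{Q(n,k)}\,\frac{t}{3r^2}\right] \leq 1,
\]
and, splitting the bracket and using subadditivity of the square root exactly as in the previous corollary, this is guaranteed by
\[
  r \geq e\,4\sqrt{2}\, p_\delta^2\, \sigma^2 \sqrt{\binom{n}{k}Q(n,k)}\;\frac{t^2}{2\epsilon} + \left(e\,4\sqrt{2}\, p_\delta^2\, \sigma^3 \sqrt{\binom{n}{k}}\,Q(n,k)\,\frac{t^3}{3\epsilon}\right)^{\!1/2}.
\]

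Finally I would feed this $r$ into $G = \mathcal{O}(\Upsilon\Gamma r)$ with $\Upsilon = \mathcal{O}(1)$ for first-order formulas and $\Gamma = \binom{n}{k}$, and read off the $n$-scaling from $\binom{n}{k} = \mathcal{O}(n^k)$, $Q(n,k) = \mathcal{O}(n^{k-1})$ for even $k$ and $\mathcal{O}(n^k)$ for odd $k$, and $\sigma = \mathcal{O}(\mathcal{J}/\sqrt{n^{k-1}})$, see equation~\eqref{eq:mean and variance of J}. The two contributions then scale as $n^{g(k)}$ and $n^{g(k)-1/4}$ respectively with $g(k)$ as in~\eqref{eq:g}; the first dominates, giving the stated bound, which is an $\mathcal{O}(n^2)$ improvement over the previous corollary precisely because the $n$-independent factor $p_\delta^2$ has replaced the $\Theta(n^2)$ factor $p_D^2$.

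There is no substantial obstacle here: the argument is a routine specialization of the previous corollary's proof. The only points requiring care are (i) recording that $\opnorm{I}_{\fix,p} = 1$, so that the normalization $\opnorm{I}_*$ in Theorem~\ref{thm:FOE} drops out and one may directly identify $\lambda$ as above, and (ii) the exponent bookkeeping in the final asymptotic step, which is identical to that carried out in the proof of the preceding corollary.
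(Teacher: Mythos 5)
Your proof is correct and follows exactly the route the paper intends: the paper does not write out a separate proof for this corollary but relegates it to a footnote noting that one simply applies Lemma~\ref{lem:concineqpsi} in place of Lemma~\ref{lem:concineq}, i.e.\ replaces $n + \log(e^2/\delta)$ by $\log(e^2/\delta)$. You have filled in the details faithfully, including the observation that $\opnorm{I}_{\fix,p} = 1$ and the exponent bookkeeping leading to the $\mathcal{O}(n^2)$ reduction.
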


\section{Higher-order Trotter error of the SYK model}
\label{sec:HOSYK}

In this section, we will extend our first-order Trotter error results from the Section~\ref{sec:FOE} to higher-order Trotter error. As our main tool we will use the uniform smoothness Theorem~\ref{thm:RPG} to evaluate the error generator as random matrix polynomials. Based on the error generator, we will then derive bounds for higher-order Trotter error, from which we will derive the gate complexity for simulating the SYK Hamiltonian using higher-order product formulas.

In equation~\eqref{eq:expT} in Section~\ref{subsec:Error Representation}, we introduced a general exponential representation of the Trotter error, where we wrote the product formula as
\begin{equation}
    \prod_{j=1}^J e^{ia_jH_{\gamma(b_j)}t} = \mathrm{exp}_{\mathcal{T}}\left(i\int(\mathcal{E}(t) + H)dt\right),
\end{equation}
where the error generator $\mathcal{E}(t)$ defined in equation~\eqref{eq:Et def} takes the form 
\begin{equation}
    \mathcal{E}(t) := \sum_{j=1}^J\left(\prod_{k=j+1}^J e^{a_k\Lag_{\gamma(b_k)}t}[a_jH_{\gamma(b_j)}] - a_jH_{\gamma(b_j)}\right).
\end{equation}
Here $\Lag_{\gamma(j)}[O] := i[H_{\gamma(j)}, O]$, see equation~\eqref{eq:Lj}, and $J=\Gamma \cdot \Upsilon$ with $\Gamma$ the number of terms in the $k$-local Hamiltonian and $\Upsilon$ the number of stages of the product formula.

Following the approach in \cite{Anthony}, we can do a Taylor expansion on the error generator $\mathcal{E}(t)$ and collect terms by the degree of product of local terms $H_{\gamma(1)},\dotsc, H_{\gamma(\Gamma)}$, which leads to
\begin{equation}
    \label{eq:E}
    \mathcal{E}(t) = \sum_{g=l+1}^{g'-1}\mathcal{E}_g(t) + \mathcal{E}_{\geq g'}(t),
\end{equation}
where $g'$ is some cutoff and the sum starts at $l+1$ since the $l$-th order product formula is exact up to order $l$. The $\mathcal{E}_g(t)$ term is the $g$-th order error generator given by 
\begin{equation}
    \mathcal{E}_g(t) := \sum_{j=1}^{J}\sum_{g_J + \dotsb + g_{j+1} = g-1}\Lag_{\gamma(J)}^{g_J} \cdots \Lag_{\gamma(j+1)}^{g_{j+1}}[H_{\gamma(j)}] \frac{t^{g-1}}{g_J!\dotsm g_{j+1}!},
\end{equation}
and $\mathcal{E}_{\geq g'}(t)$ is the residual term given by 
\begin{align}
    \label{eq:E rest}
    \mathcal{E}_{\geq g'}(t) &:= \sum_{j=1}^J\sum_{m=j+1}^Je^{\Lag_{\gamma(J)}t}\cdots e^{\Lag_{\gamma(m+1)}t}\\
    &\times \int_0^td\xi \sum_{g_m + \cdots +g_{j+1} = g'-1, g_m\geq 1}e^{\Lag_{\gamma(m)}\xi} \Lag_{\gamma(m)}^{g_m}\cdots \Lag_{\gamma(j+1)}^{g_{j+1}}[H_{\gamma(j)}]\frac{(t-\xi)^{g_m - 1} t^{g' - g_m -1}}{(g_m-1)! g_{m-1}! \cdots g_{j+1}!}.
    \nonumber
\end{align}

\subsection{Error bound}\label{sec:error bound}
We will proceed by separately evaluating the norms of $\mathcal{E}_g$ and $\mathcal{E}_{\geq g'}$ for the SYK model, and then collecting the results together. We will try to write everything down in terms of $Q(n,k)$ as defined in equation~\eqref{eq:Q(n,k)}. In the end, we get the following result:

\begin{theo}
\label{thm:SYKRPG}
     The (even) $l$-th order Trotter error of the SYK model for $l\geq 2$ is bounded by
     \begin{align}
         \frac{\opnorm{e^{iHt} - S_l(t/r)^r}_{*,p}}{\opnorm{I}_{*,p}} &\leq \Delta_{l}(n,k,t,r,p),
     \end{align}
     where 
     \begin{equation}
          \Delta_{l}:= \frac{\mathcal{C}(l)\sqrt{p} \sigma t}{\sqrt{Q(n,k)}} \left( \binom{n}{k} \left[\sqrt{p} \sigma \sqrt{Q(n,k)}\frac{t}{r}\right]^{l}  +  \binom{n}{k}^2\left[\sqrt{p} \sigma\sqrt{Q(n,k)}\frac{t}{r}\right]^{l+1} \right),     
     \end{equation}
     and $D(l)$ is a factor only depending on $l$.
\end{theo}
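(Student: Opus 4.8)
The plan is to mirror the structure of the first-order proof (Theorem~\ref{thm:FOE}), but with the uniform smoothing inequality (Theorem~\ref{thm:RPG}) doing the work of the direct commutator estimates. Starting from the exponential error representation~\eqref{eq:expT} and its Taylor expansion~\eqref{eq:E}, I would treat each $g$-th order error generator $\mathcal{E}_g(t)$ and the residual $\mathcal{E}_{\geq g'}(t)$ of~\eqref{eq:E rest} as a degree-$g$ random matrix polynomial in the $\Gamma = \binom{n}{k}$ local terms $H_{\gamma(1)},\dotsc,H_{\gamma(\Gamma)}$. After the Jordan--Wigner map each local term is $H_{\gamma(i)} = J_{\gamma(i)}\sigma_{\gamma(i)}$ with $J_{\gamma(i)}$ i.i.d.\ Gaussian of variance $\sigma^2$ (equation~\eqref{eq:mean and variance of J}) and $\sigma_{\gamma(i)}$ a Pauli string with $\opnorm{\sigma_{\gamma(i)}}_* = \opnorm{I}_*$, which is exactly the form $X_i = g_i K_i$ required by Theorem~\ref{thm:RPG} with all $\sigma_i = \sigma$; in particular the product $\prod_i \sigma_i^{w_i+2v_i}$ collapses to the single power $\sigma^g$.

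The heart of the argument is estimating the coefficient sum on the right-hand side of Theorem~\ref{thm:RPG} applied to $F = \mathcal{E}_g(\tau)$. Two ingredients feed in. First, the scalar coefficients $T_{\bm{i}}$ are products of the Suzuki weights $a_j$ from~\eqref{eq:compact_def} (all $|a_j| = \mathcal{O}(1)$, with $\Upsilon = 2\cdot 5^{l/2-1}$ stages so $J = \Upsilon\Gamma$) times $\tau^{g-1}$ times multinomial factors $1/(g_J!\dotsm g_{j+1}!)$, so every surviving $|T_{\bm{i}}|$ is at most $\mathcal{D}_g\,\tau^{g-1}$ for a constant $\mathcal{D}_g$ depending only on $g$. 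Second -- and this is where $Q(n,k)$ enters -- I would use Lemma~\ref{lemma: anti_com} and the induced commutation relations to argue that a monomial $H_{\gamma(i_g)}\cdots H_{\gamma(i_1)}$ contributes to $\mathcal{E}_g$ only if its hyperedges form a connected overlap pattern of the correct parity: each nested commutator $\Lag_{\gamma(b_J)}^{g_J}\cdots[H_{\gamma(b_j)}]$ vanishes unless every index with $g_a \geq 1$ overlaps the accumulated product with the right parity, and, since $[H,[H,\cdot\,]]$ returns a scalar multiple of the operator, repeated indices ($g_a \geq 2$) contribute only scalar factors without enlarging the support. Counting these patterns -- together with the deduplication into $\bm{w}$ and the sums over $\pi$ and $\bm{v}$ in Theorem~\ref{thm:RPG} -- should yield, after squaring and taking the square root, a bound of the form
\begin{equation}
    \opnorm{\mathcal{E}_g(\tau)}_* \;\leq\; \mathcal{D}_g\,\opnorm{I}_*\,\binom{n}{k}\,\bigl(\sqrt{C_p}\,\sigma\bigr)^g\, Q(n,k)^{\frac{g-2}{2}}\,\tau^{g-1},
\end{equation}
where the two factors of $\binom{n}{k}$ come from the essentially unconstrained innermost and outermost hyperedges and the $Q(n,k)^{(g-2)/2}$ from the parity-constrained intermediate overlaps (this is the power that, combined with $\binom{n}{k}$, reproduces the $n$-scaling of $\Delta_l$).

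With the per-order bounds in hand the rest is bookkeeping. The error per round of duration $t/r$ is controlled by $\int_0^{t/r}\opnorm{\mathcal{E}(\tau)}_*\,d\tau$, the integral of $\tau^{g-1}$ contributing $(t/r)^g/g$; the residual $\mathcal{E}_{\geq g'}$ is handled the same way, the ordered exponentials $e^{\Lag_{\gamma(b_a)}t}$ being conjugations by unitaries and hence $\opnorm{\cdot}_*$-isometries that drop out, leaving a geometric tail. Multiplying by the $r$ rounds and summing $g$ from $l+1$ upward gives a geometric-type series in the parameter $\binom{n}{k}\sqrt{C_p}\,\sigma\sqrt{Q(n,k)}\,t/r$; choosing the cutoff $g'$ large collapses the tail into the $g=l+1$ and $g=l+2$ contributions, which are precisely the two summands of $\Delta_l$, with $C_p = p-1 \leq p$ absorbing the $p$-powers and all $g$-dependent constants collecting into the single factor $\mathcal{D}(l)$. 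Passing to the normalized form is then immediate. I expect the main obstacle to be exactly the combinatorial step flagged above: verifying that the parity-constrained overlap structure together with the Suzuki coefficient combinatorics really limits the weighted coefficient sum so as to produce $\binom{n}{k}Q(n,k)^{(g-2)/2}$ rather than something larger (e.g.\ $\binom{n}{k}^{g/2}$), and keeping careful track of how repeated indices interact with the $\bm{w}$/$\pi$/$\bm{v}$ multiplicities in Theorem~\ref{thm:RPG}; the Gaussian moment bounds, the isometry property of the ordered exponentials, and the round-splitting are routine and parallel the proof of Theorem~\ref{thm:FOE}.
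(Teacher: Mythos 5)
Your proposal follows the same overall architecture as the paper's proof: symmetrize each $\mathcal{E}_g$ and the residual $\mathcal{E}_{\geq g'}$ into degree-$g$ Gaussian random matrix polynomials in the $\Gamma=\binom{n}{k}$ local terms, apply Theorem~\ref{thm:RPG}, then integrate over a round and multiply by $r$; your conjectured per-order bound $\opnorm{\mathcal{E}_g(\tau)}_* \lesssim \binom{n}{k}(\sqrt{C_p}\,\sigma)^g Q(n,k)^{(g-2)/2}\tau^{g-1}$ matches the paper exactly (cf.\ equation~\eqref{eq:E_g}).

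However, the step you explicitly flag as ``the main obstacle'' is a genuine gap, not a minor technical verification. Bounding the indicator-weighted coefficient sum that appears after applying Theorem~\ref{thm:RPG} is the entire nontrivial content of the argument, and it is established separately in the paper as Lemma~\ref{lem:Gw}: for fixed $w$, the inner sum
\begin{equation}
G_w(H) = \sum_{\bm{w}:|\bm{w}|=w}\biggl(\sum_{\pi\in\mathcal{S}_g}\sum_{\bm{v}:|\bm{w}|+2|\bm{v}|=g}\ind(\pi[\bm{\eta}(\bm{w}+2\bm{v})])\biggr)^{\!2}
\end{equation}
is bounded by $g^{2g-2}\,m^2\,Q_{\max}(H)^{g-2}$. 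The proof does \emph{not} proceed by analyzing overlap parity patterns of Majorana hyperedges directly as your sketch suggests. Instead it first applies H\"older's inequality ($1$--$\infty$) to split the square into two factors, then, for each factor, counts the number of nonvanishing commutator chains by building outward from the innermost commutator: the first pair is essentially free (at most $m\cdot Q_{\max}$ choices), and each subsequent non-repeated term must anti-commute with at least one of the $\leq g$ previously chosen terms (at most $g\cdot Q_{\max}$ choices), while repeated terms contribute no new choices. This is what produces exactly one factor of $m$ (not, say, $m^{g/2}$) and the power $Q_{\max}^{(g-2)/2}$ after the square root, and it uses only that local terms pairwise commute or anti-commute, not the specific SYK parity structure; the parity only enters through the value $Q_{\max}(H)=Q(n,k)$. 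Without some version of this counting argument your proposal asserts the bound rather than proving it.

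Two secondary points. First, you do not account for the extra factor of $\binom{n}{k}$ in the second summand of $\Delta_l$: in the residual $\mathcal{E}_{\geq g'}$ one cannot simply let the ordered exponentials drop out inside the norm, since they depend on the summation index $m$; the paper first reorders the double sum $\sum_{j}\sum_{m>j}\to\sum_m\sum_{j<m}$, pulls the exponentials out of the inner norm by unitary invariance, and the remaining outer sum over $m$ (of size $J=\Upsilon\binom{n}{k}$) supplies the extra $\binom{n}{k}$ and an extra $\Upsilon$. Your ``geometric tail'' description glosses over this and would leave the $\binom{n}{k}^2$ unexplained. Second, your reasoning that ``$[H,[H,\cdot\,]]$ returns a scalar multiple'' and therefore repeated indices ``contribute only scalar factors without enlarging the support'' is on the right track but is not how the paper disposes of repeated indices: in Theorem~\ref{thm:RPG} the squared variables are handled at the level of the Rademacher expansion (the $\bm{v}$-vector), and Lemma~\ref{lem:Gw} handles repeats by simply skipping them in the choice count. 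These are compatible intuitions, but a precise proof has to pick one and carry it through the $\bm{w}/\pi/\bm{v}$ bookkeeping.
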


The rest of Section~\ref{sec:error bound} constitutes a proof of this theorem.

\begin{proof}
When proving this result, we assume that $k$ is even, as the odd $k$ case follows the same argument. We shall first compute the $g$-th order error generator $\mathcal{E}_g$ and the tail bound $\mathcal{E}_{\geq g'}$ separately, and then join them together to obtain the total error. Our steps follow largely the approach in \cite{Anthony}, but with our own computation of coefficients specific to the SYK model.

\subsubsection{\texorpdfstring{$g$}{g}-th order error generator}

In this subsection, we will upper bound
\begin{equation}
    \label{eq:Eg norm}
    \opnorm{\mathcal{E}_g}_{*,p} = \opnorm[\Bigg]{\sum_{j=1}^{J}\sum_{g_J + \dotsb + g_{j+1} = g-1}\Lag_{\gamma(J)}^{g_J} \cdots \Lag_{\gamma(j+1)}^{g_{j+1}}[H_{\gamma(j)}] \frac{t^{g-1}}{g_J! \dotsm g_{j+1}!}}_{*,p}
\end{equation}
by using uniform smoothing in the form of Theorem~\ref{thm:RPG}.

As the summation indices are difficult to work with here, we consider the following symmetrization:
\begin{equation}
    \opnorm{\mathcal{E}_g}_{*,p} = t^{g-1} \opnorm[\Bigg]{\sum_{j_{g-1}=1}^J \cdots \sum_{j_0=1}^J \tilde{\theta}_{\bm{j}}\Lag_{\gamma(j_{g-1})}\cdots \Lag_{\gamma(j_1)}[H_{\gamma(j_0)}]}_{*,p},
\end{equation}
where the sum is over $\bm{j} = (j_{g-1}, \dotsc, j_0) \in [J]^g$, and we have included coefficients $\tilde{\theta}_{\bm{j}}\in [0,1]$ that absorb the factorials $\frac{1}{g_J! \dotsm g_{j+1}!}$ and reduce the range of the sum to match with equation \eqref{eq:Eg norm}. We will not write an explicit expression for these coefficients since later we will ignore them for simplicity anyway.
All that matters is that $\tilde{\theta}_{\bm{j}} \leq 1$ and they impose an extra summation constraint. As $J = \Upsilon\cdot \Gamma$, with $\Gamma = \binom{n}{k}$ the total number of local terms in the Hamiltonian, we have the following expression to bound:
\begin{equation}
    \label{eq:Eg}
    \opnorm{\mathcal{E}_g}_{*,p} \leq \Upsilon^gt^{g-1} \opnorm[\Bigg]{\sum_{j_{g-1}=1}^{\Gamma} \cdots \sum_{j_0=1}^{\Gamma} \tilde{\theta}_{\bm{j}}\Lag_{\gamma(j_{g-1})}\cdots \Lag_{\gamma(j_1)}[H_{\gamma(j_0)}]}_{*,p},
\end{equation}
where the number of stages $\Upsilon$ accounts for repeated values of $\gamma(\bm{j})$ thanks to the periodicity of the ordering map $\gamma$, see equation~\eqref{eq:labeling}.

We will think of the expression we are dealing with as a non-commutative polynomial $F_g$ in the local terms $H_i$ of the Hamiltonian. Following our discussion in Section~\ref{subsec:randommatpol}, we write it as
\begin{align}
    \label{eq:F}
    F_g(H_{\gamma(\Gamma)},\dotsc,H_{\gamma(1)}) &= \sum_{j_{g-1}=1}^{\Gamma} \cdots \sum_{j_0=1}^{\Gamma} \tilde{\theta}_{\bm{j}}\Lag_{\gamma(j_{g-1})}\cdots \Lag_{\gamma(j_1)}[H_{\gamma(j_0)}]\\ 
    &=: \sum_{j_{g-1}=1}^{\Gamma} \cdots \sum_{j_0=1}^{\Gamma} T_{\bm{j}} H_{\gamma(j_{g-1})}\cdots H_{\gamma(j_0)}.
    \label{eq:Tj def}
\end{align}

While the coefficients $T_{\bm{j}}$ can be calculated explicitly, we will not need to do that since for applying Theorem \ref{thm:RPG} it is sufficient to only have an upper bound on their norm.

Note that the commutator chain $\Lag_{\gamma(j_{g-1})}\cdots \Lag_{\gamma(j_1)}[H_{\gamma(j_0)}]$ is a polynomial in the local terms $H_{\gamma(j_{g-1})},\dotsc,H_{\gamma(j_0)}$, with an overall pre-factor of $i^g$: 
\begin{equation}
    \label{eq:com chain}
    \Lag_{\gamma(j_{g-1})} \cdots \Lag_{\gamma(j_1)}[H_{\gamma(j_0)}] = i^g \sum_{\bm{k} \in \bm{j}^g} c_{\bm{j},\bm{k}} H_{\gamma(k_g)}\cdots H_{\gamma(k_1)},
\end{equation}
where $c_{\bm{j},\bm{k}}$ are some coefficients and the sum runs over vectors $\bm{k} = (k_g, \dotsc, k_1)$ in $\bm{j}^g := (j_{g-1},\dotsc,j_0)^g$. As each term $H_{\gamma(k_g)}\cdots H_{\gamma(k_1)}$ emerges from a commutator, its coefficient $c_{\bm{j},\bm{k}}$ is either zero or $\pm 1$ (it is zero when $\bm{j} \neq \bm{k}$ as sets). Since the local terms $H_{\gamma(j_{g-1})},\dotsc,H_{\gamma(j_0)}$ are proportional to Pauli strings, each pair of them either commutes or anti-commutes. Hence, every term $H_{\gamma(k_g)}\cdots H_{\gamma(k_1)}$ with a non-zero $c_{\bm{j},\bm{k}}$ can be reordered to the product $H_{\gamma(j_{g-1})}\cdots H_{\gamma(j_0)}$ that matches the ordering $\bm{j}$ of the commutator chain in equation~\eqref{eq:com chain}, with an extra pre-factor $d_{\bm{j},\bm{k}}$:
\begin{equation}
    H_{\gamma(k_g)}\cdots H_{\gamma(k_1)} = d_{\bm{j},\bm{k}} H_{\gamma(j_{g-1})}\cdots H_{\gamma(j_0)}.
\end{equation} 
The pre-factor $d_{\bm{j},\bm{k}}$ is either $+1$ or $-1$ depending on the (anti-)commutation between the local terms. Therefore, 
\begin{equation}
    \Lag_{\gamma(j_{g-1})}\cdots \Lag_{\gamma(j_1)}[H_{\gamma(j_0)}] = i^g \left(\sum_{\bm{k} \in \bm{j}^g} c_{\bm{j},\bm{k}} d_{\bm{j},\bm{k}} \right) H_{\gamma(j_{g-1})} \cdots H_{\gamma(j_0)}.
\end{equation}
Substituting this in equation~\eqref{eq:F}, we can write the $T_{\bm{j}}$ introduced in equation~\eqref{eq:Tj def} as
\begin{equation}
\label{eq:Tj}
    T_{\bm{j}} = i^g \tilde{\theta}_{\bm{j}} \sum_{\bm{k} \in \bm{j}^g} c_{\bm{j},\bm{k}} d_{\bm{j},\bm{k}}.
\end{equation}

To give an upper bound of $T_{\bm{j}}$, notice that the sum in equation~\eqref{eq:com chain} vanishes if the commutator chain evaluates to zero. We can also remove the coefficients $\tilde{\theta}_{\bm{j}}$ from equation~\eqref{eq:Tj} since they are, by construction, at most one. Hence, by triangle inequality,
\begin{equation}
\label{eq:|T|}
    |T_{\bm{j}}| \leq  \left(\sum_{\bm{k} \in \bm{j}^g}1\right)\cdot \ind(\bm{j}) = g^g \ind(\bm{j}),
\end{equation}
where the indicator function $\ind(\bm{j})$ defined for all $\bm{j}\in [J]^g$ enforces the commutation constraints:
\begin{equation}
    \label{eq:inda}
    \ind(\bm{j}):= \begin{cases}
        1&\quad \text{if }\Lag_{\gamma(j_{g-1})}\cdots \Lag_{\gamma(j_1)}[H_{\gamma(j_0)}]\neq 0,\\
        0&\quad \text{else.}
    \end{cases}
\end{equation}

Lastly, before we can apply Theorem~\ref{thm:RPG}, we rewrite each local term $H_{\gamma(i)}$ in the SYK Hamiltonian~\eqref{eq:HSYK} as
\begin{equation}
    H_{\gamma(i)} = J_{\gamma(i)} K_{\gamma(i)},
\end{equation}
where each $J_{\gamma(i)} \in \R$ is an i.i.d.\@ Gaussian coefficient and $K_{\gamma(i)}$ is a deterministic matrix bounded as $\norm{K_{\gamma(i)}} = 1.$\footnote{Recall from equation~\eqref{eq:H gamma i} that the matrices $K_{\gamma(i)}$ are products of Jordan--Wigner representations of Majorana fermions $\gamma(i)$ ordered lexicographically, and Jordan--Wigner representations of Majorana fermions are Pauli strings, which are unitaries.} However, note that Theorem \ref{thm:RPG} is only applicable when the mean of the coefficients $J_{\gamma(i)}$ is zero and the standard deviation is 1, but the standard deviation of each $J_{\gamma(i)}$ is $\sigma$, see equation~\eqref{eq:mean and variance of J}. Thus, we multiply each $K_{\gamma(i)}$ by $\sigma$ and divide $J_{\gamma(i)}$ by $\sigma$:
\begin{equation}
    H_{\gamma(i)} = \left(\frac{J_{\gamma(i)}}{\sigma}\right) \cdot \sigma K_{\gamma(i)} =: g_{\gamma(i)} \tilde{K}_{\gamma(i)}.
\end{equation}
This way, each $g_{\gamma(i)}$ is a standard Gaussian coefficient with 0 mean and standard deviation 1. Since each $K_{\gamma(i)}$ has norm 1, each $\tilde{K}_{\gamma(i)}$ is a deterministic matrix bounded by $\norm{\tilde{K}_{\gamma(i)}}\leq \sigma$.

We can now view the $F_g$ in equation~\eqref{eq:F} as a standard Gaussian random matrix polynomial. We can directly apply Theorem \ref{thm:RPG} with $p\geq 2$ and $m=\Gamma$ to obtain
\begin{equation}
    \label{eq:Fg bound}
    \opnorm{F_g}_{*,p}^2 \leq C_p^g g^{g}\sum_{\bm{w}: |\bm{w}|\leq g} \left(\sum_{\pi\in \mathcal{S}_g}\sum_{ \bm{v}: |\bm{w}|+ 2|\bm{v}|=g} \left|T_{\pi[\bm{\eta}(\bm{w}, 2\bm{v})]} \right|\prod_{i=1}^m \sigma^{w_i + 2v_i}\right)^2 \opnorm{I}_{*,p}^2,
\end{equation}
where $\bm{w},\bm{v}\in \{0,\dotsc,g\}^{\Gamma}$ such that $|\bm{w}|+2|\bm{v}|=g$. The coefficients $T_{\pi[\bm{\eta}(\bm{w}, 2\bm{v})]}$ are inherited from the corresponding $T_{\bm{j}}$ in equation~\eqref{eq:Tj} by replacing the vector $\bm{j}$ with $\pi[\bm{\eta}(\bm{w}, 2\bm{v})]$ where $\bm{\eta}$ is the unary encoding function defined in equation~\eqref{eq:eta}. By the bound in equation~\eqref{eq:|T|},
\begin{equation}
\label{eq:T}
    \left|T_{\pi[\bm{\eta}(\bm{w}+2\bm{v})]}\right| \leq g^g\ind(\pi[\bm{\eta}(\bm{w}, 2\bm{v})]),
\end{equation}
where the indicator $\ind(\bm{a})$ is defined in equation~\eqref{eq:inda} on all vectors $\bm{a} \in [\Gamma]^g$.
Since $|\bm{w}| + 2|\bm{v}| = g$ by construction, $\prod_{i=1}^m \sigma^{w_i + 2v_i} = \sigma^g$ and hence equation~\eqref{eq:Fg bound} simplifies to
\begin{equation}
\label{eq:F_g1}
    \opnorm{F_g}_{*,p}^2 \leq C_p^g g^{3g}\sigma^{2g}\sum_{\bm{w}: |\bm{w}|\leq g} \left(\sum_{\pi\in \mathcal{S}_g}\sum_{ \bm{v}: |\bm{w}|+2|\bm{v}|=g}\ind(\pi[\bm{\eta}(\bm{w}, 2\bm{v})]) \right)^2 \opnorm{I}_{*,p}^2.
\end{equation}

Let us split the sum over $\bm{w}$ as
\begin{equation}
    \label{eq:sum split}
    \sum_{\bm{w}: |\bm{w}|\leq g} = \sum_{w=0}^g \sum_{\bm{w}:|\bm{w}| = w}
\end{equation}
and rewrite equation~\eqref{eq:F_g1} as
\begin{equation}
    \label{eq:F_g simpler}
    \opnorm{F_g}_{*,p}^2 \leq C_p^g g^{3g}\sigma^{2g}
    \sum_{w=0}^g G_w(H) \opnorm{I}_{*,p}^2
\end{equation}
where
\begin{equation}
    \label{eq:Gw(H)}
    G_w(H) := \sum_{\bm{w}: |\bm{w}| = w}\left(\sum_{\pi\in \mathcal{S}_g}\sum_{ \bm{v}: |\bm{w}|+2|\bm{v}|=g}\ind(\pi[\bm{\eta}(\bm{w}, 2\bm{v})])\right)^2.
\end{equation}
We can then fix a $w \in \set{0,\dotsc,g}$ and focus on upper bounding $G_w(H)$. We can do this using the following lemma:

\begin{restatable}{lemma}{lemmaGwH}
\label{lem:Gw}
    Let $H = \sum_{i=1}^m  H_{\gamma(i)}$ be a Hamiltonian of $m$ terms with arbitrary ordering $\gamma$, such that for all $1\leq i<j\leq m$, $H_{\gamma(i)}$ and $H_{\gamma(j)}$ either commute or anti-commute. Fix $g \geq 1$ and $w \in \set{0,\dotsc,g}$, and consider the $H$-dependent sum
    \begin{equation}
        G_w(H) := \sum_{\bm{w}: |\bm{w}| = w}\left(\sum_{\pi\in \mathcal{S}_g}\sum_{ \bm{v}: |\bm{w}|+2|\bm{v}|=g}\ind(\pi[\bm{\eta}(\bm{w}, 2\bm{v})])\right)^2,
    \end{equation}
    where $\bm{w},\bm{v} \in \{0,\dotsc,g\}^m$ such that $|\bm{w}|=w$ and $|\bm{w}| + 2|\bm{v}|=g$. The permutation $\pi\in \mathcal{S}_g$ reorders the entries of vectors in $[m]^g$. The unary encoding $\bm{\eta} $ and the indicator $\ind$ are defined on vectors $\bm{a},\bm{b}\in \{0,\dotsc,g\}^m$ and $\bm{j}\in [m]^g$ as follows:
    \begin{equation}
        \bm{\eta}(\bm{a}, \bm{b}) := (\underbrace{1,\dotsc,1}_{a_1\text{ times}}, \underbrace{2,\dotsc,2}_{a_2\text{ times}}, \dotsc, \underbrace{m,\dotsc,m}_{a_m\text{ times}},\underbrace{1,\dotsc,1}_{b_1\text{ times}}, \underbrace{2,\dotsc,2}_{b_2\text{ times}}, \dotsc, \underbrace{m,\dotsc,m}_{b_m\text{ times}}),
    \end{equation}
    and
    \begin{equation}
        \ind(\bm{j}) := \begin{cases}
        1 & \text{if }\Lag_{\gamma(j_{g})}\cdots \Lag_{\gamma(j_2)}[H_{\gamma(j_1)}]\neq 0,\\
        0 & \text{else}.
    \end{cases}
    \end{equation}
    Then
    \begin{equation}
        G_w(H) \leq g^{3g-2} m^2 Q_{\max}(H)^{g-2}
    \end{equation}
    where
    \begin{equation}
    \label{eq:Qmax}
        Q_{\max}(H) := \max_{1\leq i\leq m} \left|\left\{j \in [m]: [H_{\gamma(j)},H_{\gamma(i)}]\neq 0\right\}\right|.
    \end{equation}
\end{restatable}

The proof of this lemma can be found in Appendix~\ref{apx:Gw}. For the SYK model, we have $m = \binom{n}{k}$ and
$Q_{\max}(H) = Q(n,k)$.
Hence, by Lemma~\ref{lem:Gw}
\begin{equation}
    G_w(H) \leq g^{3g - 2} \binom{n}{k}^2 Q(n,k)^{g-2}.
\end{equation}
Plugging this back into equation~\eqref{eq:F_g simpler} gives us
\begin{equation}
    \opnorm{F_g}_{*,p}^2 \leq C_p^g g^{3g}\sigma^{2g}
    \cdot (g+1) \cdot
    g^{3g - 2} \binom{n}{k}^2 Q(n,k)^{g-2}
    \opnorm{I}_{*,p}^2
\end{equation}
where the factor $g+1$ comes from the sum over $w$.
Taking the square root of both sides we get
\begin{equation}
\label{eq:F_g2}
    \opnorm{F_g}_{*,p} \leq C_p^{\frac{g}{2}} (g+1)^{\frac{1}{2}}g^{3g-1}\sigma^{g} \binom{n}{k} Q(n,k)^{\frac{g}{2}-1} \opnorm{I}_{*,p}.
\end{equation}
Plugging this back into equation~\eqref{eq:Eg} gives us
\begin{equation}
\label{eq:E_g}
    \opnorm{\mathcal{E}_g}_{*,p} \leq \Upsilon^gt^{g-1}C_p^{\frac{g}{2}} (g+1)^{\frac{1}{2}}g^{3g-1}\sigma^{g} \binom{n}{k} Q(n,k)^{\frac{g}{2}-1} \opnorm{I}_{*,p}.
\end{equation}

\subsubsection{Beyond \texorpdfstring{$g$}{g}-th order}

Next, we want to upper bound the residual error term defined in equation~\eqref{eq:E rest}:
\begin{align}
   \opnorm{ \mathcal{E}_{\geq g'}}_{*,p} = \opnorm[\Bigg]{&\sum_{j=1}^J\sum_{m=j+1}^J e^{\Lag_{\gamma(J)}t}\cdots e^{\Lag_{\gamma(m+1)}t}\\
    &\times \int_0^td\xi \sum_{g_m + \cdots +g_{j+1} = g'-1, g_m\geq 1}e^{\Lag_{\gamma(m)}\xi} \Lag_{\gamma(m)}^{g_m}\cdots \Lag_{\gamma(j+1)}^{g_{j+1}}[H_{\gamma(j)}]\frac{(t-\xi)^{g_m - 1} t^{g' - g_m -1}}{(g_m-1)! g_{m-1}! \cdots g_{j+1}!}}_{*,p}.
    \nonumber
\end{align}

We would first like to remove the exponentials inside the expected Schatten norm. This is hindered by the fact that they depend on the summation index $m$. The workaround is to replace the triangular double summation by
\begin{equation}
    \sum_{j=1}^J \sum_{m=j+1}^{J} \mapsto \sum_{m=2}^{J} \sum_{j=1}^{m-1}.
\end{equation}
Hence,
\begin{align}
   \opnorm{ \mathcal{E}_{\geq g'}}_{*,p} = \opnorm[\Bigg]{&\sum_{m=2}^{J}\sum_{j=1}^{m-1} e^{\Lag_{\gamma(J)}t}\cdots e^{\Lag_{\gamma(m+1)}t}\\
    &\times \int_0^td\xi \sum_{g_m + \cdots +g_{j+1} = g'-1, g_m\geq 1}e^{\Lag_{\gamma(m)}\xi} \Lag_{\gamma(m)}^{g_m}\cdots \Lag_{\gamma(j+1)}^{g_{j+1}}[H_{\gamma(j)}]\frac{(t-\xi)^{g_m - 1} t^{g' - g_m -1}}{(g_m-1)! g_{m-1}! \cdots g_{j+1}!}}_{*,p}.
    \nonumber
\end{align}
By triangle inequality (and taking the integral outside the sum),
\begin{align}
   \opnorm{ \mathcal{E}_{\geq g'}}_{*,p} \leq \sum_{m=2}^{J}\int_0^td\xi\opnorm[\Bigg]{&\sum_{j=1}^{m-1} e^{\Lag_{\gamma(J)}t}\cdots e^{\Lag_{\gamma(m+1)}t}\\
    &\times  \sum_{g_m + \cdots +g_{j+1} = g'-1, g_m\geq 1}e^{\Lag_{\gamma(m)}\xi} \Lag_{\gamma(m)}^{g_m}\cdots \Lag_{\gamma(j+1)}^{g_{j+1}}[H_{\gamma(j)}]\frac{(t-\xi)^{g_m - 1} t^{g' - g_m -1}}{(g_m-1)! g_{m-1}! \cdots g_{j+1}!}}_{*,p}.
    \nonumber
\end{align}
Since the exponentials inside the expected Schatten norm do not dependent on the summations over $j$ and $g_i$'s, we can pull them in front of both sums and then by unitary invariance of the norm remove altogether:
\begin{align}
\label{eq:Egeqg}
    \opnorm{\mathcal{E}_{\geq g'}}_{*,p} &\leq  \sum_{m=2}^J\int_0^t d\xi \opnorm[\Bigg]{\sum_{j=1}^{m-1} \sum_{g_m + \cdots +g_{j+1} = g'-1, g_m\geq 1} \Lag_{\gamma(m)}^{g_m}\cdots \Lag_{\gamma(j+1)}^{g_{j+1}}[H_{\gamma(j)}]\frac{(t-\xi)^{g_m - 1} t^{g' - g_m -1}}{(g_m-1)! g_{m-1}! \cdots g_{j+1}!}}_{*,p}.
\end{align}

Consider the same type of symmetrization as in the previous section:
\begin{align}
    \label{eq:geq g terms}
    \opnorm{\mathcal{E}_{\geq g'}}_{*,p} &\leq  \sum_{m=2}^J\int_0^t d\xi \opnorm[\Bigg]{\sum_{j_{g'-1}=1}^J \cdots \sum_{j_0=1}^J \tilde{\theta}_{\bm{j}}\Lag_{\gamma(j_{g'-1})}\cdots \Lag_{\gamma(j_1)}[H_{\gamma(j_0)}]}_{*,p},
\end{align}
where we have absorbed $\frac{(t-\xi)^{g_m - 1} t^{g' - g_m -1}}{(g_m-1)! g_{m-1}! \cdots g_{j+1}!}$ into the coefficients $\tilde{\theta}_{\bm{j}}$, which are set to zero as needed to reduce this sum to the previous one. Compared to the previous section where $\tilde{\theta}_{\bm{j}}\in [0,1]$, the coefficients $\tilde{\theta}_{\bm{j}}$ are now bounded by
\begin{equation}
    |\tilde{\theta}_{\bm{j}}| \leq t^{g'-2}
\end{equation}
due to the extra factor of $(t-\xi)^{g_m - 1}t^{g' - g_m - 1}$ in their definition.

Similar to our strategy for the $g$-th order terms in equation~\eqref{eq:Tj def}, we can write the sums inside the norm in equation~\eqref{eq:geq g terms} as a random matrix polynomial with coefficients $T_{\bm{j}}$:
\begin{equation}
    F_{g'}(H_{\gamma(\Gamma)},\dotsc,H_{\gamma(1)}) := \sum_{j_{g'-1}=1}^{\Gamma} \cdots \sum_{j_0=1}^{\Gamma} T_{\bm{j}} H_{\gamma(j_{g'-1})}\cdots H_{\gamma(j_0)}
\end{equation}
where the coefficients $T_{\bm{j}}$ are now bounded by 
\begin{equation}
\label{eq:|T|new}
    |T_{\bm{j}}| \leq (g')^{g'} t^{g'-1} \ind(\bm{j}),
\end{equation}
where the indicator function $\ind(\bm{j})$ is defined for all vectors $\bm{j}\in [J]^{g'}$ in the same way as in equation~\eqref{eq:inda}.
By the same argument as in the previous section, we get the following bound that follows directly from equation~\eqref{eq:F_g2} with the updated upper bound of the coefficients $T_{\bm{j}}$ in equation \eqref{eq:|T|new}:
\begin{equation}
    \opnorm{F_{g'}}_{*,p} \leq C_p^{\frac{g'}{2}} (g'+1)^{\frac{1}{2}}(g')^{3g'-1}\sigma^{g'} \binom{n}{k} Q(n,k)^{\frac{g'}{2}-1} \opnorm{I}_{*,p}.
\end{equation}
Plug this back into equation \eqref{eq:Egeqg} (the integral over $\xi$ simply becomes a factor $t$), we get 
\begin{align}
    \label{eq:residual term bound}
    \opnorm{\mathcal{E}_{\geq g'}}_{*,p}
    &\leq \Upsilon^{g'+1} t^{g'-1} C_p^{\frac{g'}{2}} (g'+1)^{\frac{1}{2}}(g')^{3g'-1}\sigma^{g'} \binom{n}{k}^2 Q(n,k)^{\frac{g'}{2}-1} \opnorm{I}_{*,p}.
\end{align}

\subsubsection{Total error}

Now, we can finally put everything together. For an $l$-th order $\Upsilon$-stage product formula, the norm of the error generator $\mathcal{E}$ of the SYK model can be bounded by applying the triangle inequality in equation~\eqref{eq:E} and then using the bounds from equations~\eqref{eq:E_g} and~\eqref{eq:residual term bound}:
\begin{align}
    \opnorm{\mathcal{E}}_{*,p} &\leq \sum_{g=l+1}^{g'-1} \opnorm{\mathcal{E}_g}_{*,p} + \opnorm{\mathcal{E}_{\geq g'}}_{*,p}\\
    &\leq \opnorm{I}_{*,p}\sum_{g=l+1}^{g'-1} \Upsilon^gt^{g-1}C_p^{\frac{g}{2}} (g+1)^{\frac{1}{2}}g^{3g-1}\sigma^{g} \binom{n}{k} Q(n,k)^{\frac{g}{2}-1}\\
    & + \opnorm{I}_{*,p}\Upsilon^{g'+1} t^{g'-1} C_p^{\frac{g'}{2}} (g'+1)^{\frac{1}{2}}(g')^{3g'-1}\sigma^{g'} \binom{n}{k}^2 Q(n,k)^{\frac{g'}{2}-1}.\nonumber
\end{align}
For $g'=l+2$, we have
\begin{align}
    \opnorm{\mathcal{E}}_{*,p} 
    &\leq \opnorm{I}_{*,p} \mathcal{A}(l) \left(t^{l}\left[\sqrt{C_p} \sigma \sqrt{Q(n,k)}\right]^{l+1} \binom{n}{k} Q(n,k)^{-1}  +  t^{l+1} \left[\sqrt{C_p} \sigma\sqrt{Q(n,k)}\right]^{l+2} \binom{n}{k}^2 Q(n,k)^{-1}\right),
\end{align}
where $\mathcal{A}(l)=\Upsilon^{l+3}(l+3)^{\frac{1}{2}}(l+2)^{3(l+2)-1}.$
As we divide the total evolution time $t$ into $r$ rounds, with each round having duration $\frac{t}{r}$, the error per round can be bounded as
\begin{align}
    \opnorm{e^{iHt/r} - S_l(t/r)}_{*,p} &\leq \int_0^{\frac{t}{r}}\opnorm{\mathcal{E}(\tau)}d\tau \\
    &\leq \opnorm{I}_{*,p} \frac{\mathcal{A}(l)}{l+1} \left(\frac{t^{l+1}}{r^{l+1}}\left[\sqrt{C_p} \sigma \sqrt{Q(n,k)}\right]^{l+1} \binom{n}{k} Q(n,k)^{-1}\right.\\
    &\left. \qquad \qquad \quad \; {}+ \frac{t^{l+2}}{r^{l+2}} \left[\sqrt{C_p} \sigma\sqrt{Q(n,k)}\right]^{l+2} \binom{n}{k}^2 Q(n,k)^{-1}\right). \nonumber
\end{align}
Using $C_p < p$, the total error is then bounded as
\begin{align}
    \opnorm{e^{iHt} - S_l(t/r)}_{*,p} &\leq r\cdot \opnorm{e^{iHt/r} - S_l(t/r)}_{*,p}\\
    &\leq  \opnorm{I}_{*,p} \mathcal{C}(l)\sqrt{p} \sigma Q(n,k)^{-\frac{1}{2}} t\\
    & \quad\quad\quad \quad \cdot \left(\left[\sqrt{p} \sigma \sqrt{Q(n,k)}\frac{t}{r}\right]^{l} \binom{n}{k}   +  \left[\sqrt{p} \sigma\sqrt{Q(n,k)}\frac{t}{r}\right]^{l+1} \binom{n}{k}^2 \right) \nonumber
\end{align}
with $\mathcal{C}(l) = \frac{\mathcal{A}(l)}{l+1}$. This concludes the proof of Theorem~\ref{thm:SYKRPG}.
\end{proof}

\subsection{Gate complexity}

\begin{corr}
    Simulating a $k$-local SYK model with (even) $l$-th order product formula, the gate complexity
    \begin{equation}
     G = 
        \tilde{\mathcal{O}}\left[ n^{g_1(k)}\sqrt{n + \log(e^2/\delta)} \mathcal{J}t\cdot \left(\left[n^{g_2(k)}\sqrt{n + \log(e^2/\delta)}\frac{\mathcal{J}t}{\epsilon}\right]^{\frac{1}{l}} + \left[n^{g_3(k)}\sqrt{n+\log(e^2/\delta)} \frac{\mathcal{J}t}{\epsilon}\right]^{\frac{1}{l+1}}\right) \right],
\end{equation}
where
\begin{equation}
    \label{eq:3gs}
    \left(g_1(k),\; g_2(k),\; g_3(k)\right) :=
    \begin{cases}
        \left(k,\; 1,\; k+1\right) & \text{if $k$ is even},\\
        \left(k + \frac{1}{2},\; \frac{1}{2},\; k + \frac{1}{2}\right) & \text{if $k$ is odd},
    \end{cases}
\end{equation}
ensures the probability statement
\begin{equation}
    \Prob\left(\norm{e^{iHt} - S_l(t/r)^r} \geq \epsilon \right) < \delta.
\end{equation}
When $l$ is sufficiently large, this gate complexity scales in $n$ as
\begin{equation}
    G_{l\gg2} \sim
    \begin{cases}
        n^{k+\frac{1}{2}}\mathcal{J}t & \text{if $k$ is even},\\
        n^{k+1}\mathcal{J}t & \text{if $k$ is odd}.
    \end{cases}
\end{equation}
\end{corr}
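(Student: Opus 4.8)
The plan is to feed the higher-order error bound of Theorem~\ref{thm:SYKRPG} into the operator-norm concentration inequality of Lemma~\ref{lem:concineq}, solve the resulting inequality for the Trotter number $r$, and then read off the gate count from $G = \mathcal{O}(\Upsilon\Gamma r)$. Concretely, Theorem~\ref{thm:SYKRPG} supplies $\opnorm{e^{iHt} - S_l(t/r)^r}_p \leq \Delta_l \opnorm{I}_p$ with $\Delta_l = \mathcal{D}(l)\sqrt{p}\,\sigma Q^{-1/2} t\bigl(\binom{n}{k}[\sqrt{p}\,\sigma\sqrt{Q}\,t/r]^l + \binom{n}{k}^2[\sqrt{p}\,\sigma\sqrt{Q}\,t/r]^{l+1}\bigr)$, where I abbreviate $Q = Q(n,k)$. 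Setting $\lambda(p,r) := \Delta_l/p$ puts this in the form $\opnorm{e^{iHt} - S_l(t/r)^r}_p \leq p\lambda(p,r)\opnorm{I}_p$ demanded by Lemma~\ref{lem:concineq}. Since for $n$ Majoranas $\opnorm{I}_p^p = 2^{n/2}$, we have $p_D := \log(e^2\opnorm{I}_p^p/\delta) = \frac{n}{2}\log 2 + \log(e^2/\delta) = \mathcal{O}(n + \log(e^2/\delta))$, and the hypothesis \eqref{eq:solveforr} collapses, after cancelling the common factor $p_D$, to the single requirement $\Delta_l(n,k,t,r,p_D) \leq \epsilon/e$.

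\textbf{Solving for $r$ and converting to a gate count.} Because $\Delta_l$ is a sum of two monomials in $1/r$ of degrees $l$ and $l+1$, the cleanest route is to ask that each of the two terms be at most $\epsilon/(2e)$ separately. Each such demand is a single power inequality that inverts directly, yielding $r \geq \sqrt{p_D}\,\sigma\sqrt{Q}\,t\,\bigl(2e\mathcal{D}(l)\sqrt{p_D}\,\sigma Q^{-1/2}t\binom{n}{k}/\epsilon\bigr)^{1/l}$ from the first term and $r \geq \sqrt{p_D}\,\sigma\sqrt{Q}\,t\,\bigl(2e\mathcal{D}(l)\sqrt{p_D}\,\sigma Q^{-1/2}t\binom{n}{k}^2/\epsilon\bigr)^{1/(l+1)}$ from the second; taking $r$ to be the sum of the two right-hand sides suffices. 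For fixed $l$ the number of stages is $\Upsilon = 2\cdot 5^{l/2-1} = \mathcal{O}(1)$ and $\Gamma = \binom{n}{k} = \mathcal{O}(n^k)$, so $G = \mathcal{O}(\Upsilon\Gamma r) = \mathcal{O}(n^k r)$.

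\textbf{Substituting asymptotics and the large-$l$ limit.} It then remains to simplify the two terms of $n^k r$ using $\binom{n}{k} = \mathcal{O}(n^k)$, $\sigma = \mathcal{O}(\mathcal{J}\,n^{-(k-1)/2})$ from \eqref{eq:mean and variance of J}, and $Q(n,k) = \mathcal{O}(n^{k-1})$ for even $k$ versus $Q(n,k) = \mathcal{O}(n^k)$ for odd $k$ from \eqref{eq:Q(n,k)}. These give $\sigma\sqrt{Q} = \mathcal{O}(\mathcal{J})$ and $\sigma Q^{-1/2} = \mathcal{O}(\mathcal{J}\,n^{-(k-1)})$ for even $k$, and $\sigma\sqrt{Q} = \mathcal{O}(\mathcal{J}\sqrt{n})$ and $\sigma Q^{-1/2} = \mathcal{O}(\mathcal{J}\,n^{-k+1/2})$ for odd $k$; feeding these into the two terms of $n^k r$ and collecting powers of $n$ reproduces exactly the triple $(g_1(k),g_2(k),g_3(k))$ of \eqref{eq:3gs}, with the prefactor carrying $n^{g_1(k)}\sqrt{n+\log(e^2/\delta)}\,\mathcal{J}t$, the first bracket carrying $[\,n^{g_2(k)}\sqrt{n+\log(e^2/\delta)}\,\mathcal{J}t/\epsilon\,]^{1/l}$, and the second $[\,n^{g_3(k)}\sqrt{n+\log(e^2/\delta)}\,\mathcal{J}t/\epsilon\,]^{1/(l+1)}$. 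Finally, as $l\to\infty$ both bracketed factors tend to a constant, and absorbing $\log(e^2/\delta)$ and $\log n$ into the polynomial leaves $G_{l\gg2}\sim n^{g_1(k)+1/2}\mathcal{J}t$, i.e.\ $n^{k+1/2}\mathcal{J}t$ for even $k$ and $n^{k+1}\mathcal{J}t$ for odd $k$.

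\textbf{Main obstacle.} The one genuinely delicate point is the bookkeeping in the last two steps: the error bound has two terms with distinct exponents ($l$ and $l+1$), so one inverts two power inequalities separately and recombines, all the while tracking how the factors $\sqrt{p_D}$, $\sigma$, $\sqrt{Q}$, $Q^{-1/2}$, and $\binom{n}{k}$ split between the prefactor and the two brackets. The even/odd split in the growth of $Q(n,k)$ is precisely what forces the two forms of $(g_1,g_2,g_3)$, so that case distinction has to be propagated carefully; everything else is a routine substitution.
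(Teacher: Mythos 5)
Your proposal matches the paper's proof almost step for step: plug the $\Delta_l$ bound from Theorem~\ref{thm:SYKRPG} into Lemma~\ref{lem:concineq} with $\lambda=\Delta_l/p$, note $D=2^{n/2}$ so $p_D=\mathcal{O}(n+\log(e^2/\delta))$, invert each of the two power terms (with the factor $2e$) and add to get $r$, multiply by $\Upsilon\Gamma=\mathcal{O}(n^k)$, and substitute $\binom{n}{k}=\mathcal{O}(n^k)$, $Q(n,k)=\mathcal{O}(n^{k-1})$ or $\mathcal{O}(n^k)$, and $\sigma=\mathcal{O}(\mathcal{J}n^{-(k-1)/2})$ to obtain the exponent triple and the large-$l$ limit. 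The only cosmetic difference is that you make explicit the ``each term $\leq\epsilon/(2e)$'' splitting that the paper leaves implicit when it writes the sum-of-two-terms lower bound for $r$.
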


\begin{proof}
To derive the $l$-th order complexity considering the probability statement in \eqref{eq:concineq}, we first bound the $l$-th Trotter error using Theorem \ref{thm:SYKRPG}:
\begin{align}
    \opnorm{e^{iHt} - S_l(t/r)^r}_p &\leq \opnorm{I}_p \cdot p \cdot \lambda(p,r),
\end{align}
where 
\begin{equation}
    \lambda(p,r) = \frac{1}{\sqrt{p}}\mathcal{C}(l)\sigma Q(n,k)^{-\frac{1}{2}} t\left( \binom{n}{k} \left[\sqrt{p} \sigma \sqrt{Q(n,k)}\frac{t}{r}\right]^{l}  +  \binom{n}{k}^2\left[\sqrt{p} \sigma\sqrt{Q(n,k)}\frac{t}{r}\right]^{l+1} \right).
\end{equation}
This gives us an explicit expression for the concentration inequality~\eqref{eq: actual_concineq}. By Lemma~\ref{lem:concineq}, finding the Trotter number $r$ that ensures the probability statement~\eqref{eq:concineq} is equivalent to finding an $r$ that solves the inequality
\begin{equation}
   e\sqrt{p_D}\mathcal{C}(l)\sigma Q(n,k)^{-\frac{1}{2}} \frac{t}{\epsilon}\left( \binom{n}{k} \left[\sqrt{p_D} \sigma \sqrt{Q(n,k)}\frac{t}{r}\right]^{l}  +  \binom{n}{k}^2\left[\sqrt{p_D} \sigma\sqrt{Q(n,k)}\frac{t}{r}\right]^{l+1} \right)\leq 1,
\end{equation}
where $p_D := \log(e^2D/\delta)$ with $D := \opnorm{I}_p^p.$

Hence,
\begin{align}
    r \geq \sqrt{\log(e^2D/\delta)} \sigma \sqrt{Q(n,k)}t&\cdot \left(\left[2e\sqrt{\log(e^2D/\delta)}\mathcal{C}(l)\binom{n}{k}\sigma Q(n,k)^{-\frac{1}{2}} \frac{t}{\epsilon}\right]^{\frac{1}{l}} \right. \\
    &\;+ \left.\left[2e\sqrt{\log(e^2D/\delta)}\mathcal{C}(l)\binom{n}{k}^2\sigma Q(n,k)^{-\frac{1}{2}} \frac{t}{\epsilon}\right]^{\frac{1}{l+1}}\right). \nonumber
\end{align}
Recall that the norm $\opnorm{I}_p$ should be interpreted as the Schatten $p$-norm on $\mathrm{GL}(n/2, \C)$ because we have (even) $n$ Majoranas. Hence, $D=\opnorm{I}_p^p = 2^{\frac{n}{2}}.$ Substituting $D = 2^{n/2}$ and multiply $r$ with the number of local terms $\Gamma$ and the number of stages $\Upsilon = 2\cdot 5^{\frac{l}{2}- 1}$, the $l$-th order gate complexity scales with
\begin{align}
    G & = \tilde{\mathcal{O}}\left(\Upsilon \sqrt{\frac{\log(2)}{2}n + \log(e^2/\delta)} \sigma \binom{n}{k}\sqrt{Q(n,k)}t\cdot \left(\left[2e\sqrt{\frac{\log(2)}{2}n + \log(e^2/\delta)}\mathcal{C}(l)\binom{n}{k}\sigma Q(n,k)^{-\frac{1}{2}} \frac{t}{\epsilon}\right]^{\frac{1}{l}} \right. \right.\\
    &\qquad \qquad \qquad \qquad \qquad \qquad \qquad \quad\;+ \left.\left.\left[2e\sqrt{\frac{\log(2)}{2}n + \log(e^2/\delta)}\mathcal{C}(l)\binom{n}{k}^2\sigma Q(n,k)^{-\frac{1}{2}} \frac{t}{\epsilon}\right]^{\frac{1}{l+1}}\right)\right). \nonumber
\end{align}
For fixed $k$ and $l$, the $k$ and $l$ dependent factors can be removed from the big-$\tilde{\mathcal{O}}$ notation. Asymptotically, we have $\binom{n}{k} = \mathcal{O}(n^k)$, and $Q(n,k) = \mathcal{O}(n^{k-1})$ for even $k$ while $Q(n,k) = \mathcal{O}(n^k)$ for odd $k$. By writing out $\sigma = \mathcal{J}\sqrt{\frac{(k-1)!}{kn^{k-1}}}$, we can write $G$ asymptotically as
\begin{equation}
     G = 
        \tilde{\mathcal{O}}\left[ n^{g_1(k)}\sqrt{n + \log(e^2/\delta)} \mathcal{J}t\cdot \left(\left[n^{g_2(k)}\sqrt{n + \log(e^2/\delta)}\frac{\mathcal{J}t}{\epsilon}\right]^{\frac{1}{l}} + \left[n^{g_3(k)}\sqrt{n+\log(e^2/\delta)} \frac{\mathcal{J}t}{\epsilon}\right]^{\frac{1}{l+1}}\right) \right],
\end{equation}
where $g_1(k)$, $g_2(k)$, $g_3(k)$ are defined in equation~\eqref{eq:3gs}.
This concludes the proof.
\end{proof}

Similar to Section~\ref{sec:FOE}, we can consider the probability statement in equation~\eqref{eq:concineqpsi}, and apply Lemma~\ref{lem:concineqpsi} instead of Lemma~\ref{lem:concineq}, which leads us to the following result:

\begin{corr}
    For an arbitrary fixed input state $\ket{\psi}$, simulating an even $k$-local SYK model with the $l$-th order product formula (for even $l\geq 2$), the gate complexity
    \begin{equation}
     G = 
        \tilde{\mathcal{O}}\left[ n^{g_1(k)}\sqrt{\log(e^2/\delta)} \mathcal{J}t\cdot \left(\left[n^{g_2(k)}\sqrt{\log(e^2/\delta)}\frac{\mathcal{J}t}{\epsilon}\right]^{\frac{1}{l}} + \left[n^{g_3(k)}\sqrt{\log(e^2/\delta)} \frac{\mathcal{J}t}{\epsilon}\right]^{\frac{1}{l+1}}\right) \right]
\end{equation}
where
\begin{equation}
    \left(g_1(k),\; g_2(k),\; g_3(k)\right) = \begin{cases}
        \left(k,\; 1,\; k+1\right) & \text{if $k$ is even},\\
        \left(k + \frac{1}{2},\; \frac{1}{2},\; k + \frac{1}{2}\right) & \text{if $k$ is odd}
    \end{cases}
\end{equation}
ensures the probability statement
\begin{equation}
    \Prob\left(\norm{(e^{iHt} - S_l(t/r)^r)\ket{\psi}}_{l^2} \geq \epsilon \right) < \delta.
\end{equation}
When $l$ is sufficiently large, this gate complexity scales in $n$ as
\begin{equation}
   G \sim \begin{cases}
        n^{k}\mathcal{J} t & \text{if $k$ is even},\\
        n^{k + \frac{1}{2}} \mathcal{J}t & \text{if $k$ is odd}.
    \end{cases}
\end{equation}
\end{corr}


\section{Higher-order Trotter error of the sparsified SYK model}
\label{sec:SSYK}
Our methods for evaluating higher-order Trotter errors also work in the case of the \emph{sparse SYK model} which is defined as the regular SYK model~\eqref{eq:SYKdef} but with an additional Bernoulli variable $B_{i_1,\dotsc, i_k} \in \set{0,1}$ attached to each term:
\begin{equation}
    H_{\mathrm{SSYK}} = \sum_{1\leq i_1<\cdots<i_k\leq n}B_{i_1,\dotsc, i_k}J_{i_1,\dotsc,i_k}\chi_{i_1}\cdots\chi_{i_k}.
\end{equation}
Similar to our treatment for the regular SYK model in equation~\eqref{eq:HSYK}, we write the sparse SYK model in the local Hamiltonian form with an arbitrary ordering $\gamma$
\begin{equation}
    H_{\mathrm{SSYK}} = \sum_{i = 1}^{\Gamma} B_{i} H_{\gamma(i)}.
\end{equation}

The sparse SYK model has two sources of randomness, one from the Gaussian variables $J_{\gamma(i)}$ within $H_{\gamma(i)}$ as in the regular SYK model, the other from the Bernoulli variables $B_{i}$. We pose that the Bernoulli variables are independent of each other and also independent of the Gaussian variables $J_{\gamma(i)}$. For simplicity, each Bernoulli variable $B_i$ is equal to $1$ with probability $p_B$ and equal to $0$ with probability $1-p_B$ for some $p_B \in [0,1]$. Usually, we set
\begin{equation}
    p_B = \frac{\kappa n}{\Gamma}
\end{equation}
where $\kappa \in \R_{\geq 0}$ is the degree of sparsity, and relative to equation~\eqref{eq:mean and variance of J} we renormalize the variance of the Gaussian variables $J_{\gamma(i)}$ to 
\begin{equation}
    \label{eq:new sigma}
    \sigma^2 = \frac{1}{p_B}\frac{(k-1)! \mathcal{J}^2}{kn^{k-1}}.
\end{equation}

When evaluating the expected Schatten norm of the Trotter error of the sparse SYK model, we will deal with these two types of randomness separately. We will write $\bm{B}$ to denote the Bernoulli variables $B_1,\dotsc, B_{\Gamma}$. When we want to specify that this series of random variables takes a certain configuration $\bm{b}$, we will write $\bm{B} = \bm{b}$ where $\bm{b}$ is a fixed vector in $\{0,1\}^{\Gamma}$. This is equivalent to $B_1 = b_1,$ $B_2 = b_2$, $\dotsc,$ $B_{\Gamma} = b_{\Gamma}.$
Using this notation, we can write the Trotter error given a specific sample $\bm{b}\in \{0,1\}^\Gamma$ as 
\begin{equation}\label{eq:D(B)}
    \mathcal{D}(\bm{B}=\bm{b}) := e^{iH_{\mathrm{SSYK}}(\bm{B}=\bm{b})t} - S_l(t/r)^r,
\end{equation}
indicating that the Trotter error depends on the specific sample $\bm{b}$ of the Bernoulli variables.

\subsection{Average Trotter error}

Let $b := |\bm{b}| = \sum_{i=1}^{\Gamma} b_i$ denote the number of ones in the sample $\bm{b}$ of the Bernoulli variables $\bm{B}$. When the Bernoulli variables take the value of this sample, we get a configuration of the sparse SYK model where we treat the sample values as coefficients:
\begin{equation}
    H_{\SSYK}(\bm{B} = \bm{b}) = \sum_{i=1}^{\Gamma} b_iH_{\gamma(i)},
\end{equation}
which contains exactly $b$ terms. Of course, this configuration is still a $k$-local Hamiltonian, although it becomes more difficult to count the anti-commuting terms, since the all-to-all connected symmetry in the full SYK model is broken. Therefore, we have to resort to average statements about both the Trotter error and gate complexity of these sparse models. The following theorem bounds the Trotter error averaged over the Bernoulli variables, defined as\footnote{For the rest of this work, we will use $\langle\cdot\rangle$ to denote averaging over the Bernoulli variables.}
\begin{equation}
\label{eq:avD(B)}
    \Langle\opnorm{\mathcal{D}(\bm{B})}_{*,p}\Rangle := \Langle\opnorm{e^{iH_{\mathrm{SSYK}}(\bm{B})t} - S_l(t/r)^r}_{*,p}\Rangle,
\end{equation}
where we use $\langle \cdot \rangle$ to denote averaging over the Bernoulli variables, while the norm $\opnorm{\cdot}_{*,p}$ only concerns the Gaussian variables and treats the Bernoulli variables as coefficients.
\begin{theo}
\label{thm:avgSSYKerror}
    Consider the sparse SYK model $H_{\mathrm{SSYK}}(\bm{B})$ with Bernoulli variables $\bm{B}=(B_1,\dotsc,B_{\Gamma}$ such that $\Prob(B_i = 1)=p_B$ for all $1\leq i\leq \Gamma$. Given $l\geq 2$ even, if $p_BQ(n,k)\geq 1$, the $l$-th order average Trotter error~\eqref{eq:avD(B)} is bounded by
    \begin{equation}
\label{eq:avTrotter1}
    \begin{aligned}
        \frac{\Langle\opnorm{\mathcal{D}(\bm{B})}_{*,p}\Rangle }{\opnorm{I}_{*,p}} \leq \beta(l)\frac{\Gamma \sqrt{p}\sigma\sqrt{p_B} t}{\sqrt{Q(n,k)}}\cdot\Bigg( \left[\sqrt{p} \sigma\sqrt{p_BQ(n,k)}\frac{t}{r}\right]^{l} 
        + \Gamma \cdot \left[\sqrt{p}\sigma\sqrt{p_BQ(n,k)}\frac{t}{r}\right]^{l+1}\Bigg),
    \end{aligned}
\end{equation}
    where $p \geq 2$ describes the norm, $\sigma$ is the variance in~\eqref{eq:new sigma}, $\beta(l)$ is a purely $l$-dependent factor, and $Q(n,k)$ is given in equation \eqref{eq:Q(n,k)}. If $p_BQ(n,k)\leq 1$ instead, the $l$-th order average Trotter error~\eqref{eq:avD(B)} is bounded by
    \begin{equation}
\label{eq:avTrotter2}
    \begin{aligned}
         \frac{\Langle\opnorm{\mathcal{D}(\bm{B})}_{*,p}\Rangle }{\opnorm{I}_{*,p}} \leq\beta(l)\frac{\Gamma \sqrt{p}\sigma t}{Q(n,k)}\cdot\Bigg( \left[\sqrt{p} \sigma\frac{t}{r}\right]^{l} 
        + \Gamma \cdot \left[\sqrt{p}\sigma\frac{t}{r}\right]^{l+1}\Bigg).
    \end{aligned}
\end{equation}
\end{theo}

\begin{proof}
Let us first compute the error generator $\opnorm{\mathcal{E}_g(\bm{B}=\bm{b})}_{*,p}$ given a specific configuration $\bm{b}\in \{0,1\}^{\Gamma}:$
\begin{equation}
    \opnorm{\mathcal{E}_g(\bm{B}=\bm{b})}_{*,p} = \opnorm[\Bigg]{\sum_{j = 1}^J\sum_{g_J+\cdots+g_{j+1} = g-1}\prod_{i=j}^{J}b_{i}^{g_{i}} \Lag^{g_J}_{\gamma(J)}\cdots \Lag^{g_{j+1}}_{\gamma(j+1)}[H_{\gamma(j)}]\frac{t^{g-1}}{g_J!\cdots g_{j+1}!}}_{*,p}
\end{equation}
where we have included the product of coefficients $b_i$ from each local term. However, since each $b_i\in \{0,1\}$, it suffices to consider each of them to be linear and we can remove their exponent:
\begin{equation}
    \opnorm{\mathcal{E}_g(\bm{B}=\bm{b})}_{*,p} = \opnorm[\Bigg]{\sum_{j = 1}^J\sum_{g_J+\cdots+g_{j+1} = g-1} \prod_{i=j}^{J}b_{i} \Lag^{g_J}_{\gamma(J)}\cdots \Lag^{g_{j+1}}_{\gamma(j+1)}[H_{\gamma(j)}]\frac{t^{g-1}}{g_J!\cdots g_{j+1}!}}_{*,p}.
\end{equation}
Similar to Section \ref{sec:HOSYK}, we consider the symmetrization 
\begin{equation}
    \opnorm{\mathcal{E}_g(\bm{B}=\bm{b})}_{*,p} = t^{g-1}\opnorm[\Bigg]{\sum_{j_{g-1}=1}^J \cdots \sum_{j_0=1}^J \tilde{\theta}_{\bm{j}} \prod_{i=0}^{g-1} b_{j_i} \Lag_{\gamma(j_{g-1})}\cdots\Lag_{\gamma(j_1)}[H_{\gamma(j_0)}]}_{*,p},
\end{equation}
where we have included extra coefficients $\tilde{\theta}_{\bm{j}}\in [0,1]$ that reduces this expression exactly to the previous one. As $J=\Upsilon \cdot \Gamma$, we get the following upper bound 
\begin{equation}
    \opnorm{\mathcal{E}_g(\bm{B}=\bm{b})}_{*,p} =\Upsilon ^gt^{g-1}\opnorm[\Bigg]{\sum_{j_{g-1}=1}^\Gamma \cdots \sum_{j_0=1}^\Gamma \tilde{\theta}_{\bm{j}} \prod_{i=0}^{g-1} b_{j_i} \Lag_{\gamma(j_{g-1})}\cdots\Lag_{\gamma(j_1)}[H_{\gamma(j_0)}]}_{*,p}
\end{equation}
thanks to the periodicity of the ordering map $\gamma$ in equation \eqref{eq:labeling}. 

Keeping the coefficients $b_i$ and following our discussion in Section \ref{subsec:randommatpol}, we can write the sum of commutator chains as the following random matrix polynomial:
\begin{align}
    F^{(\bm{b})}_g(H_{\gamma(\Gamma)},\dotsc,H_{\gamma(1)}) &:= \sum_{j_{g-1}=1}^\Gamma \cdots \sum_{j_0=1}^\Gamma \tilde{\theta}_{\bm{j}} \prod_{i=0}^{g-1} b_{j_i} \Lag_{\gamma(j_{g-1})}\cdots\Lag_{\gamma(j_1)}[H_{\gamma(j_0)}]\\
    &=: \sum_{j_{g-1}=1}^\Gamma \cdots \sum_{j_0=1}^\Gamma T^{(\bm{b})}_{\bm{j}}H_{\gamma(j_{g-1})} \cdots H_{\gamma(j_0)},
\end{align}
where the coefficients $T^{(\bm{b})}_{\bm{j}}$ are defined as 
\begin{equation}
    T^{(\bm{b})}_{\bm{j}} := T_{\bm{j}} \cdot \prod_{i=0}^{g-1} b_{j_i} 
\end{equation}
with $T_{\bm{j}}$ defined in equation \eqref{eq:Tj def}. Hence, these coefficients are bounded by 
\begin{equation}
\label{eq: Tb bound}
    \left|T^{(\bm{b})}_{\bm{j}}\right| \leq g^g \ind(\bm{j}) \cdot \prod_{i=0}^{g-1} b_{j_i},
\end{equation}
where the indicator is defined exactly the same as before for vectors $\bm{j}\in \{1,\dotsc,\Gamma\}^{g}$
\begin{equation}
    \ind(\bm{j}) =  \begin{cases}
        1&\quad \text{if }\Lag_{\gamma(j_{g-1})}\cdots \Lag_{\gamma(j_1)}[H_{\gamma(j_0)}]\neq 0,\\
        0&\quad \text{else.}
    \end{cases}
\end{equation}

By standardizing the local terms $H_{\gamma(i)}$ as in Section \ref{sec:HOSYK} and apply Theorem \ref{thm:RPG} with $p\geq 2$ and $m=\Gamma$, we get
\begin{equation}
    \opnorm{F_g^{(\bm{b})}}_{*,p}^2 \leq C^g_p g^{g} \sum_{\bm{w}:|\bm{w}|\leq g}\left(\sum_{\pi \in \mathcal{S}_g} \sum_{\bm{v}:|\bm{w}|+2|\bm{v}|=g}\left|T^{(\bm{b})}_{\pi[\bm{\eta}(\bm{w}, 2\bm{v})]}\right| \prod_{i=1}^{\Gamma}\sigma^{w_i + 2v_i}\right)^2\opnorm{I}_{*,p}^2,
\end{equation}
where $\bm{w},\bm{v}\in \{0,\dotsc,g\}^{\Gamma}$ such that $|\bm{w}|+2|\bm{v}|=g$. The coefficients $T^{(\bm{b})}_{\pi[\bm{\eta}(\bm{w}, 2\bm{v})]}$ are inherited from the corresponding $T^{(\bm{b})}_{\bm{j}}$ via the unary encoding function $\bm{\eta}$ defined in equation \eqref{eq:eta} by replacing $\bm{j}$ with $\pi[\bm{\eta}(\bm{w}, 2\bm{v})]$. By equation \eqref{eq: Tb bound}, 
\begin{equation}
    \left|T^{(\bm{b})}_{\pi[\bm{\eta}(\bm{w}+2\bm{v})]}\right| \leq g^g \ind(\pi[\bm{\eta}(\bm{w}, 2\bm{v})]) \cdot \prod_{i\in \Supp(\bm{w}+ 2\bm{v})} b_{i}.
\end{equation}
Since $|\bm{w}| +2|\bm{v}| = g$ by construction, $\prod_{i=1}^{\Gamma} \sigma^{w_i + 2v_i} = \sigma^g$ and hence
\begin{equation}
    \opnorm{F_g^{(\bm{b})}}_{*,p}^2 \leq C^g_p g^{3g}\sigma^{2g} \sum_{\bm{w}:|\bm{w}|\leq g}\left(\sum_{\pi \in \mathcal{S}_g} \sum_{\bm{v}:|\bm{w}|+2|\bm{v}|=g} \prod_{i\in \Supp(\bm{w}+2\bm{v})} b_{i} \ind(\pi[\bm{\eta}(\bm{w}, 2\bm{v})])\right)^2\opnorm{I}_{*,p}^2.
\end{equation}
Because $\bm{w}$ and $\bm{v}$ do not necessarily have complementary support, we consider the following splitting by moving all coefficients $b_i$ for $i\in \Supp(\bm{w})$ outside of the squared parenthesis. The remaing coefficients $b_j$ are now taken with $j\in \Supp(\bm{v})\backslash \Supp(\bm{w}).$ This gives us 
\begin{equation}
    \opnorm{F_g^{(\bm{b})}}_{*,p}^2 \leq C^g_p g^{3g}\sigma^{2g} \sum_{\bm{w}:|\bm{w}|\leq g}\prod_{i\in \Supp(\bm{w})}b_i\left(\sum_{\pi \in \mathcal{S}_g} \sum_{\bm{v}:|\bm{w}|+2|\bm{v}|=g} \prod_{j\in \Supp(\bm{v})\backslash \Supp{\bm{w}}} b_{i} \ind(\pi[\bm{\eta}(\bm{w}, 2\bm{v})])\right)^2\opnorm{I}_{*,p}^2.
\end{equation}
By splitting $\sum_{\bm{w}:|\bm{w}|\leq g}$ into $\sum_{w=0}^g\sum_{\bm{w}:|\bm{w}|=w}$ and taking square roots on both sides, we arrive at 
\begin{equation}
    \opnorm{F_g^{(\bm{b})}}_{*,p}\leq C^{\frac{g}{2}}_p g^{3g/2}\sigma^{g}\sqrt{\sum_{w=0}^g G_w(H_{\SSYK}(\bm{B}=\bm{b}))}\opnorm{I}_{*,p}.
\end{equation}
where 
\begin{equation}
    G_w(H_{\SSYK}(\bm{B}=\bm{b})) := \sum_{\bm{w}:|\bm{w}|=w}\prod_{i\in \Supp(\bm{w})}b_i\left(\sum_{\pi \in \mathcal{S}_g} \sum_{\bm{v}:|\bm{w}|+2|\bm{v}|=g} \prod_{j\in \Supp(\bm{v})\backslash \Supp(\bm{w})} b_{j} \ind(\pi[\bm{\eta}(\bm{w}, 2\bm{v})])\right)^2.
\end{equation}
We will attempt to evaluate $G_w$ explicitly later. For now, let us keep $G_w$ and write the $g$-th order generator as 
\begin{equation}
    \opnorm{\mathcal{E}_g(\bm{B}=\bm{b})}_{*,p} \leq \Upsilon^g t^{g-1} C^{\frac{g}{2}}_p g^{3g/2}\sigma^{g}\sqrt{\sum_{w=0}^g G_w(H_{\SSYK}(\bm{B}=\bm{b}))}\opnorm{I}_{*,p}.
\end{equation}
Similarly, the residual error is bounded by 
\begin{equation}
    \opnorm{\mathcal{E}_{\geq g'}(\bm{B}=\bm{b})}_{*,p} \leq \Gamma \Upsilon^{g'+1}t^{g'-1}C^{\frac{g'}{2}}_p (g')^{3g'/2}\sigma^{g'}\sqrt{\sum_{w=0}^{g'} G_w(H_{\SSYK}(\bm{B}=\bm{b}))}\opnorm{I}_{*,p}.
\end{equation}
Hence, for $g'=l+2$ and use $C_p<p$, the $l$-th order error generator is bounded by 
\begin{equation}
    \begin{aligned}
        \opnorm{\mathcal{E}(\bm{B}=\bm{b})}_{*,p} &\leq \sum_{g=l+1}^{g'-1} \opnorm{\mathcal{E}_g(\bm{B}=\bm{b})}_{*,p} + \opnorm{\mathcal{E}_{\geq g'}(\bm{B}=\bm{b})}_{*,p}\\
        &\leq \opnorm{I}_{*,p} \Bigg( \Upsilon^{l+1} t^{l} p^{\frac{l+1}{2}} (l+1)^{3(l+1)/2}\sigma^{l+1}\sqrt{\sum_{w=0}^{l+1} G_w(H_{\SSYK}(\bm{B}=\bm{b}))}\\
        &\quad+ \Gamma \Upsilon^{l+3}t^{l+1}p^{\frac{l+2}{2}}(l+2)^{3(l+2)/2}\sigma^{l+2}\sqrt{\sum_{w=0}^{l+2} G_w(H_{\SSYK}(\bm{B}=\bm{b}))}\Bigg).
    \end{aligned}
\end{equation}
Then, as we did in Section \ref{sec:HOSYK}, upper bounding by collecting common prefactors, integrating over time from $0$ to $t/r$, and multiplying by $r$ gives us the Trotter error:
\begin{equation}
\label{eq: sparse Trotter b}
    \begin{aligned}
        \opnorm{\mathcal{D}(\bm{B}=\bm{b})}_{*,p} \leq\opnorm{I}_{*,p} \alpha(l)\sqrt{p}\sigma t\cdot\Bigg( &\left[\sqrt{p} \sigma\frac{t}{r}\right]^{l} \sqrt{\sum_{w=0}^{l+1} G_w(H_{\SSYK}(\bm{B}=\bm{b}))}\\
        + \Gamma \cdot &\left[\sqrt{p}\sigma\frac{t}{r}\right]^{l+1}\sqrt{\sum_{w=0}^{l+2} G_w(H_{\SSYK}(\bm{B}=\bm{b}))}\Bigg)
    \end{aligned}
\end{equation}
where $\alpha(l)$ is a purely $l$-dependent factor given by 
\begin{equation}
    \alpha(l) := \frac{\Upsilon^{l+3}(l+2)^{3(l+2)/2}}{l+1}.
\end{equation}
Equation \eqref{eq: sparse Trotter b} gives the $l$-th order Trotter error of a given configuration of the sparse SYK model. However, it is difficult to parse this expression as evaluating $G_w(H_{\SSYK}(\bm{B} = \bm{b}))$ for specific sample $\bm{b}$ is quite complicated, but giving an average statement is tractable thanks to the following lemma:
\begin{restatable}{lemma}{lemmaavGwH}
\label{lem:avGw}
    Let $H = \sum_{i=1}^m  B_iH_{\gamma(i)}$ be a sparse Hamiltonian of $m$ terms with arbitrary ordering $\gamma$, such that for all $1\leq i<j\leq m$, $H_{\gamma(i)}$ and $H_{\gamma(j)}$ either commute or anti-commute. The $B_i$'s are independent Bernoulli variables with $\Prob(B_i = 1) = p_B$ for all $i$. Fix $g \geq 1$ and $w \in \set{0,\dotsc,g}$, and consider the average $H$-dependent sum over all samples of the Bernoulli variables
    \begin{equation}
        \Langle G_w(H(\bm{B}))\Rangle = \left\langle \sum_{\bm{w}:|\bm{w}|=w}\prod_{i\in\Supp(\bm{w})}B_i\left(\sum_{\pi\in \mathcal{S}_g}\sum_{\bm{v}:|\bm{w}| + 2|\bm{v}| = g}\prod_{j\in\Supp(\bm{v})}\!\!\!\!\!B_j\ind(\pi[\bm{\eta}(\bm{w} , 2\bm{v})])\right)^2\right\rangle,
    \end{equation}
    where $\bm{w},\bm{v} \in \{0,\dotsc,g\}^m$ such that $|\bm{w}|=w$ and $|\bm{w}| + 2|\bm{v}|=g$. The permutation $\pi\in \mathcal{S}_g$ reorders the entries of vectors in $[m]^g$. The unary encoding $\bm{\eta} $ and the indicator $\ind$ are defined on vectors $\bm{a},\bm{b}\in \{0,\dotsc,g\}^m$ and $\bm{j}\in [m]^g$ as follows:
  \begin{equation}
        \bm{\eta}(\bm{a}, \bm{b}) := (\underbrace{1,\dotsc,1}_{a_1\text{ times}}, \underbrace{2,\dotsc,2}_{a_2\text{ times}}, \dotsc, \underbrace{m,\dotsc,m}_{a_m\text{ times}},\underbrace{1,\dotsc,1}_{b_1\text{ times}}, \underbrace{2,\dotsc,2}_{b_2\text{ times}}, \dotsc, \underbrace{m,\dotsc,m}_{b_m\text{ times}}),
    \end{equation}
    and
    \begin{equation}
        \ind(\bm{j}) := \begin{cases}
        1 & \text{if }\Lag_{\gamma(j_{g})}\cdots \Lag_{\gamma(j_2)}[H_{\gamma(j_1)}]\neq 0,\\
        0 & \text{else}.
    \end{cases}
    \end{equation}
    Then
    \begin{equation}
        \Langle G_w(H(\bm{B}))\Rangle \leq g^{4g} m^2 Q_{\max}(H)^{-2}\sum_{s=0}^w \sum_{q,q'=0}^{(g-w)/2}\sum_{c=0}^{\min(q,q')} (p_B Q_{\max}(H))^{s+q+q'-c}.
    \end{equation}
    where
    \begin{equation}
    \label{eq:Qmax 2}
        Q_{\max}(H) := \max_{1\leq i\leq m} \left|\left\{j \in [m]: [H_{\gamma(j)},H_{\gamma(i)}]\neq 0\right\}\right|.
    \end{equation}
\end{restatable}
To get some intuition for this lemma, recall that for the full Hamiltonian, we could upper-bound $G_w(H)$ by 
\begin{equation}
    G_w(H) \leq g^{3g-2}m^2 Q_{\max}(H)^{g-2}
\end{equation}
in Lemma \ref{lem:Gw}. Since we perform the sparsification by attaching i$.$i$.$d$.$ Bernoulli variables to the summands of the Hamiltonian, and the average of each Bernoulli variable is $p_B$, we could expect the average $\Langle G_w(H(\bm{B}))\Rangle$ to scale with some polynomial (of utmost degree $g$)
\begin{equation}
    \Langle G_w(H(\bm{B}))\Rangle \sim \sum_sa_s(p_BQ_{\max}(H))^{s}. 
\end{equation}
To illustrate this, let us naively forget the sum of squares in $G_w$ and pretend it scales with the following sum of commutator chains (each commutator chain has weight 1):
\begin{equation}
    G_w(H(\bm{B})) \sim \sum_\pi\sum_{i_1,\dotsc,i_g}\left(\prod_{j=1}^g B_{i_j}\right)\ind(\pi(i_1,\dotsc,i_g)).
\end{equation}
If we take the expectation now, we get 
\begin{equation}
    \Langle G_w(H(\bm{B}))\Rangle \sim \sum_\pi\sum_{i_1,\dotsc,i_g}\Langle\prod_{j=1}^g B_{i_j}\Rangle\ind(\pi(i_1,\dotsc,i_g)).
\end{equation}
Note that the average of the product $\prod_{j=1}^g B_i$ is not $p_B^g$, because we do not know how many distinct elements there are in the product. Therefore, we need to write our summation more carefully to control the distinct elements: 
\begin{equation}
    G_w(H(\bm{B})) \sim \sum_\pi\sum_{s=1}^{g}\sum_{\phi: \text{ surj}}\sum_{\text{distinct }i_1,\dotsc, i_s} \left(\prod_{j=1}^s B_{i_j}\right)\ind(\pi(i_{\phi(1)},\dotsc,i_{\phi(g)})),
\end{equation}
where we now sum over $s$ distinct elements $i_1,\dotsc, i_s$. We use surjective functions $\phi: \{1,\dotsc, g\} \to \{1,\dotsc, s\}$ to distribute $i_1,\dotsc, i_s$ over the indicator. In this way, the average becomes:
\begin{equation}
    \Langle G_w(H(\bm{B})) \Rangle \sim \sum_\pi\sum_{s=1}^{g}\sum_{\phi: \text{ surj}}\sum_{\text{distinct }i_1,\dotsc, i_s} p_B^s\ind(\pi(i_{\phi(1)},\dotsc,i_{\phi(g)})),
\end{equation}
The right-hand side is upper-bounded by 
\begin{equation}
\label{eq: naivebound}
     \sum_\pi\sum_{s=1}^{g}\sum_{\phi: \text{ surj}}\sum_{\text{distinct }i_1,\dotsc, i_s} p_B^s\ind(\pi(i_{\phi(1)},\dotsc,i_{\phi(g)})) \leq g!\cdot g^g mQ_{\max}(H)^{-1}\sum_{s=1}^g (p_B Q_{\max}(H))^{s}
\end{equation}
where $m$ denotes the total number of terms in the Hamiltonian. We use $g!$ to bound the number of permutations $\pi \in \mathcal{S}_g$ and $g^g$ to bound the number of surjective functions from $\{1,\dotsc, g\} \to \{1,\dotsc, s\}$. The above bound in equation \eqref{eq: naivebound} is indeed a polynomial as expected, and resembles the bound in Lemma \ref{lem:avGw} apart from additional sums there. In reality, we cannot simply ignore the sum of squares in $G_w$ as it plays a crucial role in capturing the scaling behavior of the Trotter error. To get the correct bound, one needs to expand the sum of squares and treat the overlapping summations carefully. (This is taken care of by the additional summations over $x$, $q$ and $q'$ in Lemma \ref{lem:avGw}.) For details of this treatment, we refer the reader to the proof in Appendix \ref{apx:avGw}. 

To apply Lemma \ref{lem:avGw}, let us first put expectations on both side of equation \eqref{eq: sparse Trotter b} and apply Jensen's inequality:
\begin{equation}
    \begin{aligned}
        \Langle\opnorm{\mathcal{D}(\bm{B})}_{*,p}\Rangle \leq\opnorm{I}_{*,p} \alpha(l)\sqrt{p}\sigma t\cdot\Bigg( &\left[\sqrt{p} \sigma\frac{t}{r}\right]^{l} \sqrt{\sum_{w=0}^{l+1} \Langle G_w(H_{\SSYK}(\bm{B}))\Rangle}\\
        + \Gamma \cdot &\left[\sqrt{p}\sigma\frac{t}{r}\right]^{l+1}\sqrt{\sum_{w=0}^{l+2} \Langle G_w(H_{\SSYK}(\bm{B}))\Rangle}\Bigg).
    \end{aligned}
\end{equation}
For the SYK model, we have $Q_{\max} = Q(n,k)$, and therefore, by Lemma \ref{lem:avGw}, we get 
\begin{equation}
    \begin{aligned}
        \Langle\opnorm{\mathcal{D}(\bm{B})}_{*,p}\Rangle \leq\opnorm{I}_{*,p} \alpha'(l)\frac{\Gamma \sqrt{p}\sigma t}{Q(n,k)}\cdot\Bigg( &\left[\sqrt{p} \sigma\frac{t}{r}\right]^{l} \sqrt{\sum_{w=0}^{l+1} \sum_{s=0}^w \sum_{q,q'=0}^{(l+1-w)/2} \sum_{c=0}^{\min(q,q')}\left(p_BQ(n,k)\right)^{s+q+q'-c}}\\
        + \Gamma \cdot &\left[\sqrt{p}\sigma\frac{t}{r}\right]^{l+1}\sqrt{\sum_{w=0}^{l+2}\sum_{s=0}^w \sum_{q,q'=0}^{(l+2-w)/2} \sum_{c=0}^{\min(q,q')}\left(p_BQ(n,k)\right)^{s+q+q'-c}}\Bigg),
    \end{aligned}
\end{equation}
where
\begin{equation}
    \alpha'(l) := (l+2)^{2(l+2)}\cdot \frac{\Upsilon^{l+3}(l+2)^{3(l+2)/2}}{l+1}.
\end{equation}

We can simplify the expression above depending on how we tune the probability $p_B$. If $p_B Q(n,k) \geq 1$, then 
\begin{equation}
\begin{aligned}
    \sum_{s=0}^w \sum_{q,q'=0}^{(g-w)/2} \sum_{c=0}^{\min(q,q')}\left(p_BQ(n,k)\right)^{s+q+q'-c} &\leq (p_BQ(n,k))^{g}\sum_{s=0}^w \sum_{q,q'=0}^{(g-w)/2}\sum_{c=0}^{\min(q,q')}1 \\
    &\leq (p_BQ(n,k))^{g}(w+1) \left(\frac{g-w}{2} + 1\right)^3\\
    &\leq (p_BQ(n,k))^{g}\cdot (g+1)^4.
\end{aligned}
\end{equation}
Plugging these into the expression above, we get 
\begin{equation}
    \begin{aligned}
        \frac{\Langle\opnorm{\mathcal{D}(\bm{B})}_{*,p}\Rangle }{\opnorm{I}_{*,p}} \leq \beta(l)\frac{\Gamma \sqrt{p}\sigma\sqrt{p_B} t}{\sqrt{Q(n,k)}}\cdot\Bigg( \left[\sqrt{p} \sigma\sqrt{p_BQ(n,k)}\frac{t}{r}\right]^{l} 
        + \Gamma \cdot \left[\sqrt{p}\sigma\sqrt{p_BQ(n,k)}\frac{t}{r}\right]^{l+1}\Bigg),
    \end{aligned}
\end{equation}
where
\begin{equation}
    \beta(l) = (l+3)^{\frac{5}{2}}(l+2)^{2(l+2)}\cdot \frac{\Upsilon^{l+3}(l+2)^{3(l+2)/2}}{l+1}.
\end{equation}

If $p_B$ is tuned such that $p_BQ(n,k)\leq 1$, then 
\begin{equation}
\begin{aligned}
    \sum_{s=0}^w\sum_{q,q'=0}^{(g-w)/2} \sum_{c=0}^{\min(q,q')}\left(p_BQ(n,k)\right)^{s+q+q'-c} &\leq \sum_{s=0}^w \sum_{x=0}^{(g-w)/2}\sum_{q,q'=0}^{(g-w)/2}1 \\
    &\leq (g+1)^4.
\end{aligned}
\end{equation}
This gives us 
\begin{equation}
    \begin{aligned}
         \frac{\Langle\opnorm{\mathcal{D}(\bm{B})}_{*,p}\Rangle }{\opnorm{I}_{*,p}} \leq\beta(l)\frac{\Gamma \sqrt{p}\sigma t}{Q(n,k)}\cdot\Bigg( \left[\sqrt{p} \sigma\frac{t}{r}\right]^{l} 
        + \Gamma \cdot \left[\sqrt{p}\sigma\frac{t}{r}\right]^{l+1}\Bigg).
    \end{aligned}
\end{equation}
This proves the statement. 
\end{proof}

\subsection{Average gate complexity}
As we have seen in equation \eqref{eq: sparse Trotter b}, the Trotter error depends on the specific sample $\bm{b}$ of the Bernoulli variables $\bm{B}$, so will the gate complexity. If the support of $\bm{b}$ happens to have size $\Gamma$, we recover the full SYK Hamiltonian. If the support of $\bm{b}$ is much smaller than $\Gamma$, we get a Hamiltonian with much fewer terms that requires much fewer gates to simulate. It is therefore tricky to derive one gate complexity that covers all the cases. Moreover, due to the difficulty of evaluating $G_w(H_{\SSYK}(\bm{B}=\bm{b}))$ for specific sample $\bm{b}$ in equation \eqref{eq: sparse Trotter b}, it is also difficult to formulate a gate count specific to each sample. Hence, we decided on the following approach, leading us to a notion of average gate complexity. Let 
\begin{equation}
    \mathcal{G} := \{G(\bm{B}=\bm{b}): \bm{b}\in \{0,1\}^{\Gamma}\}
\end{equation}
be the set of gate complexities for all configurations $\bm{b}$, where each $G(\bm{B}=\bm{b})$ is the gate complexity of the $l$-th order product formula for approximating the configuration $H_{\SSYK}(\bm{B}=\bm{b})$ and ensuring the probability statement in equation \eqref{eq:concineq} or \eqref{eq:concineqpsi}. Each gate complexity $G(\bm{B}=\bm{b})$ is the minimal gate complexity derived from the Trotter error of the corresponding sample $\opnorm{\mathcal{D}(\bm{B}=\bm{b})}_{*,p}$. If we take the expectation of $\mathcal{G}$ over all samples Bernoulli variables, the quantity 
\begin{equation}
    \overline{G}:= \Langle G(\bm{B})\Rangle
\end{equation}
yields a measure for he average gate complexity when simulating the sparse SYK model. We shall formulate our result in the following corollary:
\begin{corr}\label{cor:511 new}
Simulating the $k$-local $\kappa$-sparse SYK model using (even) $l$-th order product formula has average gate complexity 
\begin{equation}
     \overline{G} = \tilde{\mathcal{O}}\left(n\left(\sqrt{p_BQ(n,k)}\right)^{1+\frac{2}{l}}\sqrt{n+\log(e^2/\delta)} \mathcal{J}t\cdot \Bigg(\left[\sqrt{n+\log(e^2/\delta)}\frac{\mathcal{J}t}{\epsilon}\right]^{\frac{1}{l}} 
    + \left[n^{k}\sqrt{n+\log(e^2/\delta)}\frac{\mathcal{J}t}{\epsilon}\right]^{\frac{1}{l+1}}\Bigg)\right),
\end{equation}
in the sense that $\overline{G}$ upper bounds the average of the set
\begin{equation}
    \mathcal{G} = \{G(\bm{B}=\bm{b}): \;\bm{b}\in [0,1]^{\Gamma}\},
\end{equation}
where $G(\bm{B}=\bm{b})$ is the minimal gate complexity of $S_l(t/r)^r$ approximating the configuration $H_{\mathrm{SSYK}}(\bm{B}=\bm{b})$ that ensures the probability statement
\begin{equation}
     \Prob\left(\norm{e^{iH_{\mathrm{SSYK}}(\bm{B}=\bm{b})t} - S_l(t/r)^r} \geq \epsilon \right) < \delta.
\end{equation}
The term $Q(n,k)$ is given by equation \eqref{eq:Q(n,k)}. When $l$ is sufficiently large, the average gate complexity scales in $n$ as
\begin{equation}
    \overline{G} \sim \begin{cases}
        n^{1+\frac{1}{2}} \mathcal{J}t & \text{if $k$ is even},\\
        n^2\mathcal{J}t & \text{if $k$ is odd}.
    \end{cases}
\end{equation}
This result is irrespective of the size of locality $k$, but rather its parity.
\end{corr}

\begin{proof}
Given a sample $\bm{b}\in \{0,1\}^{\Gamma}$, let us start from equation \eqref{eq: sparse Trotter b}
\begin{equation}
    \begin{aligned}
        \opnorm{\mathcal{D}(\bm{B}=\bm{b})}_{*,p} \leq\opnorm{I}_{*,p} \cdot p \cdot \lambda(p,r)
    \end{aligned}
\end{equation}
where 
\begin{equation}
    \begin{aligned}
        \lambda(p,r) := \frac{1}{\sqrt{p}}\alpha(l)\sigma t\cdot\Bigg( &\left[\sqrt{p} \sigma\frac{t}{r}\right]^{l} \sqrt{\sum_{w=0}^{l+1} G_w(H_{\SSYK}(\bm{B}=\bm{b}))}\\
        + \Gamma \cdot &\left[\sqrt{p}\sigma\frac{t}{r}\right]^{l+1}\sqrt{\sum_{w=0}^{l+2} G_w(H_{\SSYK}(\bm{B}=\bm{b}))}\Bigg).
    \end{aligned}
\end{equation}
By Lemma \eqref{lem:concineq}, finding the Trotter number $r$ to ensure the probability statement 
\begin{equation}
    \Prob(\norm{\mathcal{D}(\bm{B}=\bm{b})} \geq \epsilon) \leq \delta, 
\end{equation}
is equivalent to finding $r$ such that 
\begin{equation}
    \begin{aligned}
        1\geq  e\sqrt{p_D}\alpha(l)\sigma \frac{t}{\epsilon}\cdot\Bigg( &\left[\sqrt{p_D} \sigma\frac{t}{r}\right]^{l} \sqrt{\sum_{w=0}^{l+1} G_w(H_{\SSYK}(\bm{B}=\bm{b}))}\\
        + \Gamma \cdot &\left[\sqrt{p_D}\sigma\frac{t}{r}\right]^{l+1}\sqrt{\sum_{w=0}^{l+2} G_w(H_{\SSYK}(\bm{B}=\bm{b}))}\Bigg).
    \end{aligned}
\end{equation}
where $p_D = \log(e^2D/\delta)$ with $D = \opnorm{I}^p_p$. Hence, we find that 
\begin{equation}
\begin{aligned}
    r \geq \sqrt{p_D}\sigma t \cdot \Bigg(&\left[2e\sqrt{p_D}\alpha(l)\sigma\frac{t}{\epsilon}\sqrt{\sum_{w=0}^{l+1} G_w(H_{\SSYK}(\bm{B}=\bm{b}))}\right]^{\frac{1}{l}} \\
    + &\left[2e\sqrt{p_D}\alpha(l)\sigma\frac{\Gamma t}{\epsilon}\sqrt{\sum_{w=0}^{l+2} G_w(H_{\SSYK}(\bm{B}=\bm{b}))}\right]^{\frac{1}{l+1}}\Bigg).
\end{aligned}
\end{equation}
For simplicity, we can move the sum over $G_w$ out of the square brackets by considering a higher $r$:
\begin{equation}
\begin{aligned}
    r \geq   \left(\sqrt{\sum_{w=0}^{l+2} G_w(H_{\SSYK}(\bm{B}=\bm{b}))}\right)^{\frac{1}{l}} \cdot R(n)
\end{aligned}
\end{equation}
with $R(n)$ defined as 
\begin{equation}
    R(n) := \sqrt{p_D}\sigma t\cdot \Bigg(\left[2e\sqrt{p_D}\alpha(l)\sigma\frac{t}{\epsilon}\right]^{\frac{1}{l}} 
    + \left[2e\sqrt{p_D}\alpha(l)\sigma\frac{\Gamma t}{\epsilon}\right]^{\frac{1}{l+1}}\Bigg)
\end{equation}
which is a coefficient that appears in the Trotter number of all configurations. 

The gate complexity specific to this configuration is then upper bounded by 
\begin{equation}
    G(\bm{B} = \bm{b}) \leq |\bm{b}| \cdot r = R(n)\cdot |\bm{b}|  \left(\sqrt{\sum_{w=0}^{l+2} G_w(H_{\SSYK}(\bm{B}=\bm{b}))}\right)^{\frac{1}{l}}.
\end{equation}
It is an upper bound because we did not consider the minimal $r$. 

Averaging over all Bernoulli variables gives us
\begin{equation}
    \Langle G(\bm{B})\Rangle \leq R(n) \left\langle |\bm{B}|\left(\sqrt{\sum_{w=0}^{l+2} G_w(H_{\SSYK}(\bm{B}))}\right)^{\frac{1}{l}}\right\rangle.
\end{equation}
By the Cauchy--Schwarz inequality,  and applying Jensen's inequality for the square root term, we get
\begin{equation}
    \Langle G(\bm{B})\Rangle \leq R(n) \sqrt{\Langle |\bm{B}|^2\Rangle }\sqrt{\left(\sum_{w=0}^{l+2} \Langle G_w(H_{\SSYK}(\bm{B}))\Rangle\right)^{\frac{1}{l}}}.
\end{equation}
The first expectation can be computed from the second moment of the binomial distribution:
\begin{equation}
\begin{aligned}
    \Langle |\bm{B}|^2\Rangle &= \sum_{b=0}^{\Gamma} b^2\binom{\Gamma}{b} p_B^b (1-p_B)^{\Gamma-b}\\
    &= \Gamma(\Gamma-1)p_B^2 + \Gamma p_B.
\end{aligned}
\end{equation}
For the second expectation, we get 
\begin{equation}
    \Langle G_w(H_{\SSYK}(\bm{B}))\Rangle \leq (p_BQ(n,k))^{g}\cdot (g+1)^4
\end{equation}
if $p_BQ(n,k) \geq 1$, and
\begin{equation}
    \Langle G_w(H_{\SSYK}(\bm{B}))\Rangle \leq (g+1)^4
\end{equation}
if $p_BQ(n,k) \leq 1$. 

We can finally put everything together. If $p_B Q(n,k)\geq 1$, we get
\begin{equation}
    \Langle G(\bm{B})\Rangle \leq A(n) \sqrt{\Gamma(\Gamma-1)p_B^2 + \Gamma p_B}\left(\sqrt{p_BQ(n,k)}\right)^{1+\frac{2}{l}}
\end{equation}
where 
\begin{equation}
    A(n) := (l+3)^{\frac{5}{2l}} R(n) = (l+3)^{\frac{5}{2l}}\sqrt{p_D}\sigma t\cdot \Bigg(\left[2e\sqrt{p_D}\alpha(l)\sigma\frac{t}{\epsilon}\right]^{\frac{1}{l}} 
    + \left[2e\sqrt{p_D}\alpha(l)\sigma\frac{\Gamma t}{\epsilon}\right]^{\frac{1}{l+1}}\Bigg).
\end{equation}
If $p_BQ(n,k)\leq 1$, we get 
\begin{equation}
    \Langle G(\bm{B})\Rangle \leq A(n) \sqrt{\Gamma(\Gamma-1)p_B^2 + \Gamma p_B}.
\end{equation}
Since we want to study the behavior of the gate complexity asymptotic in $n$, while fixing other parameters such as the sparsity $\kappa$, locality $k$, and evolution time $t$, we can safely assume that 
\begin{equation}
    p_B Q(n,k) \geq 1 \quad\quad\text{and}\quad\quad(p_B \Gamma)^m \geq p_B\Gamma
\end{equation}
for all $m\geq 1$. Since $D = \opnorm{I}_p^p = 2^{n/2}$, and $\sigma=\mathcal{O}(\mathcal{J})$, suppressing all $l$- and $k$-dependent factors gives us
\begin{equation}
    A(n) = \tilde{\mathcal{O}}\left(\sqrt{n+\log(e^2/\delta)} \mathcal{J}t\cdot \Bigg(\left[\sqrt{n+\log(e^2/\delta)}\frac{\mathcal{J}t}{\epsilon}\right]^{\frac{1}{l}} 
    + \left[n^{k}\sqrt{n+\log(e^2/\delta)}\frac{\mathcal{J}t}{\epsilon}\right]^{\frac{1}{l+1}}\Bigg)\right).
\end{equation}
Using this, we get the following asymptotic average gate complexities:
\begin{equation}
    \overline{G} = \tilde{\mathcal{O}}\left(n\left(\sqrt{p_BQ(n,k)}\right)^{1+\frac{2}{l}}\sqrt{n+\log(e^2/\delta)} \mathcal{J}t\cdot \Bigg(\left[\sqrt{n+\log(e^2/\delta)}\frac{\mathcal{J}t}{\epsilon}\right]^{\frac{1}{l}} 
    + \left[n^{k}\sqrt{n+\log(e^2/\delta)}\frac{\mathcal{J}t}{\epsilon}\right]^{\frac{1}{l+1}}\Bigg)\right),
\end{equation}
As for $Q(n,k)$, it's behavior depends on the parity of locality $k$. If $k$ is even, $Q(n,k) = \mathcal{O}(n^{k-1})$, and if $k$ is odd, $Q(n,k) = \mathcal{O}(n^k).$ Suppressing all constants other than $n$, $t$ and $\mathcal{J}$, we get the following scaling for the asymptotic average gate complexities when $l$ is sufficiently large:
\begin{equation}
    \overline{G} \sim \begin{cases}
        n^{1+\frac{1}{2}} \mathcal{J}t & \text{if $k$ is even},\\
        n^2\mathcal{J}t & \text{if $k$ is odd}.
    \end{cases}
\end{equation}
This proves the statement. 

\end{proof}

Similar to Section \ref{sec:FOE}, by consider the probability statement in equation \eqref{eq:concineqpsi} and applying Lemma \ref{lem:concineqpsi} instead of Lemma \ref{lem:concineq}, we get the following result for simulating an arbitrary fixed input state:

\begin{corr}\label{cor:512 new}
For an arbitrary fixed input state $\ket{\psi}$, simulating the $k$-local $\kappa$-sparse SYK model using (even) $l$-th order product formula has average gate complexity 
\begin{equation}
     \overline{G} = \tilde{\mathcal{O}}\left(n\left(\sqrt{p_BQ(n,k)}\right)^{1+\frac{2}{l}}\sqrt{\log(e^2/\delta)} \mathcal{J}t\cdot \Bigg(\left[\sqrt{\log(e^2/\delta)}\frac{\mathcal{J}t}{\epsilon}\right]^{\frac{1}{l}} 
    + \left[n^{k}\sqrt{\log(e^2/\delta)}\frac{\mathcal{J}t}{\epsilon}\right]^{\frac{1}{l+1}}\Bigg)\right),
\end{equation}
in the sense that $\overline{G}$ upper bounds the average of the set
\begin{equation}
    \mathcal{G} = \{G(\bm{B}=\bm{b}): \;\bm{b}\in [0,1]^{\Gamma}\},
\end{equation}
where $G(\bm{B}=\bm{b})$ is the minimal gate complexity of $S_l(t/r)^r$ approximating the configuration $H_{\mathrm{SSYK}}(\bm{B}=\bm{b})$ that ensures the probability statement
\begin{equation}
     \Prob\left(\norm{\left(e^{iH_{\mathrm{SSYK}}(\bm{B}=\bm{b})t} - S_l(t/r)^r\right)\ket{\psi}}_{l^2} \geq \epsilon \right) < \delta.
\end{equation}
The term $Q(n,k)$ is given by equation \eqref{eq:Q(n,k)}. When $l$ is sufficiently large, the average gate complexity scales in $n$ as
\begin{equation}
    \overline{G} \sim \begin{cases}
        n \mathcal{J}t & \text{if $k$ is even},\\
        n^{1+\frac{1}{2}}\mathcal{J}t & \text{if $k$ is odd}.
    \end{cases}
\end{equation}
This result is irrespective of the size of locality $k$, but rather its parity.
\end{corr}

\section{Discussion}
\label{sec:Discussion}

\subsection{Gate complexities for simulating the SYK model}

We obtained bounds for various orders $l$ of the Trotter error, and estimated the gate complexity of simulating the SYK model with accuracy described by the probability statements in equations~\eqref{eq:concineq} and~\eqref{eq:concineqpsi}. 

The probability statement in equation~\eqref{eq:concineq} characterizes small Trotter error by the difference in spectral norm between the time-evolution operator and its product formula approximation. The gate complexities, displayed in Table~\ref{tab:results}, for simulating the SYK model ensure the probability statement in equation~\eqref{eq:concineq}. In Table~\ref{tab:results}, we only focus on the scaling in $n$ and $t$ while keeping all other parameters fixed. As mentioned earlier, the $\log n$ overhead from the fermion-to-qubit mapping is omitted. The results in the large $l$ limit are obtained by taking $l$ sufficiently large, such that the terms with $\frac{1}{l}$ and $\frac{1}{l+1}$ in their exponent become irrelevant to the scaling behavior. 

\begin{table}[!hbt]
\begin{adjustbox}{minipage=\paperwidth,margin=5pt,center}
\centering
\setlength\tabcolsep{9pt}
\begin{tabular}{|c|c|c| }
 \hline
 \multirow{2}*{Orders of product formulas } & \multicolumn{2}{c|}{Gate complexity $G$} \\
 \cline{2-3}
     & Even $k$ & Odd $k$ \\
 \hline \rule{0pt}{1.5\normalbaselineskip}
First-order & $n^{k + \frac{5}{2}} t^2$ & $n^{k + 3}t^2$ \\[2ex]
 \hline \rule{0pt}{1.5\normalbaselineskip}
 $l$-th order & $n^{k + \frac{1}{2}} t\cdot \left(\left[n^{\frac{3}{2}}t\right]^{\frac{1}{l}} + \left[n^{k+ \frac{3}{2}} t\right]^{\frac{1}{l+1}}\right)$ & $n^{k + 1} t\cdot \left(\left[n t\right]^{\frac{1}{l}} + \left[n^{k + 1} t\right]^{\frac{1}{l+1}}\right)$\\ [2ex]
 \hline \rule{0pt}{1.5\normalbaselineskip}
 Large $l$ limit & $n^{k + \frac{1}{2}} t$ & $n^{k + 1} t$\\[2ex]
 \hline 
\end{tabular}
\end{adjustbox}
\caption{The gate complexity of using different orders of product formulas to simulate the SYK model of locality~$k$, ensuring the probability statement in equation~\eqref{eq:concineq}, for fixed $\epsilon>0$, $\delta\in (0,1)$ and $k\leq 1$, omitting the $\log n$ overhead produced by the fermion-to-qubit mapping. We omitted the big-$\tilde{\mathcal{O}}$ notation in the cells. The $\epsilon$-scaling of these bounds go with $\sim \frac{1}{\epsilon}$ for first-order formulas and $\sim \left(\frac{1}{\epsilon}\right)^{\frac{1}{l}}$ for $l$-th order formulas.}
\label{tab:results}
\end{table}

We observe that the scaling behavior of the gate complexities is different for even $k$ compared to odd $k$. This is a direct consequence of $Q(n,k)$, see equation~\eqref{eq:Q(n,k)}, which behaves differently for even and odd $k$. As a result, the Trotter error, and subsequently the corresponding gate complexities, will have different scaling behaviors. We expect this phenomenon, since the SYK model behaves differently depending on the parity of its locality $k$. It is also reasonable that the gate complexity for simulating odd-$k$ SYK models is higher than even-$k$ SYK models. As $Q(n,k) = \mathcal{O}(n^k)$ for odd $k$, most of the terms will anti-commute in the odd-$k$ SYK model, whereas even-$k$ SYK models have $Q(n,k) = \mathcal{O}(n^{k-1})$, meaning that most of the terms commute as $n$ grows. Since the Trotter error depends on the number of anti-commuting pairs of terms, simulating the odd-$k$ SYK models will naturally require a higher gate complexity than the even-$k$ SYK models.

By ignoring the $\log n$ overhead produced by the fermion-to-qubit mapping, we found that for the SYK model with $k=4$, the gate complexity of its digital quantum simulation using first-order product formula scales with $\tilde{\mathcal{O}}(n^{6.5}t^2)$, while using higher-order product formulas (given the order $l$ is sufficiently large) it scales with $\tilde{\mathcal{O}}(n^{4.5} t)$. This offers an improvement of $\mathcal{O}(n^{3.5})$ and $\mathcal{O}(n^{5.5})$ in $n$ (up to $\log$-factors) respectively compared to the previous result $\mathcal{O}(n^{10}t^2)$ obtained by \cite{Garc_a_lvarez_2017}. Compared to the result $\mathcal{O}(n^5t^2)$ \cite{PhysRevD.109.105002}, our higher-order bound offers a slight improvement of $\mathcal{O}(\sqrt{n})$ in $n$. 

Additionally, any order of product formula simulating the SYK model requires at least $\Omega(n^k)$ gates, because there are $\mathcal{O}(n^k)$ local term exponentials per round in the product formula due to the fact that the SYK model has $\Gamma = \binom{n}{k}= \mathcal{O}(n^k)$ local terms. In this sense, one could interpret our results $\tilde{\mathcal{O}}(n^{k+1/2}t)$ and $\tilde{\mathcal{O}}(n^{k+1}t)$ in the large $l$ limit as close to optimal. 

These gate complexities can be further improved upon, if we consider the probability statement in equation \eqref{eq:concineqpsi}, which characterizes small Trotter error by the difference in the $l^2$-norm between the time evolution of an arbitrary fixed input state $\ket{\psi}$ and the state obtained by applying the product formula approximation on $\ket{\psi}$. Overall, considering this probability statement allows us to reduce the previous gate complexities by $\mathcal{O}(n^2)$ for first-order product formulas and by $\mathcal{O}(\sqrt{n})$ for higher-order formulas. These results are presented in Table~\ref{tab:results_fixed}, where we only focus on the scaling in $n$ and $t$ while keeping all other parameters fixed. The results in the large $l$ limit are obtained in the same way by taking $l$ sufficiently large, such that the terms with $\frac{1}{l}$ and $\frac{1}{l+1}$ in their exponent vanish. As we can see in Table~\ref{tab:results_fixed}, simulating the SYK model with even $k$ using higher-order formulas requires only $\tilde{\mathcal{O}}(n^kt)$ gates in this case. This means that for any arbitrary fixed input state $\ket{\psi}$, higher-order product formulas are actually optimal for simulating the time-evolution of $\ket{\psi}$ under the SYK Hamiltonian with even $k$, apart from the inevitable $\log n$ overhead from the fermion-to-qubit mapping.

\begin{table}[!hbt]
\begin{adjustbox}{minipage=\paperwidth,margin=5pt,center}
\centering
\setlength\tabcolsep{9pt}
\begin{tabular}{|c|c|c| }
 \hline
 \multirow{2}*{Orders of product formulas} & \multicolumn{2}{c|}{Gate complexity $G$} \\
 \cline{2-3}
     & Even $k$ & Odd $k$ \\
 \hline \rule{0pt}{1.5\normalbaselineskip}
First-order & $n^{k + \frac{1}{2}} t^2$ & $n^{k + 1}t^2$ \\[2ex]
 \hline \rule{0pt}{1.5\normalbaselineskip}
 $l$-th order & $n^{k} t\cdot \left(\left[nt\right]^{\frac{1}{l}} + \left[n^{k+ 1} t\right]^{\frac{1}{l+1}}\right)$ & $n^{k + \frac{1}{2}} t\cdot \left(\left[n^{\frac{1}{2}} t\right]^{\frac{1}{l}} + \left[n^{k + \frac{1}{2}} t\right]^{\frac{1}{l+1}}\right)$\\ [2ex]
 \hline \rule{0pt}{1.5\normalbaselineskip}
 Large $l$ limit & $n^{k } t$ & $n^{k + \frac{1}{2}} t$\\[2ex]
 \hline 
\end{tabular}
\end{adjustbox}
\caption{The gate complexities of using different orders of product formulas to simulate the SYK model of locality~$k$ on an arbitrary fixed input state $\ket{\psi}$, ensuring the probability statement in equation \eqref{eq:concineqpsi}, for fixed $\epsilon>0$, $\delta\in (0,1)$ and $k\leq 1$, omitting the $\log n$ overhead produced by the fermion-to-qubit mapping. We omitted the big-$\tilde{\mathcal{O}}$ notation in the cells. The $\epsilon$-scaling of these bounds go with $\sim \frac{1}{\epsilon}$ for first-order formulas and $\sim \left(\frac{1}{\epsilon}\right)^{\frac{1}{l}}$ for $l$-th order formulas.}
\label{tab:results_fixed}
\end{table}

\subsection{Gate complexities for simulating the sparse SYK model}

Additionally, using similar methods, we derived average higher-order Trotter error bounds for simulating the sparse SYK model. They lead to the following average gate complexity in the large $l$-limit\footnote{The complexities present here have an $\epsilon$-scaling that goes with $\sim \left(\frac{1}{\epsilon}\right)^{\frac{1}{l}}$ for $l$-th order product formulas.}:
\begin{equation}
    \overline{G} \sim \begin{cases}
        n^{1+\frac{1}{2}} \mathcal{J}t & \text{if $k$ is even},\\
        n^2\mathcal{J}t & \text{if $k$ is odd}.
    \end{cases}
\end{equation}
Compared to the full SYK model, this result does not depend on the locality $k$, but rather its parity. Since the sparse SYK model on average contains $\kappa n$ terms in the Hamiltonian, the gate complexity of product formulas 
 scales at least with $\kappa n$. Hence, similar to the case for the dense SYK model, one could interpret our result in the large $l$ limit as close to optimal.

Similarly, if we consider the probability statement for an arbitrary fixed input state $\ket{\psi}$ in equation \eqref{eq:concineqpsi} instead of the statement in equation \eqref{eq:concineq}, we get 
\begin{equation}
    \overline{G} \sim \begin{cases}
        n \mathcal{J}t & \text{if $k$ is even},\\
        n^{1+\frac{1}{2}}\mathcal{J}t & \text{if $k$ is odd}.
    \end{cases}
\end{equation}
This shows that for any fixed input state $\ket{\psi}$, higher-order product formulas are near-optimal on average for simulating the time evolution of $\ket{\psi}$ under the sparse SYK Hamiltonian with even $k$, apart from the inevitable $\log n$ overhead cost from the fermion-to-qubit mapping. Together with the results in the previous section, we demonstrated the potential of product formulas to efficiently simulate the SYK and sparse SYK models.

\subsection{Error bounds for general Gaussian models}
Due to the generality of Lemma~\ref{lem:Gw}, our Trotter error bounds for the SYK Hamiltonians in Theorems~\ref{thm:FOE} and~\ref{thm:SYKRPG} can be easily generalized to other Gaussian models by simply replacing $Q(n,k)$ with $Q_{\mathrm{max}}$ (equation~\eqref{eq:Qmax}) and $\binom{n}{k}$ with the total number of terms in the specific Hamiltonian. In other words, consider a Hamiltonian of the form 
\begin{equation}
\label{eq:generalGaussian}
    H = \sum_{i =1}^{\Gamma} J_i H_i,
\end{equation}
where the $J_i$'s are i.i.d.\@ Gaussian variables with zero mean and standard deviation $\sigma$, and the $H_i$'s are linearly independent Pauli strings or Pauli string representations of fermionic operators of arbitrary locality. For $p\geq 2$, the first-order normalized Trotter error is bounded by
\begin{equation}
\label{eq:generalDelta_1}
       \Delta_1(H):= 4\sqrt{2} p^2 \sigma^2 \sqrt{\Gamma Q_{\mathrm{max}}(H)}\;t^2\left[\frac{1}{2r} 
    + \sigma \sqrt{Q_{\mathrm{max}}(H)}\frac{t}{3r^2} \right],
    \end{equation}
and the $l$-th order normalized Trotter error is bounded by 
\begin{equation}
\label{eq:generalDelta_l}
          \Delta_{l}(H):= \frac{\mathcal{D}(l)\sqrt{p}  \sigma t}{\sqrt{Q_{\mathrm{max}}(H)}}\left( \Gamma \left[\sqrt{p} \sigma \sqrt{Q_{\mathrm{max}}(H)}\frac{t}{r}\right]^{l}  + \Gamma^2\left[\sqrt{p} \sigma\sqrt{Q_{\mathrm{max}}(H)}\frac{t}{r}\right]^{l+1} \right),     
     \end{equation}
where $\mathcal{D}(l)$ is a factor only depending on $l$. Hence, bounding the Trotter error of a Hamiltonian of the form~\eqref{eq:generalGaussian} reduces to the problem of counting the maximal number of anti-commuting terms in the Hamiltonian. Subsequently, one can use equations~\eqref{eq:generalDelta_1} and~\eqref{eq:generalDelta_l} to estimate the corresponding gate complexity as we did for the SYK models.

Similarly, we can generalize our results for the sparse SYK model to a broader class of sparse Hamiltonians where the sparsification is governed by Bernoulli processes similar to the sparse SYK model.  To be specific, given a full Hamiltonian 
\begin{equation}
    H_{\mathrm{full}} := \sum_{i=1}^m J_iH_i,
\end{equation}
we consider its sparsification of the form 
\begin{equation}
\label{eq:generalsparseH}
    H_{\mathrm{sparse}} = \sum_{i=1}^m B_i J_i H_i
\end{equation}
where $B_i$'s are i.i.d.\@ Bernoulli variables and $J_i$'s are i.i.d.\@ Gaussian variables, such that given constants $\kappa, \mathcal{J}\geq 0$,  
\begin{equation}
   \Prob(B_i = 1) = p_B, \quad \mathbb{E}(J_i) = 0,\quad\text{and}\quad \mathbb{E}(J_i^2) = \mathcal{O}(\mathcal{J}^2)
\end{equation}
for all $1\leq i\leq m$. The $H_i$'s are linearly independent Pauli strings (or can be represented by Pauli strings, such as the Majoranas) of arbitrary locality. Then, by simply replacing $Q(n,k)$ with $Q_{\max}(H_{\mathrm{full}})$ in equations \eqref{eq:avTrotter1} and \eqref{eq:avTrotter2}, the average Trotter error is bounded by 
\begin{equation}
    \begin{aligned}
        \frac{\Langle\opnorm{\mathcal{D}(\bm{B})}_{*,p}\Rangle }{\opnorm{I}_{*,p}} \leq \beta(l)\frac{\Gamma \sqrt{p}\sigma\sqrt{p_B} t}{\sqrt{Q_{\max}(H_{\mathrm{full}})}}\cdot\Bigg( \left[\sqrt{p} \sigma\sqrt{p_BQ_{\max}(H_{\mathrm{full}})}\frac{t}{r}\right]^{l} 
        + \Gamma \cdot \left[\sqrt{p}\sigma\sqrt{p_BQ_{\max}(H_{\mathrm{full}})}\frac{t}{r}\right]^{l+1}\Bigg),
    \end{aligned}
\end{equation}
    where $\beta(l)$ is a purely $l$-dependent factor. If $p_BQ_{\max}(H_{\mathrm{full}})\leq 1$ instead, the $l$-th order average Trotter error is bounded by
    \begin{equation}
    \begin{aligned}
         \frac{\Langle\opnorm{\mathcal{D}(\bm{B})}_{*,p}\Rangle }{\opnorm{I}_{*,p}} \leq\beta(l)\frac{\Gamma \sqrt{p}\sigma t}{Q_{\max}(H_{\mathrm{full}})}\cdot\Bigg( \left[\sqrt{p} \sigma\frac{t}{r}\right]^{l} 
        + \Gamma \cdot \left[\sqrt{p}\sigma\frac{t}{r}\right]^{l+1}\Bigg).
    \end{aligned}
\end{equation}
Subsequently, gate complexities can be derived in a similar fashion as Corollaries \ref{cor:511 new} and \ref{cor:512 new}.

\subsection{Outlook and open questions}

Although we have derived bounds for several orders of Trotter error and found the near-optimal complexity for the product formula simulation of both the SYK and sparse SYK models, there are still some open questions left. 

The first one concerns the error bound $\Delta_l.$ Recall that by Theorem \ref{thm:SYKRPG},
\begin{align}
    \frac{\opnorm{e^{iHt} - S_l(t/r)^r}_p}{\opnorm{I}_p} &\leq \Delta_l(n,k,t,r,p),
\end{align}
where 
\begin{equation}
     \Delta_l:= \frac{\mathcal{D}(l)\sqrt{p} \sigma t}{\sqrt{Q(n,k)}}\left( \binom{n}{k} \left[\sqrt{p} \sigma \sqrt{Q(n,k)}\frac{t}{r}\right]^{l}  +  \binom{n}{k}^2\left[\sqrt{p} \sigma\sqrt{Q(n,k)}\frac{t}{r}\right]^{l+1} \right). 
\end{equation}
Asymptotically for even $k$, note that $\sigma = \mathcal{O}(n^{-\frac{k-1}{2}})$ and $Q(n,k) = \mathcal{O}(n^{k-1})$. Hence, $\sigma\sqrt{Q(n,k)} = \mathcal{O}(1).$ Therefore, for any even $l>1$, the scaling of $\Delta_l$ in $n$ will asymptotically not depend on the terms in the square brackets. In other words, fix $k,t,r,$ and $p$, then for any even $l>1$, $\Delta_l$ will have the following asymptotic scaling in $n$:
\begin{equation}
\label{eq:DUSO}
    \Delta_l(n,k,t,r,p)
    = \mathcal{O}\left(\frac{\sigma}{\sqrt{Q(n,k)}}\binom{n}{k}^2\right)
    = \mathcal{O}\left(n^{k+1}\right),
\end{equation}
which is independent of $l$. This behavior is illustrated in Figure~\ref{fig:loglogDeltaUS}, where the $\Delta_l$ for different $l$ start with different curves, but after some $n_0$ (in Figure~\ref{fig:loglogDeltaUS} we could take $n_0\in [e^3, e^4]$), they all become parallel straight lines for $n \geq n_0$, indicating the same scaling behavior in $n$.
\begin{figure}[!hbt]
    \centering
    \includegraphics[width=\linewidth]{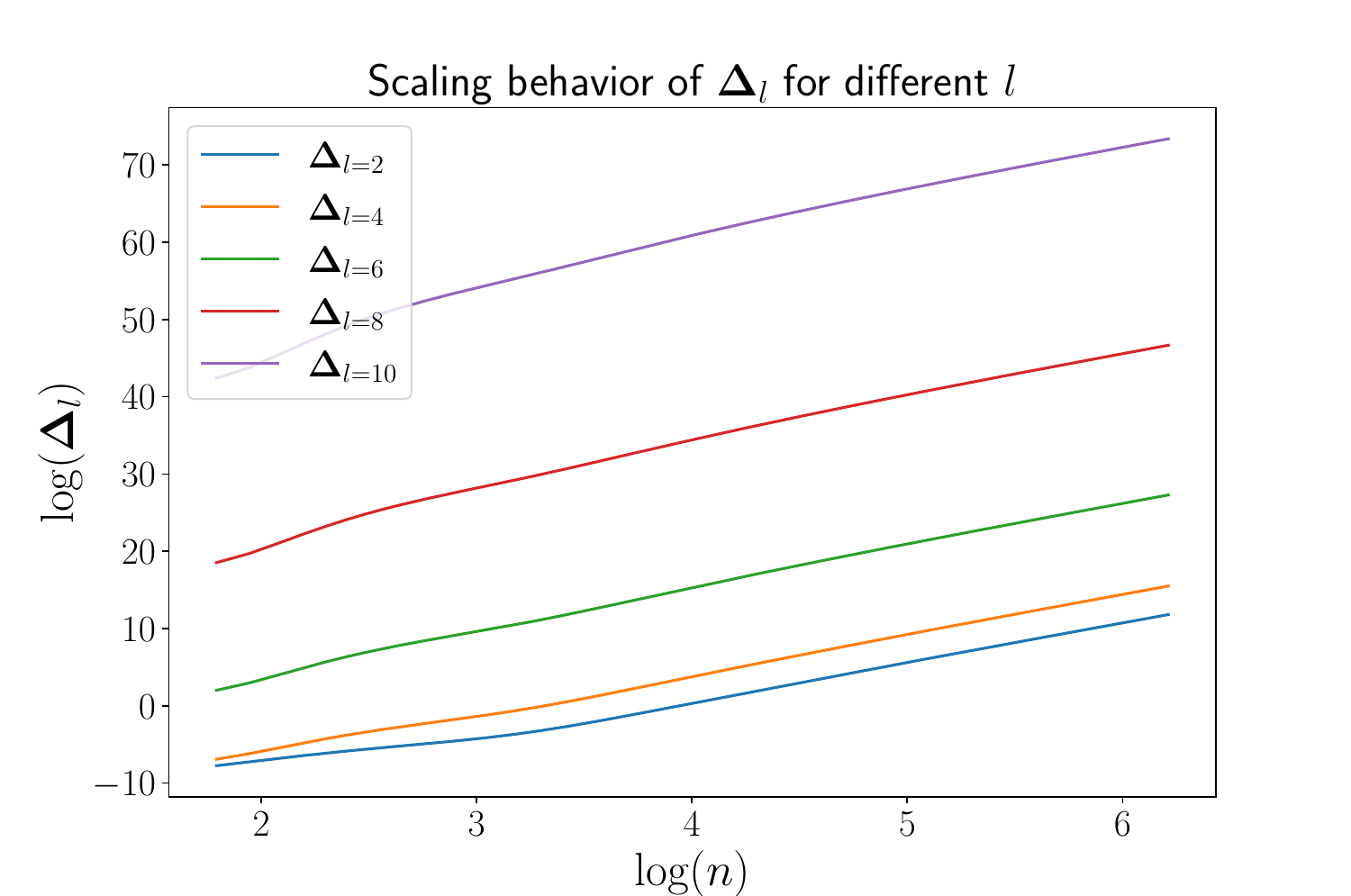}
    \caption{Scaling behavior of $\Delta_l$ in $n\in [6,500]$ for $l=2,4,6,8,10$ at fixed $k=4$, $t=10$, $r=100000$, $p=2$. When plotting these, we omitted the calculation of the prefactor $\mathcal{D}(l)$ because it gets too large for us to compute. Therefore, this plot does not give the correct prefactors but still captures the correct scaling behavior in $n$.}
    \label{fig:loglogDeltaUS}
\end{figure}

Since $\Delta_l$ is an upper bound for the $l$-th Trotter error, this indicates that the $l$-th order Trotter error of the SYK model with even $k$ for any even $l>1$ could also have this uniform asymptotic scaling independent of $l$. Numerically confirming this interesting observation by directly computing the Trotter error is beyond the reach of our computational capability. We leave the verification of this result to further research. 

The second question addresses the gate complexity of the sparse SYK model. The quantity we provide does not yield the full picture, as it only gives a measure for the gate complexity averaged over all samples of the Bernoulli variables. To obtain a fuller picture, one could consider computing higher-order moments of the gate complexity, such as the variance, and give a probabilistic description via Chebyshev's inequality in the form 
\begin{equation}
    \Prob\left(|G(\bm{B}) - \overline{G}|\geq c\sqrt{\mathrm{Var}(G(\bm{B}))} \right)\leq \frac{1}{c^2},
\end{equation}
where $G(\bm{B})$ is the gate complexity dependent on the Bernoulli variables, $\overline{G}$ is the average gate complexity we computed, and $c$ is a constant strictly larger than 1. Alternatively, one could define a typical set for the samples of the Bernoulli variables, and provide a statement about gate complexities for typical samples. We leave both approaches for future research. 

Lastly, inspired by \cite{PhysRevD.109.105002}, we could consider a graph-coloring optimization to reduce the number of gates per round in a product formula. This idea is based on the fact that we can decompose the SYK Hamiltonian into clusters of mutually commuting terms. If one considers the anti-commutation graph $A$ of the SYK Hamiltonian, i.e. the graph obtained by placing each local term as a vertex, and an edge between two vertices if the corresponding local terms anti-commute, then the chromatic number $\chi_A$ is the least number of commuting clusters we can decompose the SYK Hamiltonian into. The commuting clusters have the advantage that they can be simultaneously diagonalized. Implementing each commuting cluster as Pauli strings requires $\mathcal{O}(n^2)$ gates, coming from the simultaneous diagonalization of the cluster \cite{diagPauli1,diagPauli2,diagPauli3}. This leads to a complexity of $\mathcal{O}(n^2\chi(A))$ gates per round. Because the anti-commutation graph $A$ is a $Q(n,k)$-regular graph, its chromatic number $\chi_A$ is upper bounded by $Q(n,k)+1.$ Hence, using this approach, the gate complexity per round would scale at most with $\mathcal{O}(n^{k+1})$, which is the same as the gate complexity where we use the Jordan--Wigner transformation to map each Majorana fermion to a Pauli string instead of the ternary tree mapping. However, the chromatic number $\chi_A$ can potentially exhibit a better scaling than $Q(n,k)$, as we can see in Figure~\ref{fig:chia}. Hence, although decomposing the SYK Hamiltonian into commuting clusters does not impact the Trotter number as it is not ordering dependent, this approach could potentially lead to a more efficient implementation of single rounds of the product formulas. 

\begin{figure}[!hbt]
    \centering
    \includegraphics[width=0.8\linewidth]{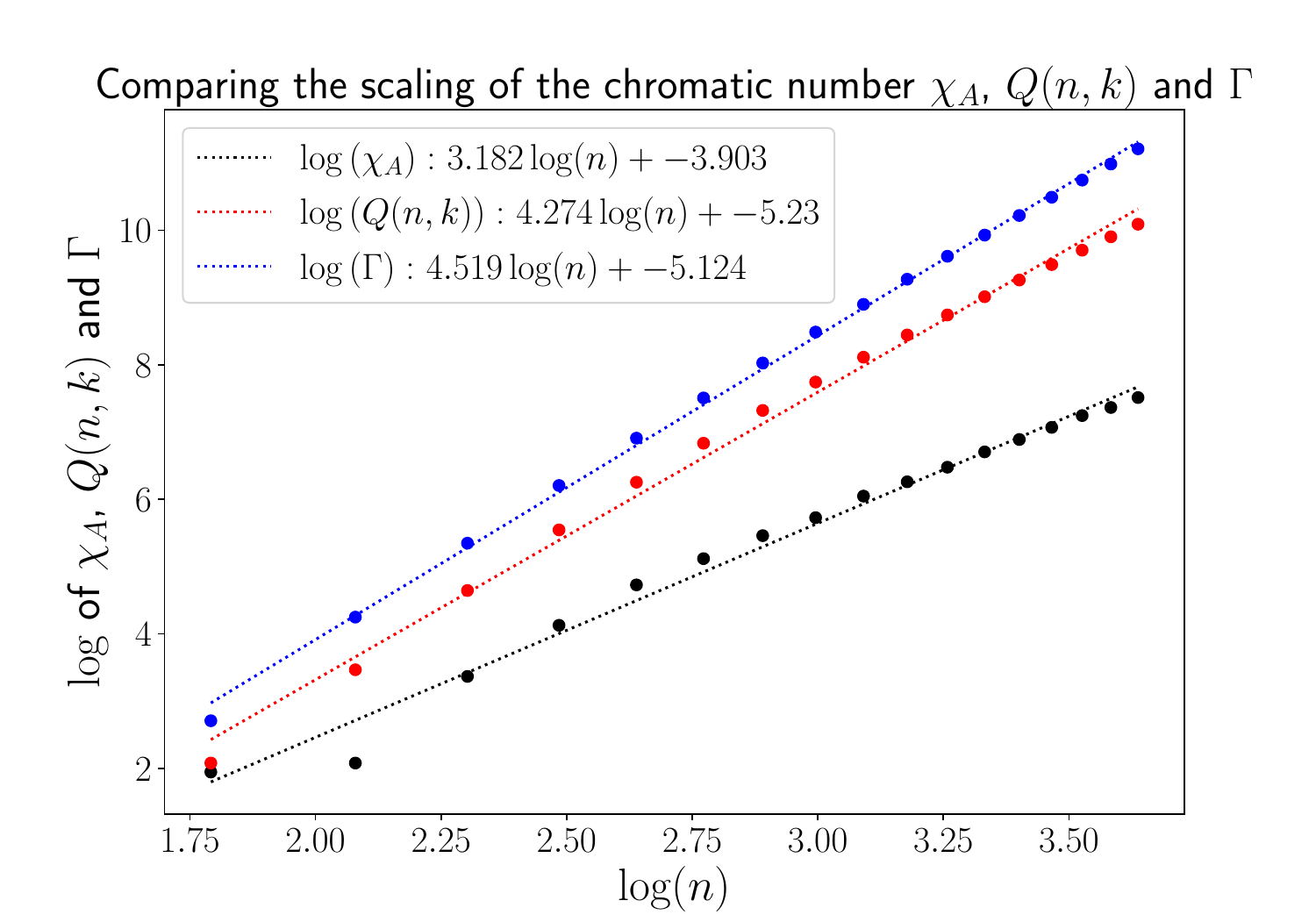}
    \caption{Scaling comparison between the chromatic number $\chi_A$ of the anti-commutation graph of the SYK model $A$ and $Q(n,k)$ for even $n\in [6,38]$ and $k=4$. The chromatic number is estimated using a greedy coloring algorithm.}
    \label{fig:chia}
\end{figure}

\pagebreak

\bibliographystyle{alphaurl}
\bibliography{reference}

\appendix
\section{Lemmas for evaluating the probability statements}
\label{apx:concineq}
\lemmaconcineq*

\begin{proof}
Let $H$ be a Hamiltonian on a Hilbert space of dimension $D$. Suppose its $l$-th order Trotter error is bounded by 
\begin{equation}
        \opnorm{e^{iHt} - S_l(t/r)^r}_p \leq p\lambda(p, r) \opnorm{I}_p
    \end{equation}
for all $p\geq 2$.
By applying Markov's inequality, we obtain
\begin{equation}
\label{eq:markov}
    \Prob\left(\norm{e^{iHt} - S_l(t/r)^r}\geq \epsilon\right) \leq \frac{\opnorm{e^{iHt} - S_l(t/r)^r}_p^p}{\epsilon^p}\leq D\left(p\frac{\lambda(p,r)}{\epsilon}\right)^p,
\end{equation}
where $D=\opnorm{I}_p^p$. We want to achieve $D\left(p\frac{\lambda(p,r)}{\epsilon}\right)^p<\delta$. Let 
    $p_D = \log(e^2D/\delta)$.
Since equation \eqref{eq:markov} holds for all $p\geq 2$ and $p_D\geq 2$,
it must also hold for $p_D$.\footnote{We have $p_D\geq 2$ because $D> \delta,$ since $D>1$ and $\delta\in (0,1)$ for it bounds a probability.}

Observe the following equivalence:
\begin{align}
    \left(p_D\frac{\lambda(p_D,r)}{\epsilon}\right)^{p_D} \leq \frac{\delta}{D} &\iff \log\left(\left(p_D\frac{\lambda(p_D,r)}{\epsilon}\right)^{p_D}\right)\leq \log\left(\frac{\delta}{D}\right)\\
    &\iff \log\left(p_D\frac{\lambda(p_D,r)}{\epsilon}\right)\cdot p_D \leq \log\left(\frac{\delta}{D}\right)\\
    &\iff \log\left(\frac{\lambda(p_D,r)}{\epsilon}\right) \leq \frac{-\log(D/\delta)}{p_D} - \log\left(p_D\right)\\
    &\iff \frac{\lambda(p_D,r)}{\epsilon} \leq \exp\left(\frac{-\log(D/\delta)}{\log(D/\delta) + 2}\right) \cdot \frac{1}{p_D}.
\end{align}

Now, the minimal value of $e^{\frac{-x}{x+2}}$ for $x \geq 0$ is at $x\to \infty$ with $e^{-1}\leq e^{\frac{-x}{x+2}}$ for all $x\geq 0.$ Thus, 
\begin{equation}
    \frac{\lambda(p_D,r)}{\epsilon} \leq \frac{1}{e}\cdot \frac{1}{p_D} \implies \frac{\lambda(p_D,r)}{\epsilon} \leq \exp\left(\frac{-\log{D/\delta}}{p_D}\right) \cdot \frac{1}{p_D} \iff\left(p_D\frac{\lambda(p_D,r)}{\epsilon}\right)^{p_D} \leq \frac{\delta}{D} 
\end{equation}
which means
\begin{equation}
    \frac{\lambda(p_D,r)}{\epsilon} \leq \frac{1}{e}\cdot \frac{1}{p_D} \implies \Prob\left(\norm{e^{iHt} - S_1(t/r)^r}> \epsilon\right) \leq \delta. 
\end{equation}
In other words, solving 
\begin{equation}
    \frac{\lambda(p_D,r)}{\epsilon} \leq \frac{1}{e}\cdot \frac{1}{p_D}
\end{equation}
for $r$ ensures the concentration inequality. This concludes the lemma.
\end{proof}

\lemmaconcineqpsi*

\begin{proof}
Let $H$ be a Hamiltonian on a Hilbert space of dimension $D$ and $\ket{\psi}$ an arbitrary fixed input state in the Hilbert space. Suppose the $l$-th order Trotter error is bounded by 
\begin{equation}
        \opnorm{e^{iHt} - S_l(t/r)^r}_{fix,p} \leq p\lambda(p, r)
    \end{equation}
for all $p\geq 2$.
Observe that by Markov's inequality
\begin{align}
    \Prob\left(\norm{(e^{iHt} - S_l(t/r)^r)\ket{\psi}}_{l^2}\geq \epsilon\right) &\leq \frac{\mathbb{E}(\norm{(e^{iHt} - S_l(t/r)^r)\ket{\psi}}_{l^2}^p)}{\epsilon^p} \\
    &\leq \sup_{\norm{\ket{\phi}}_{l^2} = 1}\left\{\frac{\mathbb{E}(\norm{(e^{iHt} - S_l(t/r)^r)\ket{\phi}}_{l^2}^p)}{\epsilon^p} \right\}.
\end{align}
Hence,
\begin{equation}
\label{eq:markovpsi}
    \Prob\left(\norm{(e^{iHt} - S_l(t/r)^r)\ket{\psi}}_{l^2}\geq \epsilon\right) \leq\frac{\opnorm{e^{iHt} - S_l(t/r)^r}_{fix,p}^p}{\epsilon^p} \leq \left(p\frac{\lambda(p,r)}{\epsilon}\right)^p.
\end{equation}
We want to achieve $\left(p\frac{\lambda(p,r)}{\epsilon}\right)^p<\delta$. Let 
    $p_1 = \log(e^2/\delta)$.
Since equation \eqref{eq:markovpsi} holds for all $p\geq 2$ and $p_1\geq 2$,
it must also hold for $p_1$. The rest of the proof is analogous to the previous one by replacing $p_D$ with $p_1$. This concludes the lemma.
\end{proof}

\section{Anti-commutation lemma for Majorana fermions}\label{apx:Majoranas}

\lemmaanticom*

\begin{proof}
Let $\alpha = \gamma(i)$ and $\beta=\gamma(j)$ two different hyperedges. Suppose $H_{\alpha} = J_{\alpha}\chi_{\alpha_1}\cdots \chi_{\alpha_k}$ and $H_{\beta} = J_{\beta}\chi_{\beta_1}\cdots \chi_{\beta_k}$, where $\alpha_i\neq \alpha_j$ and $\beta_i\neq\beta_j$ for $i\neq j.$ 
Given the anti-commutation relation
\begin{equation}
    \{\chi_i,\chi_j\} = 2\delta_{ij}
\end{equation}
we are going to show
\begin{equation}
    H_{\alpha}H_{\beta} = (-1)^{k^2-m^2}H_{\beta}H_{\alpha}
\end{equation}
where $m = |\alpha\cap \beta|$.

We write the product
\begin{equation}
    H_{\alpha}H_{\beta} = J_{\alpha}J_{\beta}\chi_{\alpha_1}\cdots \chi_{\alpha_k}\chi_{\beta_1}\cdots \chi_{\beta_k} =: J_{\alpha}J_{\beta}(\alpha_1,\dotsc,\alpha_k,\beta_1,\dotsc,\beta_k)
\end{equation}
as an ordered set. Now, suppose $|\alpha\cap \beta| = 0.$ This means that $H_{\alpha}$ and $H_{\beta}$ contain completely different fermions. Therefore, by the anti-commutation relation:
\begin{align}
    H_{\alpha}H_{\beta} &= J_{\alpha}J_{\beta}(\alpha_1,\dotsc,\alpha_k,\beta_1,\dotsc,\beta_k)\\
    &= (-1)^k \cdot J_{\alpha}J_{\beta}(\beta_1,\alpha_1,\dotsc,\alpha_k,\beta_2,\dotsc,\beta_k)\\
    &= (-1)^{2k} \cdot J_{\alpha}J_{\beta}(\beta_1,\beta_2, \alpha_1,\dotsc,\alpha_k,\beta_3,\dotsc,\beta_k)\\
    &= (-1)^{k^2} \cdot J_{\alpha}J_{\beta}(\beta_1,\dotsc,\beta_k,\alpha_1,\dotsc,\alpha_k)\\
    &= (-1)^{k^2} H_{\beta}H_{\alpha}.
\end{align}

Now, suppose $|\alpha\cap \beta| = m >0.$ This means that $H_{\alpha}$ and $H_{\beta}$ share common fermions $\{\gamma_1,\dotsc,\gamma_m\}$. Define
\begin{align}
    \tilde{H}_{\alpha} &= J_{\alpha}(\alpha_1,\dotsc, \alpha_{k-m}, \gamma_1,\dotsc,\gamma_{m}), \\
    \tilde{H}_{\beta} &= J_{\beta}(\beta_1,\dotsc, \beta_{k-m}, \gamma_1,\dotsc,\gamma_{m}).
\end{align}
Note that all permutations $\pi$ of $(\alpha_1,\dotsc, \alpha_{k-m}, \gamma_1,\dotsc,\gamma_{m})$ only yield a parity factor $\sigma(\pi)$ at the front. In other words,
\begin{equation}
    \pi (\alpha_1,\dotsc, \alpha_{k-m}, \gamma_1,\dotsc,\gamma_{m}) = \sigma(\pi)\cdot (\alpha_1,\dotsc, \alpha_{k-m}, \gamma_1,\dotsc,\gamma_{m}).
\end{equation}
Since $H_{\alpha}$ and $H_{\beta}$ can be written as permutations of $\tilde{H}_{\alpha}$ and $\tilde{H}_{\beta}$ respectively, we have 
\begin{equation}
    H_{\alpha} = \sigma(\pi_1)\tilde{H}_{\alpha},\qquad\text{and}\qquad H_{\beta} = \sigma(\pi_2)\tilde{H}_{\beta}
\end{equation}
where $\pi_1$ and $\pi_2$ are the corresponding permutations that send $H_{\alpha}$ and $H_{\beta}$ to $\tilde{H}_{\alpha}$ and $\tilde{H}_{\beta}$, respectively. Hence,
\begin{align*}
    H_{\alpha}H_{\beta} &= \sigma(\pi_1)\tilde{H}_{\alpha} \cdot \sigma(\pi_2)\tilde{H}_{\beta}\\
    &= \sigma(\pi_1)\sigma(\pi_2)\cdot J_{\alpha}J_{\beta}(\alpha_1,\dotsc,\alpha_{k-m},\gamma_1,\dotsc,\gamma_m,\beta_1,\dotsc,\beta_{k-m}, \gamma_1,\dotsc,\gamma_m)\\
    &=\sigma(\pi_1)\sigma(\pi_2)\cdot (-1)^{(k-m)k}\cdot J_{\alpha}J_{\beta}(\beta_1,\dotsc,\beta_{k-m}, \alpha_1,\dotsc,\alpha_{k-m},\gamma_1,\dotsc,\gamma_m, \gamma_1,\dotsc,\gamma_m)\\
    &= \sigma(\pi_1)\sigma(\pi_2)\cdot (-1)^{(k-m)k + m(k-m)}\cdot J_{\alpha}J_{\beta}(\beta_1,\dotsc,\beta_{k-m}, \gamma_1,\dotsc,\gamma_m,\alpha_1,\dotsc,\alpha_{k-m},\gamma_1,\dotsc,\gamma_m)\\
    &= \sigma(\pi_1)\sigma(\pi_2)\cdot (-1)^{k^2-m^2}\cdot \tilde{H}_{\beta}\tilde{H}_{\alpha}\\
    &= (-1)^{k^2-m^2}H_{\beta}H_{\alpha}\\
    &=(-1)^{k+m}H_{\beta}H_{\alpha}.
    \qedhere
\end{align*}
\end{proof}

\section{Random matrix polynomials with Gaussian coefficients}\label{sec:RMPGC}

In the main text we used \ref{thm:matpol}, which comes from Chen \& Brand\~{a}o \cite[Theorem VII.3]{Anthony} (with a slight variation in the prefactor). For completeness, we present a proof of this result in this appendix (which takes a slightly different track than \cite{Anthony}).

For context, consider a random matrix polynomial of degree $g$
    \begin{equation}
        F(X_m, \dotsc, X_1) = \sum_{\bm{i}} T_{\bm{i}} X_{i_g} \cdots X_{i_1}
    \end{equation}
    where each term $X_i$ can be written as
    \begin{equation}
        X_i = g_i K_i
    \end{equation}
with $g_i$ i.i.d.\@ standard Gaussian variables and $K_i$ a deterministic matrix bounded by $\norm{K_i}\leq \sigma_i$. 
By Rademacher expansion and resummation by a subset $S$ of the index set $\mathcal{I}:= [m]\times [N]$, we can write $F$ as 
\begin{equation}
\begin{aligned}
    F =\sum_{x + y = g} \sum_{s_1,\dotsc, s_x \in S}\sum_{s'_1,\dotsc,s'_y \in S^c} \sum_{\pi \in \mathcal{S}_g\big/\mathcal{S}_x\times \mathcal{S}_y}T_{\pi\left[\bm{\pr}_{1}  (s_1,\dotsc, s_x, s'_1,\dotsc, s'_y)\right]}\pi\left[Y_{s_1}\cdots Y_{s_x} Y_{s'_1}\dotsc Y_{s'_y}\right].
\end{aligned}
\end{equation}
which prepares us to act the smoothing operator on it. For the notation here, we refer the reader to Section \ref{subsec:randommatpol}. The rest of this Appendix constitutes the proof of the following theorem.
\thmRPG* 

\begin{proof}

Applying $\mathbb{E}_{S^c}$ to $F$ gives
\begin{equation}
\label{eq: EScF1}
    \begin{aligned}
        \mathbb{E}_{S^c}[F] = \sum_{x + y = g} \sum_{s_1,\dotsc, s_x \in S}\sum_{s'_1,\dotsc,s'_y \in S^c }\sum_{\pi \in \mathcal{S}_g\big/\mathcal{S}_x\times \mathcal{S}_{y}}T_{\pi\left[\bm{\pr}_{1}  (s_1,\dotsc, s_x, s'_1,\dotsc, s'_{y})\right]} \pi\left[Y_{s_1}\cdots Y_{s_x} \mathbb{E}(Y_{s'_1}\cdots Y_{s'_{y}})\right].
    \end{aligned}
\end{equation}
We now analyze the sum over $s'_1,\dotsc,s'_y$ more carefully. Observe that
\begin{equation}
    \mathbb{E}(Y_{s'_1}\cdots Y_{s'_y}) \neq 0
\end{equation}
if and only if $y$ is even, since the product is nonzero in expectation only when it contains no odd moments of $Y_{s'_i}$. Thus, we scale $y$ to $2y$ and update the summation constraints accordingly:
\begin{equation}
    \begin{aligned}
        \mathbb{E}_{S^c}[F] = \sum_{x + 2y = g} \sum_{s_1,\dotsc, s_x \in S}\sum_{s'_1,\dotsc,s'_y \in S^c }\sum_{\pi \in \mathcal{S}_g\big/\mathcal{S}_x\times \mathcal{S}_{2y}}T_{\pi\left[\bm{\pr}_{1}  (s_1,\dotsc, s_x, s'_1,\dotsc, s'_{2y})\right]} \pi\left[Y_{s_1}\cdots Y_{s_x} \mathbb{E}(Y_{s'_1}\cdots Y_{s'_{2y}})\right].
    \end{aligned}
\end{equation}
Since each Rademacher variable contributes only through exponents $0$, $1$, or $2$, any non-vanishing $\mathbb{E}(Y_{s'_1}\cdots Y_{s'_{2y}})$ must consist of a product of second moments of Rademachers. Therefore, the subscripts $\{s'_1,\dotsc, s'_{2y}\}$ contain exactly $y$ distinct elements $\{s''_1,\dotsc, s''_y\}$, and the expectation takes the form
\begin{equation}
     \mathbb{E}(Y_{s'_1}\cdots Y_{s'_{2y}}) = \pi'\left[\mathbb{E}(Y_{s''_1}^2)\cdots \mathbb{E}(Y_{s''_{y}}^2)\right],
\end{equation}
where $\pi'\in \mathcal{S}_{2y}$ reorders the product. This reordering is necessary due to the non-commutativity of the Rademacher variables. Hence, the corresponding sum over expectations in equation \eqref{eq: EScF1} can be expressed as
\begin{equation}
    \sum_{s'_1,\dotsc, s'_{2y}\in S^c} \mathbb{E}(Y_{s'_1}\cdots Y_{s'_y})  =\sum_{\substack{s''_1,\dotsc, s''_{y}\in S^c \\ \mathrm{distinct}}}\sum_{\pi' \in \mathcal{S}_{2y} \big/ \mathcal{S}_2^{y}\rtimes \mathcal{S}_y} \pi'\left[\mathbb{E}(Y_{s''_1}^2)\cdots \mathbb{E}(Y_{s''_{y}}^2)\right].
\end{equation}
We do not sum over the entire group $\mathcal{S}_{2y}$ for $\pi'$ to avoid overcounting. Specifically, two classes of permutations must be excluded: 
(i) internal flips that leave the argument invariant, of the form $(1\, 2)^{\zeta_1}(3\, 4)^{\zeta_2}\cdots$ with $\zeta_i \in\{0,1\}$, corresponding to the group
\begin{equation}
    \mathcal{S}^y_2:=  \underbrace{\mathcal{S}_2\times\mathcal{S}_2\times \cdots \times \mathcal{S}_2}_{\text{$y$ times}},
\end{equation}
and 
(ii) pair permutations, i.e. permutations that exchange entire pairs, corresponding to $\mathcal{S}_y$. 
The composition of these two symmetry types (first the internal flips, then the pair permutations) forms the group
\begin{equation}
    \mathcal{S}^y_2 \rtimes \mathcal{S}_y.
\end{equation}
The semi-direct product ensures the correct group structure.\footnote{Normally, such semi-direct product does not give rise to a group structure, but in this case the symmetries of $\mathcal{S}^y_2$ and $\mathcal{S}_y$ have trivial intersection. Therefore, the semi-direct product does yield a group structure. To visualize this, consider a vector of $2y$ entries divided into $y$ columns where each column contains a pair: 
\begin{equation}
    (s''_1, s''_2 \vert s''_3, s''_4 \vert \cdots \vert s''_{2y-1}, s''_{2y}).
\end{equation}
The symmetries of $\mathcal{S}^y_2$ act on this vector by flipping the pairs inside each column. For instance, flipping the pair in the second column gives 
\begin{equation}
    (s''_1, s''_2 \vert s''_4, s''_3 \vert \cdots \vert s''_{2y-1}, s''_{2y}).
\end{equation}
The symmetries of $\mathcal{S}_y$ permute the columns among each other while leaving the pairs inside the columns invariant. For instance, permuting the first column and the last column gives us
\begin{equation}
    (s''_{2y-1}, s''_{2y}  \vert s''_3, s''_4 \vert \cdots \vert s''_1, s''_2).
\end{equation}
Then it is clear that the symmetries of $\mathcal{S}^y_2$ and $\mathcal{S}_y$ have trivial intersection.
} This explains the notation $\pi' \in \mathcal{S}_{2y} \big/ \mathcal{S}_2^{y}\rtimes \mathcal{S}_y$ in the summation. Plugging this back into the sum in equation \eqref{eq: EScF1} gives
\begin{equation}
    \begin{aligned}
        \mathbb{E}_{S^c}[F] = \sum_{x + 2y = g} \sum_{s_1,\dotsc, s_x \in S}&\sum_{\pi \in \mathcal{S}_g\big/\mathcal{S}_x\times \mathcal{S}_{2y}}\sum_{\substack{s''_1,\dotsc, s''_{y}\in S^c \\ \mathrm{distinct}}}\sum_{\pi' \in \mathcal{S}_{2y} \big/ \mathcal{S}_2^{y}\rtimes \mathcal{S}_y} \\
        &T_{\pi\left[\bm{\pr}_{1}  (s_1,\dotsc, s_x, \pi'[ s'_1,s'_1,\dotsc,s'_y, s'_y])\right]} \pi\left[Y_{s_1}\cdots Y_{s_x}\pi'\left[\mathbb{E}(Y_{s''_1}^2)\cdots \mathbb{E}(Y_{s''_{y}}^2)\right]\right].
    \end{aligned}
\end{equation}
\\

Applying the central part $(1-\mathbb{E})_S$ to $\mathbb{E}_{S^c}[F]$ and using the triangle inequality yields
\begin{equation}
\label{eq: FSp1}
    \begin{aligned}
        \opnorm{[F]_S}_{*,p} \leq &\sum_{\pi \in \mathcal{S}_g\big/\mathcal{S}_{2y}}\sum_{\substack{s''_1,\dotsc, s''_{y}\in S^c \\ \mathrm{distinct}}}\sum_{\pi' \in \mathcal{S}_{2y} \big/ \mathcal{S}_2^{y}\rtimes \mathcal{S}_y} \\ &\left|T_{\pi\left[\bm{\pr}_{1}  (s_1,\dotsc, s_{|S|},\pi'[ s''_1,s''_1,\dotsc,s''_y, s''_y])\right]}\right|\opnorm[\Bigg]{\pi\left[Y_{s_1}\cdots Y_{s_{|S|}} \pi'\left[\mathbb{E}(Y_{s''_1}^2)\cdots \mathbb{E}(Y_{s''_{y}}^2)\right]\right]}_{*,p}.
    \end{aligned}
\end{equation}
Here $y$ is fixed by $y = (g - |S|) / 2$. This follows from the fact that 
\begin{equation}
    (1-\mathbb{E})_S\left[Y_{s_1}\cdots Y_{s_{x}}\right] \neq 0
\end{equation}
if and only if all subscripts in $S$ appear in the product $Y_{s_1}\cdots Y_{s_{x}}$ and each term has an odd exponent. Since each contributing Rademacher variable can only have exponent $0$, $1$, or $2$, all must have exponent $1$, fixing $x=|S|$ and thus $y=(g-|S|)/2$. We also extend the summation over $\pi$ to the entire group $\mathcal{S}_g$\footnote{We are allowed to do this here, because all the summands after triangle inequality in equation \eqref{eq: FSp1} are positive.}, quotienting out permutations that reorder $Y_{s''_1},\dotsc,Y_{s''_y}$ among themselves. This extension means that the part of $\mathcal{S}_g$ we sum over now accounts for permutations that reorder $Y_{s_1},\dotsc, Y_{s_{|S|}}$ among themselves, allowing us to drop the summation over $s_1,\dotsc, s_{|S|}.$

By construction, each Rademacher variable can be written as
\begin{equation}
    Y_{s} = \frac{\epsilon_s}{\sqrt{N}} K_{\pr_1s}
\end{equation}
and recall that each $K_{\pr_1s}$ is bounded by $\opnorm{K_{\pr[m]s}}_{*,p} \leq \sigma_{\pr[m]s} \cdot \opnorm{I}_{*,p}$. 
Plugging this back in equation \eqref{eq: FSp1} gives us
\begin{equation}
    \begin{aligned}
        \opnorm{[F]_S}_{*,p} \leq \sum_{\pi \in \mathcal{S}_g\big/\mathcal{S}_{2y}}\sum_{\substack{s''_1,\dotsc, s''_{y}\in S^c \\ \mathrm{distinct}}}&\sum_{\pi' \in \mathcal{S}_{2y} \big/ \mathcal{S}_2^{y}\rtimes \mathcal{S}_y} \left|T_{\pi\left[\bm{\pr}_{1}  (s_1,\dotsc, s_{|S|},\pi'[ s''_1,s''_1,\dotsc,s''_y, s''_y])\right]}\right|\\
        &\frac{1}{\sqrt{N}^{|S|}}\frac{1}{N^{\frac{g-|S|}{2}}}
        \left[\sigma_{\pr_1s_1}\cdots \sigma_{\pr_1s_{|S|}} \sigma_{\pr_1s''_1}^2\cdots \sigma_{\pr_1s''_y}^2\right]\opnorm{I}_{*,p}
    \end{aligned}
\end{equation}
We can now simplify the right-hand side by making the summand independent of the summation over $\pi'$. To do so, we first extend the sum over $\pi$ to the full group $\mathcal{S}_g$. Since we are summing positive terms, the upper bound remains valid. Notice that the sum over the entire $\mathcal{S}_g$ commutes with the other two summations. Hence, we can write equation \eqref{eq: FSp1} as
\begin{equation}
\label{eq: FSp2}
    \begin{aligned}
        \opnorm{[F]_S}_{*,p} \leq \sum_{\substack{s''_1,\dotsc, s''_{y}\in S^c \\ \mathrm{distinct}}}&\sum_{\pi' \in \mathcal{S}_{2y} \big/ \mathcal{S}_2^{y}\rtimes \mathcal{S}_y}\sum_{\pi \in \mathcal{S}_g}\left|T_{\pi\left[\bm{\pr}_{1}  (s_1,\dotsc, s_{|S|},\pi'[ s''_1,s''_1,\dotsc,s''_y, s''_y])\right]}\right|\\
        &\frac{1}{\sqrt{N}^{|S|}}\frac{1}{N^{\frac{g-|S|}{2}}}
        \left[\sigma_{\pr_1s_1}\cdots \sigma_{\pr_1s_{|S|}} \sigma_{\pr_1s''_1}^2\cdots \sigma_{\pr_1s''_y}^2\right]\opnorm{I}_{*,p}.
    \end{aligned}
\end{equation}
Observe that the sum over $\pi\in \mathcal{S}_g$ has the following property: for any permutation $\tau \in \mathcal{S}_g$, 
\begin{equation}
    \sum_{\pi \in \mathcal{S}_g}\left|T_{\pi\left[\bm{\pr}_{1}  (\tau[s_1,\dotsc, s_{|S|},s''_1,s''_1,\dotsc,s''_y, s''_y])\right]}\right| = \sum_{\pi \in \mathcal{S}_g}\left|T_{\pi\left[\bm{\pr}_{1}  (s_1,\dotsc, s_{|S|}, s''_1,s''_1,\dotsc,s''_y, s''_y)\right]}\right|,
\end{equation}
because the composition of $\tau$ and $\pi$ yields another permutation in $\mathcal{S}_g$. We can view each $\pi' \in \mathcal{S}_{2y} \big/ \mathcal{S}_2^{y}\rtimes \mathcal{S}_y$ as a permutation in $\mathcal{S}_g$ that leaves the first $|S|$ elements invariant and acts only on the last $2y$ elements. Hence, for every $\pi'\in \mathcal{S}_{2y} \big/ \mathcal{S}_2^{y}\rtimes \mathcal{S}_y,$ we have
\begin{equation}
     \sum_{\pi \in \mathcal{S}_g}\left|T_{\pi\left[\bm{\pr}_{1}  (s_1,\dotsc, s_{|S|},\pi'[ s''_1,s''_1,\dotsc,s''_y, s''_y])\right]}\right| =  \sum_{\pi \in \mathcal{S}_g}\left|T_{\pi\left[\bm{\pr}_{1}  (s_1,\dotsc, s_{|S|}, s''_1,s''_1,\dotsc,s''_y, s''_y)\right]}\right|.
\end{equation}
Plugging this back into the sum in equation \eqref{eq: FSp2}, we get 
\begin{equation}
\label{eq: FSp3}
    \begin{aligned}
        \opnorm{[F]_S}_{*,p} \leq \sum_{\substack{s''_1,\dotsc, s''_{y}\in S^c \\ \mathrm{distinct}}}&\sum_{\pi' \in \mathcal{S}_{2y} \big/ \mathcal{S}_2^{y}\rtimes \mathcal{S}_y}\sum_{\pi \in \mathcal{S}_g}\left|T_{\pi\left[\bm{\pr}_{1}  (s_1,\dotsc, s_{|S|}, s''_1,s''_1,\dotsc,s''_y, s''_y)\right]}\right|\\
        &\frac{1}{\sqrt{N}^{|S|}}\frac{1}{N^{\frac{g-|S|}{2}}}
        \left[\sigma_{\pr_1s_1}\cdots \sigma_{\pr_1s_{|S|}} \sigma_{\pr_1s''_1}^2\cdots \sigma_{\pr_1s''_y}^2\right]\opnorm{I}_{*,p}
    \end{aligned}
\end{equation}
and find that the summand is completely independent of the summation over $\pi'$. Noting that 
\begin{equation}
    \sum_{\pi' \in \mathcal{S}_{2y} \big/ \mathcal{S}_2^{y}\rtimes \mathcal{S}_y}1= \frac{(2y)!}{2^y\cdot y!} = (2y - 1)!! \leq g^{g/2}.
\end{equation}
Plugging this back into equation \eqref{eq: FSp3} gives us
\begin{equation}
    \begin{aligned}
        \opnorm{[F]_S}_{*,p} \leq g^{g/2}\sum_{\substack{s''_1,\dotsc, s''_{y}\in S^c \\ \mathrm{distinct}}}&\sum_{\pi \in \mathcal{S}_g}\left|T_{\pi\left[\bm{\pr}_{1}  (s_1,\dotsc, s_{|S|}, s''_1,s''_1,\dotsc,s''_y, s''_y)\right]}\right|\\
        &\frac{1}{\sqrt{N}^{|S|}}\frac{1}{N^{\frac{g-|S|}{2}}}
        \left[\sigma_{\pr_1s_1}\cdots \sigma_{\pr_1s_{|S|}} \sigma_{\pr_1s''_1}^2\cdots \sigma_{\pr_1s''_y}^2\right]\opnorm{I}_{*,p}.
    \end{aligned}
\end{equation}
\\

To sum over all $\opnorm{[F]_S}_{*,p}$ in Theorem \ref{thm:matpol}, we parametrize every $S$ by 
\begin{equation}
    S = \bigcup_{i=1}^m \{(i, j^{(i)}_1),\dotsc, (i, j^{(i)}_{w_i})\}
\end{equation}
with a vector $\bm{w}\in \{0,\dotsc,g\}^m$ with $|\bm{w}| = |S|$, and integers $j^{(i)}_l \in [N]$. When $w_i = 0$, the corresponding set $\{(i, j^{(i)}_1),\dotsc, (i, j^{(i)}_{w_i})\}$ is empty. One can view $S$ as a union of vertical slices in the two-dimensional grid $[m]\times [N]$:
\begin{equation}
    S = \bigcup_{i=1}^m S_i,
\end{equation}
where each slice is determined by the entries of the vector $\bm{j}^{(i)} := (j^{(i)}_1,\dotsc, j^{(i)}_{w_i})$ via 
\begin{equation}
    S_i := \{(i, j^{(i)}_1),\dotsc, (i, j^{(i)}_{w_i})\}.
\end{equation}
The size of each slice $S_i$ is given by $w_i$, and $S_i$ is empty when $w_i=0$. We also define the slice complement
\begin{equation}
    S^c_i := [N]\backslash S_i,
\end{equation}
which should not be confused with $S^c$, the complement of $S$. These constructions are illustrated in Figure~\ref{fig:S_concept}.

\begin{figure}[!t]
\centering
\begin{tikzpicture}[scale=0.6, >=stealth]

\def\m{9}
\def\N{10}
\def\i{5} 

\draw[->] (0,0) -- (\m+0.8,0) node[below right] {$[m]$};
\draw[->] (0,0) -- (0,\N+0.8) node[above left] {$[N]$};

\foreach \x in {1,...,\m} \draw (\x,0) -- ++(0,-0.12);
\foreach \y in {1,...,\N} \draw (0,\y) -- ++(-0.12,0);

\draw[thick] (\i,0) -- (\i,-0.25);
\node[below] at (\i,-0.3) {$i$};

\path
  (1.2,2.0)
  .. controls (2.5,4.8) and (3.5,6.0) .. (4.2,7.5)
  .. controls (5.2,9.5) and (7.5,9.0) .. (8.0,7.2)
  .. controls (8.6,5.2) and (7.5,4.3) .. (6.7,3.8)
  .. controls (5.9,3.3) and (5.4,2.4) .. (4.8,2.1)
  .. controls (3.7,1.5) and (2.2,1.4) .. (1.2,2.0)
  -- cycle;

\fill[gray!15] (1.2,2.0)
  .. controls (2.5,4.8) and (3.5,6.0) .. (4.2,7.5)
  .. controls (5.2,9.5) and (7.5,9.0) .. (8.0,7.2)
  .. controls (8.6,5.2) and (7.5,4.3) .. (6.7,3.8)
  .. controls (5.9,3.3) and (5.4,2.4) .. (4.8,2.1)
  .. controls (3.7,1.5) and (2.2,1.4) .. (1.2,2.0)
  -- cycle;

\draw[thick] (1.2,2.0)
  .. controls (2.5,4.8) and (3.5,6.0) .. (4.2,7.5)
  .. controls (5.2,9.5) and (7.5,9.0) .. (8.0,7.2)
  .. controls (8.6,5.2) and (7.5,4.3) .. (6.7,3.8)
  .. controls (5.9,3.3) and (5.4,2.4) .. (4.8,2.1)
  .. controls (3.7,1.5) and (2.2,1.4) .. (1.2,2.0)
  -- cycle;

\node[font=\large] at (4.5,5.5) {$S$};

\draw[dashed, line width=0.8pt] (\i,0) -- (\i,\N); 

\begin{scope}
  \clip (1.2,2.0)
    .. controls (2.5,4.8) and (3.5,6.0) .. (4.2,7.5)
    .. controls (5.2,9.5) and (7.5,9.0) .. (8.0,7.2)
    .. controls (8.6,5.2) and (7.5,4.3) .. (6.7,3.8)
    .. controls (5.9,3.3) and (5.4,2.4) .. (4.8,2.1)
    .. controls (3.7,1.5) and (2.2,1.4) .. (1.2,2.0)
    -- cycle;
  \draw[line width=2.2pt] (\i,0) -- (\i,\N);
\end{scope}

\begin{scope}[shift={(6.2,-1.5)}]
  \draw[line width=2.2pt] (0,0) -- (0.8,0);
  \node[right] at (1.1,0) {$S_i$};

  \draw[dashed, line width=0.8pt] (0,-0.8) -- (0.8,-0.8);
  \node[right] at (1.1,-0.8) {$S_i^c = [N]\setminus S_i$};
\end{scope}

\end{tikzpicture}
\caption{Depiction of the set $S$ on the two-dimensional plane $[m]\times [N]$, the slice $S_i$, and slice complement $S^c_i = [N]\backslash S_i$. The same concepts also work for disconnected $S$.}
\label{fig:S_concept}
\end{figure}

These concepts allow us to parametrize the summation over $S$ by
\begin{equation}
    \begin{aligned}
        \sum_{S} \mapsto \sum_{\bm{w}}\sum_{\bm{j}^{(1)},\dotsc, \bm{j}^{(m)}}:= \sum_{\substack{j^{(1)}_1,\dotsc, j^{(1)}_{w_1} = 1 \\ \mathrm{distinct}}}^N \cdots \sum_{\substack{j^{(m)}_1,\dotsc, j^{(m)}_{w_m}=1\\ \mathrm{distinct}}}^N.
    \end{aligned}
\end{equation}
When $w_i= 0$ for a certain $1\leq i\leq m$, the sum $\sum_{\bm{j}^{(i)}} = \sum_{j^{(i)}_1,\dotsc, j^{(i)}_{w_i}=1}^N$ does not appear. Similarly, we can rewrite the summation $\sum_{\substack{s'_1,\dotsc,s'_y \in S^c \\ \mathrm{distinct}}}$ as
\begin{equation}
    \begin{aligned}
        \sum_{\substack{s'_1,\dotsc,s'_y \in S^c \\ \mathrm{distinct}}}\mapsto \sum_{\bm{v}}\sum_{\bm{k}^{(1)},\dotsc, \bm{k}^{(m)}} := \sum_{\substack{k^{(1)}_1,\dotsc, k^{(1)}_{v_1} \in S^c_1 \\ \mathrm{distinct}}} \cdots \sum_{\substack{k^{(m)}_1,\dotsc, k^{(m)}_{v_m}\in S^c_m \\ \mathrm{distinct}}}
    \end{aligned}
\end{equation}
such that $\bm{v}$ is a vector in $\{0,\dotsc, g\}^m$ such that $|\bm{w}|+2|\bm{v}| = g$. The slice complements $S^c_i$ here are understood as $[N]\big\backslash \{(i,j^{(i)}_1),\dotsc, (i, j^{(i)}_{w_i})\}$, which is fixed by the summation over $\bm{w}$ and $\bm{j}^{(i)}$. When $v_i=0$, the sum $\sum_{\bm{k}^{(i)}} = \sum_{k^{(i)}_1,\dotsc, k^{(i)}_{v_i}\in S^c_i}$ does not appear. In this way, we can still pick $y$ distinct elements from $S^c$ as
\begin{equation}
    \bigcup_{i=1}^m \{(i, k^{(i)}_1),\dotsc, (i, k^{(i)}_{v_i})\}.
\end{equation}
Using these reparametrizations, we have
\begin{equation}
\label{eq: sumFSp1}
    \begin{aligned}
        \sum_{S}C_p^{|S|}\opnorm{[F]_S}_{*,p}^2 \leq  g^{g}C_p^g\sum_{\bm{w}}&\sum_{\bm{j}^{(1)},\dotsc, \bm{j}^{(m)}}\frac{1}{N^{|S|}}\frac{1}{N^{g-|S|}} \opnorm{I}_{*,p}^2 \Bigg( \sum_{\bm{v}}\sum_{\bm{k}^{(1)},\dotsc, \bm{k}^{(m)}}\sum_{\pi \in \mathcal{S}_g}\\
        &\left|T_{\pi\left[\pr_1(1,j^{(1)}_1),\dotsc, \pr_1(m,j^{(m)}_{w_m}),\pr_1(1,k^{(1)}_1),\pr_1(1,k^{(1)}_1),\dotsc, \pr_1(m,k^{(m)}_{v_m}),\pr_1(m,k^{(m)}_{v_m}) \right]}\right|\\
       &
        \left[\sigma_{\pr_1(1,j^{(1)}_1)}\cdots \sigma_{\pr_1(m,j^{(m)}_{w_m})} \sigma_{\pr_1(1,k^{(1)}_1)}^2\cdots \sigma_{\pr_1(m,k^{(m)}_{v_m}) }^2\right]\Bigg)^2
    \end{aligned}
\end{equation}
The product over $\sigma$ can be simply written as
\begin{equation}
    \sigma_{\pr_1(1,j^{(1)}_1)}\cdots \sigma_{\pr_1(m,j^{(m)}_{w_m})} \sigma_{\pr_1(1,k^{(1)}_1)}^2\cdots \sigma_{\pr_1(m,k^{(m)}_{v_m}) }^2 = \prod_{i=1}^m \sigma_i^{w_i + 2v_i}.
\end{equation}
To write the coefficients $T_{\pi[\cdots]}$ more compactly, we introduce the following unary encoding $\bm{\eta}$
\begin{equation}
    \bm{\eta}(\bm{a},\bm{b}) := (\underbrace{1,\dotsc,1}_{a_1\text{ times}}, \underbrace{2,\dotsc,2}_{a_2\text{ times}}, \dotsc, \underbrace{m,\dotsc,m}_{a_m\text{ times}},\underbrace{1,\dotsc,1}_{b_1\text{ times}}, \underbrace{2,\dotsc,2}_{b_2\text{ times}}, \dotsc, \underbrace{m,\dotsc,m}_{b_m\text{ times}}),
\end{equation}
for vectors $\bm{a},\bm{b}\in \{0,\dotsc, g\}^m.$
Using this, we can write the coefficients $T_{\pi[\cdots]}$ as
\begin{equation}
    T_{\pi\left[\pr_1(1,j^{(1)}_1),\dotsc, \pr_1(m,j^{(m)}_{w_m}),\pr_1(1,k^{(1)}_1),\pr_1(1,k^{(1)}_1),\dotsc, \pr_1(m,k^{(m)}_{v_m}),\pr_1(m,k^{(m)}_{v_m})\right]} = T_{\pi\left[\bm{\eta}(\bm{w},  2\bm{v})\right]}.
\end{equation}
Plugging this back into the right-hand side of equation \eqref{eq: sumFSp1} yields
\begin{equation}
    \begin{aligned}
        \sum_{S}C_p^{|S|}\opnorm{[F]_S}_{*,p}^2 \leq  g^{g}C_p^g\sum_{\bm{w}}&\sum_{\bm{j}^{(1)},\dotsc, \bm{j}^{(m)}}\frac{1}{N^{|S|}}\frac{1}{N^{g-|S|}} \opnorm{I}_{*,p}^2 \\
        &\Bigg( \sum_{\bm{v}}\sum_{\bm{k}^{(1)},\dotsc, \bm{k}^{(m)}}\sum_{\pi \in \mathcal{S}_g} \left|T_{\pi\left[\bm{\eta}(\bm{w},  2\bm{v})\right]}\right|
        \prod_{i=1}^m \sigma_i^{w_i + 2v_i}\Bigg)^2.
    \end{aligned}
\end{equation}
Notice that the product over $\sigma$ is independent of the summation over $\bm{k}^{(1)},\dotsc, \bm{k}^{(m)}$. Since the size of each slice complement $S^c_i$ is always less than $N$, we have
\begin{equation}
    \sum_{\bm{k}^{(1)},\dotsc, \bm{k}^{(m)}} 1 \leq \sum_{k^{(1)}_1,\dotsc, k^{(1)}_{v_1} = 1}^N\cdots \sum_{k^{(m)}_1,\dotsc, k^{(m)}_{v_m} = 1}^N 1 = N^{|\bm{v}|}.
\end{equation}
By construction, $|\bm{v}| = \frac{g - |\bm{w}|}{2}$. Hence,
\begin{equation}
\label{eq: sumFSp2}
    \begin{aligned}
        \sum_{S}C_p^{|S|}\opnorm{[F]_S}_{*,p}^2 \leq  g^{g}C_p^g\sum_{\bm{w}}&\sum_{\bm{j}^{(1)},\dotsc, \bm{j}^{(m)}}\frac{1}{N^{|\bm{w}|}}\frac{1}{N^{g-|\bm{w}|}} \opnorm{I}_{*,p}^2 \\
        &\Bigg( \sum_{\bm{v}}N^{\frac{g-|\bm{w}|}{2}}\sum_{\pi \in \mathcal{S}_g}
        \left|T_{\pi\left[\bm{\eta}(\bm{w},  2\bm{v})\right]}\right|\prod_{i=1}^m \sigma_i^{w_i + 2v_i}\Bigg)^2.
    \end{aligned}
\end{equation}
By moving the $N^{\frac{g - |\bm{w}|}{2}}$ factor out of the parentheses in equation \eqref{eq: sumFSp2}, we see that it exactly cancels the factor $\frac{1}{N^{g-|\bm{w}|}}$, leaving us with
\begin{equation}
\label{eq: sumFSp3}
    \begin{aligned}
        \sum_{S}C_p^{|S|}\opnorm{[F]_S}_{*,p}^2 \leq  g^{g}C_p^g\sum_{\bm{w}}&\sum_{\bm{j}^{(1)},\dotsc, \bm{j}^{(m)}}\frac{1}{N^{|\bm{w}|}}\opnorm{I}_{*,p}^2 \\
        &\Bigg( \sum_{\bm{v}}\sum_{\pi \in \mathcal{S}_g}\left|T_{\pi\left[\bm{\eta}(\bm{w},  2\bm{v})\right]}\right|
        \prod_{i=1}^m \sigma_i^{w_i + 2v_i}\Bigg)^2.
    \end{aligned}
\end{equation}
We can do a similar treatment to the summation over $\bm{j}^{(1)},\dotsc, \bm{j}^{(m)}$, and notice that
\begin{equation}
    \sum_{\bm{j}^{(1)},\dotsc, \bm{j}^{(m)}} 1 = N^{|\bm{w}|},
\end{equation}
which cancels out the $\frac{1}{N^{|\bm{w}|}}$ factor in equation \eqref{eq: sumFSp3}. In the end, we obtain
\begin{equation}
    \begin{aligned}
        \sum_{S}C_p^{|S|}\opnorm{[F]_S}_{*,p}^2 \leq  g^{g}C_p^g\sum_{\bm{w}}
        \Bigg( \sum_{\bm{v}}\sum_{\pi \in \mathcal{S}_g}\left|T_{\pi\left[\bm{\eta}(\bm{w},  2\bm{v})\right]}\right|
        \prod_{i=1}^m \sigma_i^{w_i + 2v_i}\Bigg)^2\opnorm{I}_{*,p}^2.
    \end{aligned}
\end{equation}
Together with Theorem \ref{thm:matpol}, we have proved the desired statement.

\end{proof}

\section{Evaluating \texorpdfstring{$G_w$}{Gw} for general Hamiltonians}\label{apx:Gw}

\lemmaGwH*

\begin{proof}
Using H\"older's inequality with $p = 1$ and $q = \infty$,
\begin{align}
    G_w(H) \leq &\left(\sum_{\bm{w}: |\bm{w}| = w}\sum_{\pi\in \mathcal{S}_g}\sum_{ \bm{v}: |\bm{w}|+2|\bm{v}|=g}\ind({\pi[\bm{\eta}(\bm{w}, 2\bm{v})]}) \right)\\
    \cdot &\left(\max_{\bm{w}: |\bm{w}| = w} \sum_{\pi\in \mathcal{S}_g}\sum_{ \bm{v}: |\bm{w}|+2|\bm{v}|=g}\ind({\pi[\bm{\eta}(\bm{w}, 2\bm{v})]}) \right). \nonumber
\end{align}
By using the definition of $\bm{\eta}$ from equation \eqref{eq:eta}, we have 
\begin{align}
\label{eq:Gw_apx}
    G_w(H) \leq &\left( \sum_{\pi\in \mathcal{S}_g} \sum_{i_w,\dotsc,i_1 = 1}^{\Gamma}\sum_{j_v,\dotsc,j_1=1}^{\Gamma} \ind(\pi(i_w,\dotsc,i_1,j_v,j_v, \dotsc, j_1,j_1))\right)\\
    \cdot &\left(\max_{i_w,\dotsc,i_1} \sum_{\pi\in \mathcal{S}_g}\sum_{j_v,\dotsc,j_1}\ind(\pi(i_w,\dotsc,i_1,j_v,j_v, \dotsc, j_1,j_1))\right). \nonumber
\end{align}
where $v = (g-w)/2.$ Instead of summing over the vectors $\bm{w}$ and $\bm{v}$, we are now directly summing over the local terms in the Hamiltonian, where indices $i_1,\dotsc,i_w$ come from $\bm{w}$ and indices $j_1,\dotsc,j_v$ come from $\bm{v}$. Because of this, each $j_i$ occurs in pairs in the indicator $\ind$. The permutation $\pi$ is now a reordering of $(i_w,\dotsc,i_1,j_v,j_v, \dotsc, j_1,j_1).$ 

We are going to evaluate the two terms in equation \eqref{eq:Gw_apx} separately. The first term is 
\begin{equation}
\label{eq:firstterm}
   L_1 =  \sum_{\pi\in \mathcal{S}_g} \sum_{i_w,\dotsc,i_1 = 1}^{\Gamma}\sum_{j_v,\dotsc,j_1=1}^{\Gamma} \ind(\pi(i_w,\dotsc,i_1,j_v,j_v, \dotsc, j_1,j_1)),
\end{equation}
where $g=w+2v \geq 2$.

Fix a permutation $\pi\in \mathcal{S}_g.$ 
The sum
\begin{equation}
   S_1 = \sum_{i_w,\dotsc,i_1 = 1}^{\Gamma}\sum_{j_v,\dotsc,j_1=1}^{\Gamma} \ind(\pi(i_w,\dotsc,i_1,j_v,j_v, \dotsc, j_1,j_1))
\end{equation}
is effectively counting the number of ways we can make a commutator chain of length $g$ by selecting $w+v$ local terms of $H$ while duplicating $v$ selected terms. We shall first provide a concrete example of evaluating this sum before proceeding to the general argument. 

Take $g=6$, $w=2$ and $v=2$. Let $\pi(i_2,i_1,j_2,j_2,j_1,j_1) = (j_1, i_2, j_2, j_2,j_1,i_1)$. The sum $S_1$ corresponds to counting the number of commutator chains of the form
\begin{equation}
    [\notate[X]{H_{\gamma(j_1)}}{6}{\text{6th layer}},[\notate[X]{H_{\gamma(i_2)}}{5}{\text{5th layer}},[\notate[X]{H_{\gamma(j_2)}}{4}{\text{4th layer}},[\notate[X]{H_{
    \gamma(j_2)}}{3}{\text{3rd layer}}, [\notate[X]{H_{\gamma(j_1)}}{2}{\text{2nd layer}},\notate[X]{H_{\gamma(i_1)}}{1}{\text{1st layer}}]]]]]
\end{equation}
such that it is non-zero. We refer to the position of local terms $H_{\gamma(i)}$ in the commutator chain as ``layer''. In this case, from the innermost to the outermost commutator, $H_{\gamma(i_1)}$ is on the first layer, $H_{\gamma(j_1)}$ the second layer, etc.. We are now going to count how many such non-zero commutators there are. Starting from the first layer, there should always be up to $m$ choices. The local term on the second layer must anti-commute with the first one. Therefore, there should be up to $Q_{\max}(H)$ choices. The local term on the third layer must anti-commute with at least one of the local term of the previous layers, i.e. at least one of $H_{\gamma(j_1)}$ or $H_{\gamma(i_1)}$. This gives us up to $2Q_{\max}(H)$ choices. The fourth layer must be a duplicate of the third layer, leaving us with up to 1 choice. The fifth layer must also anti-commute with at least on of the local terms on the previous layers, giving us up to $3Q_{\max}(H)$ choices. Lastly, the sixth layer is a duplicate of the second layer, leaving us again with up to 1 choice. Thus, the number of such non-zero commutators are less or equal to $m \cdot Q_{\max}(H)\cdot 2Q_{\max}(H)\cdot 1\cdot 3Q_{\max}(H)\cdot 1 = 3! \cdot mQ_{\max}(H)^3.$ Observe that this bound is uniform for all permutations $\pi$. The only thing that changes is where the duplicate layers are in the commutator chain, but one could simply repeat the reasoning above to end up with the same upper bound. For instance, consider $\pi'(i_2,i_1,j_2,j_2,j_1,j_1) = (j_1,j_2,i_2,j_2,j_1,i_1)$ by swapping $i_2$ and $j_2$, the fourth layer now has up to $3Q_{\max}(H)$ choices, while the fifth and the sixth layer only have up to 1 choice being duplicate layers. Hence, under this permutation, the bound stays the same. Even in the extreme case such as $\pi''(i_2,i_1,j_2,j_2,j_1,j_1) = (j_1,i_1,j_1,i_2,j_2,j_2)$, the innermost commutator automatically vanishes and the number of non-zero commutator chains of this form is 0, which is still bounded by $3! \cdot mQ_{\max}(H).$ 

Now, consider general commutator chain of length $g$ 
\begin{equation}
    [\notate[X]{H_{\gamma(a_{g-1})}}{5}{\text{ $g$th layer}},[\notate[X]{\cdots}{4}{\cdots}, [\notate[X]{H_{\gamma(a_2)}}{3}{\text{ 3rd layer}}, [\notate[X]{H_{\gamma(a_1)}}{2}{\text{ 2nd layer}},\notate[X]{H_{\gamma(a_0)}}{1}{\text{ 1st layer}}]]]\cdots ]
\end{equation}
with $v$ duplicate layers distributed somewhere in the chain. When selecting local terms from the first layer to the last layer, we expect three constraints on each layer
\begin{enumerate}
    \item it is on the first layer, or
    \item it must anti-commute with at least one of the local terms on previous layers, or
    \item it is a duplicate of one of the previous layers.
\end{enumerate}
The first constraint leaves us with $m$ choices, and the last constraint only up to 1 choice. Since there are $w+v$ non-duplicate layers, the second constraint leaves us with up to  $(w+v-1)! Q_{\max}(H)^{w+v-1}$ choices. Thus, 

\begin{equation}
    S_1 \leq g^{w+v-1} mQ_{\max}(H)^{w+v-1}.
\end{equation}
 Because the above upper bound does not depend on the particular choice of $\pi\in \mathcal{S}_g$, we have
 \begin{equation}
     L_1 \leq g! g^{w+v-1} mQ_{\max}(H)^{w+v-1}.
 \end{equation}

Next, we move on to the second term:
\begin{equation}
    L_2 =  \max_{i_w,\dotsc,i_1} \sum_{\pi\in \mathcal{S}_g}\sum_{j_v,\dotsc,j_1}\ind(\pi(i_w,\dotsc,i_1,j_v,j_v, \dotsc, j_1,j_1))
\end{equation}
Fix any configuration $(i_w,\dotsc,i_1)$ and permutation $\pi\in \mathcal{S}_g$, the sum 
\begin{equation}
    S_2 = \sum_{j_v,\dotsc,j_1}\ind(\pi(i_w,\dotsc,i_1,j_v,j_v, \dotsc, j_1,j_1))
\end{equation}
is effectively counting the number of ways of making a commutator chain of length $g$ out of terms in $H$ such that $H_{\gamma(i_w)},\dotsc, H_{\gamma(i_1)}$ must be present in the commutator chain and the rest of the terms must be duplicates. Let 
\begin{equation}
    [H_{\gamma(b_{g-1})},[\cdots, [H_{\gamma(b_2)}, [H_{\gamma(b_1)},H_{\gamma(b_0)}]]]\cdots ] \neq 0
\end{equation}
be a commutator chain that satisfies the constraints in $S_2$. Now, when we select local terms from the first to last layer, we expect four constraints: 
\begin{enumerate}
    \item it is on the first layer, or
    \item it must anti-commute with at least one of the local terms on previous layers, or
    \item it is a duplicate of one of the previous layers, or
    \item it is fixed for being one of $H_{\gamma(i_w)},\dotsc, H_{\gamma(i_1)}$.
\end{enumerate}
The last two cases only leave us with up to 1 choice, but the first and second cases leaves us with a total of up to $(v-1)! \cdot m Q_{\max}(H)^{v-1}$ choices. Therefore,
\begin{equation}
    S_2 \leq g^{v-1}mQ_{max}(H)^{v-1}
\end{equation}
and because this upper bound is independent of the choice of $(i_w,\dotsc,i_1)$ and $\pi\in \mathcal{S}_g$, 
\begin{equation}
    L_2 \leq g!g^{v-1}mQ_{max}(H)^{v-1}.
\end{equation}

Putting $L_1$ and $L_2$ together and using $g! < g^g$ gives us
\begin{equation}
    \begin{aligned}
        G_w(H) &\leq L_1 \cdot L_2\\
        &\leq g^{3g-2} m^2 Q_{max}(H)^{g-2}.
    \end{aligned}
\end{equation}
This concludes the proof.
\end{proof}

\section{Average \texorpdfstring{$G_w$}{Gw} for sparse Hamiltonians}
\label{apx:avGw}
\lemmaavGwH*

\begin{proof}
Consider a Hamiltonian of the form 
\begin{equation}
    H(\bm{B}) = \sum_{i=1}^m B_i H_{\gamma(i)}
\end{equation}
where the $B_i$'s are independent Bernoulli variables such that $\Prob(B_i = 1) = p_B$ for all $1\leq i\leq m$. We use the bolded text $\bm{B}=(B_1,\dotsc,B_m)$ to denote the dependency of these Bernoulli variables. Let $\bm{b}\in \{0,1\}^m$ be a sample of the Bernoulli variables. When the Bernoulli variables take the values of this sample, we get a configuration of the sparse Hamiltonian where we can treat $\bm{b}$ as coefficients
\begin{equation}
    H(\bm{B} = \bm{b}) = \sum_{i=1}^m b_i H_{\gamma(i)}.
\end{equation}

Similar to the $G_w$ for general Hamiltonians, we now consider the configuration specific sum:
\begin{equation}
    G_w(H(\bm{B}=\bm{b})) = \sum_{\bm{w}:|\bm{w}|=w}\prod_{i\in\Supp(\bm{w})}b_i\left(\sum_{\pi\in \mathcal{S}_g}\sum_{\bm{v}:|\bm{w}| + 2|\bm{v}| = g}\prod_{j\in\Supp(\bm{v})\backslash \Supp(\bm{w})}b_j\ind(\pi[\bm{\eta}(\bm{w},  2\bm{v})])\right)^2
\end{equation}
which looks slightly more complicated than the original $G_w$ due to the extra coefficients $b_i$. The other summations over vectors $\bm{w},\bm{v} \in \Z^m_{\geq 0}$ work the same as in $G_w,$ where $|\bm{w}| + 2|\bm{v}| = g.$ The indicator $\ind$ and unary encoding $\bm{\eta}$ work precisely the same way as before. We are now interested in computing the average of this object over all samples of the Bernoulli variables, denoted by 
\begin{equation}
    \Langle G_w(H(\bm{B}))\Rangle = \left\langle \sum_{\bm{w}:|\bm{w}|=w}\prod_{i\in\Supp(\bm{w})}B_i\left(\sum_{\pi\in \mathcal{S}_g}\sum_{\bm{v}:|\bm{w}| + 2|\bm{v}| = g}\prod_{j\in\Supp(\bm{v})\backslash \Supp(\bm{w})}B_j\ind(\pi[\bm{\eta}(\bm{w}, 2\bm{v})])\right)^2\right\rangle.
\end{equation}
By construction, the product over $B_i$ is completely independent from $B_j$, we have\footnote{$B_i$ is independent from $B_j$ if $i\neq j$.}
\begin{equation}
    \Langle G_w(H(\bm{B}))\Rangle =  \sum_{\bm{w}:|\bm{w}|=w}\prod_{i\in\Supp(\bm{w})}\Langle B_i\Rangle\Langle S_{\bm{w}}(\bm{B})\Rangle,
\end{equation}
where 
\begin{equation}
    S_{\bm{w}}(\bm{B}) := \left(\sum_{\pi\in \mathcal{S}_g}\sum_{\bm{v}:|\bm{w}| + 2|\bm{v}| = g}\prod_{j\in\Supp(\bm{v})\backslash \Supp(\bm{w}))}B_j\ind(\pi[\bm{\eta}(\bm{w},  2\bm{v})])\right)^2. 
\end{equation}

Because the product of Bernoulli variables considers the support of $\bm{w}$, it is reasonable to reorganize the summation over $\bm{w}$ by their support sizes, i.e. 
\begin{equation}
    \sum_{\bm{w}:|\bm{w}| = w} \mapsto \sum_{s=0}^w \sum_{\substack{\bm{w}:|\bm{w}|=w;\\ |\Supp(\bm{w})| = s}}.
\end{equation}
Since $\Langle B_i\Rangle = p_B$ for all $i\in \{1,\dotsc,m\}$, and the product over the support of $\bm{w}$ only contains distinct $B_i$'s, we have 
\begin{equation}
\label{eq: GwHB1}
    \Langle G_w(H(\bm{B}))\Rangle =  \sum_{s=1}^w \sum_{\substack{\bm{w}:|\bm{w}|=w;\\ |\Supp(\bm{w})| = s}} p_B^s\Langle S_{\bm{w}}(\bm{B})\Rangle.
\end{equation}

Next, we want to evaluate $\Langle S_{\bm{w}}(\bm{B})\Rangle.$ By first expanding the squared term, we get 
\begin{equation}
    \Langle S_{\bm{w}}(\bm{B})\Rangle = \sum_{\pi,\pi'}\sum_{\bm{v},\bm{v}'}\left\langle \prod_{j,j'}B_jB_{j'}\right\rangle \ind(\pi[\bm{\eta}(\bm{w}, 2\bm{v})])\ind(\pi'[\bm{\eta}(\bm{w},2\bm{v}')])
\end{equation}
where we hide the summation constraints $\pi,\pi' \in \mathcal{S}_g$, $\bm{v},\bm{v}'\in \Z^g_{\geq 0}$ such that $|\bm{w}| + 2|\bm{v}| = |\bm{w}| + 2|\bm{v}'| = g$, for notational convenience. The product over $j$ and $j'$ might overlap, since the supports of $\bm{v}$ and $\bm{v}'$ might overlap. At the same time, the two vectors could also overlap with $\bm{w}$. Because the product over $B_j$ and $B_{j'}$ is taken over the non-overlapping support of $\bm{v}$ and $\bm{v}'$ with $\bm{w}$, we should first split these two vectors accordingly. We consider:
\begin{equation}
    \bm{v} = \bm{x} + \bm{y} \text{ and }\bm{v'} = \bm{x}' + \bm{y}'
\end{equation}
such that $\Supp(\bm{x}), \Supp(\bm{x}') \subseteq \Supp(\bm{w})$ and $\Supp(\bm{x})\cap \Supp(\bm{y}) = \Supp(\bm{x}')\cap \Supp(\bm{y}') = \emptyset$. In this way, we have\footnote{We extend the definition of the unary encoding to 
\begin{equation}
    \bm{\eta}(\bm{a},\bm{b},\bm{c}) = (\underbrace{1,\dotsc, 1}_{\text{$a_1$ times}},\dotsc,\underbrace{m,\dotsc, m}_{\text{$a_m$ times}},\underbrace{1,\dotsc, 1}_{\text{$b_1$ times}},\dotsc,\underbrace{m,\dotsc, m}_{\text{$b_m$ times}},\underbrace{1,\dotsc, 1}_{\text{$c_1$ times}},\dotsc,\underbrace{m,\dotsc, m}_{\text{$c_m$ times}}),
\end{equation} 
for vectors $\bm{a},\bm{b},\bm{c}\in \{0,\dotsc, g\}^m$.}
\begin{equation}
    \Langle S_{\bm{w}}(\bm{B})\Rangle = \sum_{\pi,\pi'}\sum_{\bm{x},\bm{x}'}\sum_{\bm{y},\bm{y}'}\left\langle \prod_{\substack{j\in \Supp(\bm{y}) \\ j'\in \Supp(\bm{y}')}}B_jB_{j'}\right\rangle \ind(\pi[\bm{\eta}(\bm{w}, 2\bm{x},  2\bm{y})])\ind(\pi'[\bm{\eta}(\bm{w}, 2\bm{x}', 2\bm{y}')]).
\end{equation}
Notice that now the indices over which the product of $B_j$ and $B_{j'}$ are taken, match exactly with the support of $\bm{y}$ and $\bm{y}'.$ Since we are taking this product over the supports, and its value only depends on the size of the supports and their overlap, we split the summations even further into
\begin{equation}
    \sum_{\bm{x},\bm{x}'}\sum_{\bm{y},\bm{y}'} \mapsto \sum_{a=0}^{w/2} \sum_{a'=0}^{w/2} \sum_{b=0}^{(g-w)/2}\sum_{b'=0}^{(g-w)/2}\sum_{c=0}^{\min(b,b')}\sum_{\bm{x},\bm{x}',\bm{y},\bm{y}':*}
\end{equation}
where the $*$-condition is given by 
\begin{equation}
    *:\begin{cases}|\bm{x}| = a \\ |\bm{x}'| = a'\\
    |\Supp(\bm{y})| = b\\
    |\Supp(\bm{y}')| = b'\\ 
    |\Supp(\bm{y})\cap \Supp(\bm{y}')| = c\\
    |\bm{w}|+ 2|\bm{x}| + 2|\bm{y}| = g\end{cases}.
\end{equation}
In this way, given any $b$, $b'$ and $c$, the average product over $B_j$ and $B_{j'}$ is always 
\begin{equation}
    \left\langle \prod_{\substack{j\in \Supp(\bm{y}) \\ j'\in \Supp(\bm{y}')}}B_jB_{j'}\right\rangle = p_B^{b+b'-c}
\end{equation}
by construction, and we get 
\begin{equation}
    \Langle S_{\bm{w}}(\bm{B})\Rangle = \sum_{\pi,\pi'} \sum_{a,a'=0}^{w/2}  \sum_{b,b'=0}^{(g-w)/2}\sum_{c=0}^{\min(b,b')}\sum_{\bm{x},\bm{x}',\bm{y},\bm{y}':*} p_B^{b+b'-c}\ind(\pi[\bm{\eta}(\bm{w}, 2\bm{x},  2\bm{y})])\ind(\pi'[\bm{\eta}(\bm{w}, 2\bm{x}', 2\bm{y}')]).
\end{equation}
However, the $*$-condition might not be achievable for some combinations of $a$, $a'$, $b$, $b'$ and $c$. In those cases, the contribution of the sum over $\bm{x}$, $\bm{x}'$, $\bm{y}$ and $\bm{y}'$ is simply zero. 

Given a vector $\bm{w}$ and achievable $a$, $a'$, $b$, $b'$, and $c$, let us focus on the following sum of indicators:
\begin{equation}
   \sum_{\bm{x},\bm{x}',\bm{y},\bm{y}':*}\ind(\pi[\bm{\eta}(\bm{w}, 2\bm{x}  ,2\bm{y})])\ind(\pi'[\bm{\eta}(\bm{w}, 2\bm{x}', 2\bm{y}')]).
\end{equation}
By expanding the unary encoding, we get 
\begin{equation}
\begin{aligned}
    &\sum_{\phi: \mathrm{surj}}\sum_{k_1,\dotsc,k_a\in \{i_w,\dotsc,i_1\}}\sum_{k'_1,\dotsc,k'_{a'}\in \{i_w,\dotsc,i_1\}}\sum_{\mathrm{distinct}\;j_1,\dotsc, j_{b + b' - c}=1}^m  \\
    &\quad \ind\left(\pi\left(i_w,\dotsc, i_1, k_a,k_a, \dotsc, k_1,k_1,j_{\phi\left(\frac{g-w}{2} - a\right)},j_{\phi\left(\frac{g-w}{2}-a\right)},\dotsc, j_{\phi(1)},j_{\phi(1)}\right)\right)\\ 
    \cdot &\quad \ind\left(\pi'\left(i_w,\dotsc,i_1,k'_{a'},k'_{a'},\dotsc, k'_1,k'_1, j_{\phi(g-w-a-a')},j_{\phi(g-w-a-a')},\dotsc, j_{\phi\left(\frac{g-w}{2}+1-a\right)},j_{\phi\left(\frac{g-w}{2}+1-a\right)}\right)\right),
\end{aligned}
\end{equation}
where the summations over $k_i$ and $k'_i$ come from $\bm{x}$ and $\bm{x}'$ whose support overlaps with the support of $\bm{w}$, and the summations over $j_i$ and $j'_i$ come from $\bm{y}$ and $\bm{y}'$ whose supports are disjoint with the support of $\bm{w}$. For the latter summations, we have introduced surjective functions $\phi: \{1,\dotsc, g-w-a-a'\} \to \{1,\dotsc, b+b'-c\}$ to control the subscripts of $j$, since we are organizing the summation over vectors $\bm{y},\bm{y'}$ by their support sizes. These functions distribute the summation indices $j_1,\dotsc, j_{b + b' - c}=1$ over the $j$-part of the two indicators. To capture the overlap between the supports, we require these surjective functions to satisfy the condition such that $\{\phi(1),\dotsc, \phi\left(\frac{g-w}{2}-a\right)\}$ contains $b$ distinct elements, while $\{\phi\left(\frac{g-w}{2} + 1-a\right),\dotsc, \phi(g-w-a-a')\}$ contains $b'$ distinct elements. In this way, since we are summing over $b+b'-c$ distinct elements, there is guaranteed to be an overlap of $c$ elements. Note that by introducing the constraints above for $\phi$, we are only summing over a subset of all possible surjective functions. As our aim is to deliver an upper bound, there is no need to count the number of $\phi$'s exactly. We can safely ignore these constraints and upper bound our sum using the set of all surjective functions from $\{1,\dotsc, g-w-a-a'\}$ to $\{1,\dotsc, b+b'-c\}$. 

By doing the same trick for the summation over $\bm{w}$ in equation \eqref{eq: GwHB1}, we can now upper bound $\Langle G_w(H(\bm{B}))\Rangle$ by
\begin{equation}
    \begin{aligned}
         &\sum_{s=0}^w \sum_{\psi: \mathrm{surj}}\sum_{\mathrm{distinct}\;i_1,\dotsc,i_s=1}^m \sum_{\pi,\pi'}\sum_{a,a'=0}^{w/2}  \sum_{b,b'=0}^{(g-w)/2}\sum_{c=0}^{\min(b,b')} p_B^{s+b+b' -c} \\
         & \sum_{\phi: \mathrm{surj}}\sum_{k_1,\dotsc,k_a\in \Supp(\bm{w})}
         \sum_{k'_1,\dotsc,k'_{a'}\in \Supp(\bm{w})}\sum_{\mathrm{distinct}\;j_1,\dotsc, j_{b + b' - c}=1}^m \\ 
         &\ind\left(\pi\left(i_{\psi(w)},\dotsc, i_{\psi(1)}, k_a,k_a, \dotsc, k_1,k_1,j_{\phi\left(\frac{g-w}{2} - a\right)},j_{\phi\left(\frac{g-w}{2}-a\right)},\dotsc, j_{\phi(1)},j_{\phi(1)}\right)\right)\\ 
    \cdot &\ind\left(\pi'\left(i_{\psi(w)},\dotsc, i_{\psi(1)},k'_{a'},k'_{a'},\dotsc, k'_1,k'_1, j_{\phi(g-w-a-a')},j_{\phi(g-w-a-a')},\dotsc, j_{\phi\left(\frac{g-w}{2}+1-a\right)},j_{\phi\left(\frac{g-w}{2}+1-a\right)}\right)\right)
    \end{aligned}
\end{equation}
where $\psi:\{1,\dotsc,w\} \to \{1,\dotsc,s\}$ are surjective functions to organize the subscripts of $i$'s based on support size of $\bm{w}$ (no need to impose extra conditions here), and the summations over $i_1,\dotsc,i_s$ and $j_1,\dotsc,j_{b+b'-c}$ should be interpreted as summation over distinct elements. Based on these summations, we can reformulate the summation over $k_1,\dotsc, k_a$ and $k'_1,\dotsc,k'_{a'}$ more nicely so that it does not explicitly depend on the $i$-indices, but as repetitions of these indices. 
\begin{equation}
    \begin{aligned}
         &\sum_{s=0}^w \sum_{\psi: \mathrm{surj}} \sum_{\pi,\pi'}\sum_{a,a'=0}^{w/2}  \sum_{b,b'=0}^{(g-w)/2}\sum_{c=0}^{\min(b,b')} p_B^{s+b+b' -c} \sum_{\phi: \mathrm{surj}}\sum_{\sigma,\sigma':\mathrm{func}}\\
         & \sum_{\mathrm{distinct}\;i_1,\dotsc,i_s=1}^m\sum_{\mathrm{distinct}\;j_1,\dotsc, j_{b + b' - c}=1}^m \\ 
         &\ind\left(\pi\left(i_{\psi(w)},\dotsc, i_{\psi(1)}, i_{\sigma(a)},i_{\sigma(a)}, \dotsc, i_{\sigma(1)},i_{\sigma(1)},j_{\phi\left(\frac{g-w}{2} - a\right)},j_{\phi\left(\frac{g-w}{2}-a\right)},\dotsc, j_{\phi(1)},j_{\phi(1)}\right)\right)\\ 
    \cdot &\ind\left(\pi'\left(i_{\psi(w)},\dotsc, i_{\psi(1)},i_{\sigma'(a')},i_{\sigma'(a')}, \dotsc, i_{\sigma'(1)},i_{\sigma'(1)}, j_{\phi(g-w-a-a')},j_{\phi(g-w-a-a')},\dotsc, j_{\phi\left(\frac{g-w}{2}+1-a\right)},j_{\phi\left(\frac{g-w}{2}+1-a\right)}\right)\right)
    \end{aligned}
\end{equation}
where we use arbitrary well-defined functions $\sigma: \{1,\dotsc,a\}\to \{1,\dotsc, s\}$ and $\sigma': \{1,\dotsc,a\}\to \{1,\dotsc, s\}$ to capture the repetition of indices $i_1,\dotsc,i_s.$ Similar to $\psi$ and $\phi$, these functions only care about assigning the summation indices as labels rather than their actual value. We have also moved the summation over the $i$-indices down to the second line, as they commute with the summation over $\phi$ and $\sigma$.

We can now apply the same argument of counting commutator chains as we did previously for general $G_w$. Let us first evaluate the summation over the second indicator by fixing $s$, $\psi$, $i$-indices $i_1,\dotsc, i_s$, $\phi$, $\sigma,$ $\sigma'$ and the $j$-indices $j_{\phi\left(\frac{g-w}{2} - a\right)},\dotsc,j_{\phi(1)}$ in the first indicator. For the second indicator, we only sum over the $j$-indices that do not overlap with the $j$-indices in the first indicator. By construction, there are $c$ $j$-indices in the second indicator that is already fixed due to the overlap. Thus, when only summing over the second indicator, we can expect the following constraints for selecting local terms on each layer of the corresponding commutator chain:
\begin{enumerate}
    \item it is the first layer, or
    \item it must anti-commute with one of the previous layers, or
    \item it is fixed for being one of $H_{\gamma(i_1)},\dotsc, H_{\gamma(i_s)}$, or
    \item it is a duplicate layer, or
    \item it is fixed for being in the overlap with the $j$-indices in the first indicator.
\end{enumerate} 
Constraints $3,4$ and $5$ leave us with only 1 choice, while the first and second constraints leave us with up to $g^{b'-c-1} m Q_{\max}(H)^{b'-c-1}.$
Then, for the first indicator, summing $i_1,\dotsc, i_s$ and the previously fixed $j$-indices $j_{\phi\left(\frac{g-w}{2} - a\right)},\dotsc,j_{\phi(1)}$ while keeping every other summation index still fixed, we expect the following constraints for selecting local terms on each layer of the corresponding commutator chain:
\begin{enumerate}
    \item it is the first layer, or
    \item it must anti-commute with one of the previous layers, or
    \item it is a duplicate layer.
\end{enumerate} 
The last constraint leaves us with only up to 1 option. But the first and second constraint leaves us with up to $g^{s + b-1} m Q_{\max}(H)^{s+b-1}$ options because we are selecting $s$ distinct elements in the $i$-indices and $b$ distinct element sin the $j$-indices. Putting everything together, we can upper bound $\Langle G_w(H(\bm{B}))\Rangle$ by 
\begin{equation}
\label{eq: GwHB2}
    \begin{aligned}
         \sum_{s=0}^w \sum_{\psi: \mathrm{surj}} \sum_{\pi,\pi'}\sum_{a,a'=0}^{w/2}  \sum_{b,b'=0}^{(g-w)/2}\sum_{c=0}^{\min(b,b')} p_B^{s+b+b' -c} \sum_{\phi: \mathrm{surj}}\sum_{\sigma,\sigma':\mathrm{func}} g^{s+b+b'-c-1}m^2 Q_{\max}(H)^{s+b+b'-c-1}
    \end{aligned}
\end{equation}
Since the summand now is independent of label-distributing functions $\psi,\phi,\sigma,\sigma'$ and permutations $\pi,\pi'$, we can contract these summations into the following factors. For the permutations, recall that $|\mathcal{S}_g| = g!\leq g^g$. Hence, $\sum_{\pi,\pi'}1 \leq g^{2g}$. For the label-distributing functions, we use the fact that, in general, the number of arbitrary well-defined functions from $[N] \to [M]$ is upper bounded by $M^N$. Hence, 
\begin{equation}
    \sum_{\psi: \mathrm{surj}} 1 \leq s^w, \quad \sum_{\phi:\mathrm{surj}} 1\leq (b+b'-c)^{g-w-a-a'},\quad \sum_{\sigma,\sigma':\mathrm{func}}1 \leq s^{a + a'}.
\end{equation}
By construction $s\leq g$ and $(b+b'-c)\leq g$. Therefore, 
\begin{equation}
    \sum_{\psi: \mathrm{surj}}\sum_{\phi:\mathrm{surj}} \sum_{\sigma,\sigma':\mathrm{func}}1\leq g^{w + g-w-a-a' + a+a'} = g^g.
\end{equation}
Thus, $\Langle G_w(H(\bm{B}))\Rangle$ is upper bounded by 
\begin{equation}
    \begin{aligned}
        g^{3g} \sum_{s=0}^w \sum_{a,a'=0}^{w/2}  \sum_{b,b'=0}^{(g-w)/2}\sum_{c=0}^{\min(b,b')} p_B^{s+b+b' -c} g^{s+b+b'-c-2}m^2 Q_{\max}(H)^{s+b+b'-c-2}
    \end{aligned}
\end{equation}
Contract $\sum_{a,a'}1 \leq w^2\leq g^2$, note that by construction $s+b+b'-c  \leq g$, and reorganize the summands in equation \eqref{eq: GwHB2}, we get the following upper bound
\begin{equation}
    \begin{aligned}
        \Langle G_w(H(\bm{B}))\Rangle \leq g^{4g} m^2 Q_{\max}(H)^{-2}\sum_{s=0}^w \sum_{b,b'=0}^{(g-w)/2}\sum_{c=0}^{\min(b,b')} (p_B Q_{\max}(H))^{s+b+b'-c}.
    \end{aligned}
\end{equation}
The bound in the theorem uses a different labeling where $q,q'$ takes over $b,b'$. This completes the proof. 

\end{proof}

\end{document}